\newcommand{\pushright}[1]{\ifmeasuring@#1\else\omit\hfill$\displaystyle#1$\fi\ignorespaces}
\newcommand{\pushleft}[1]{\ifmeasuring@#1\else\omit$\displaystyle#1$\hfill\fi\ignorespaces}
\DeclareMathOperator{\poly}{poly}
\DeclareMathOperator{\diag}{diag}
\DeclareMathOperator{\Ring}{R}
\DeclareMathOperator{\cop}{cpr}
\DeclareMathOperator{\sgn}{sgn}
\DeclareMathOperator{\prim}{prim}
\DeclareMathOperator{\adj}{adj}
\DeclareMathOperator{\SYM}{sym}
\DeclareMathOperator{\ord}{ord}
\DeclareMathOperator{\tsym}{\textsc{sym}} 
\DeclareMathOperator{\sym}{\gamma}
\newcommand{\sympk}{\tsym_{p^k}}
\newcommand{\symtk}{\tsym_{2^k}}
\newcommand{\copt}{\cop_2}
\newcommand{\copp}{\cop_p}
\DeclareMathOperator{\fspkt}{\fS^t_{p^k}}
\DeclareMathOperator{\fspksym}{\fS^{\sym}_{p^k}}
\DeclareMathOperator{\fypk}{\fY_{p^k}}
\DeclareMathOperator{\ypk}{Y_{p^k}}
\DeclareMathOperator{\ytk}{Y_{2^k}}
\DeclareMathOperator{\spk}{S_{p^k}}
\DeclareMathOperator{\spkt}{S^t_{p^k}}
\DeclareMathOperator{\fapk}{\fA_{p^k}}
\DeclareMathOperator{\fbpk}{\fB_{p^k}}
\DeclareMathOperator{\fcpk}{\fC_{p^k}}
\DeclareMathOperator{\sapk}{A_{p^k}}
\DeclareMathOperator{\sbpk}{B_{p^k}}
\DeclareMathOperator{\scpk}{C_{p^k}}
\newcommand{\zRz}[1]{\bbZ/#1\bbZ}
\newcommand{\zqz}{\zRz{q}}
\newcommand{\zpz}{\zRz{p}}
\newcommand{\ztz}{\zRz{2}}
\newcommand{\ztkz}{\zRz{2^k}}
\newcommand{\zpkz}{\zRz{p^k}}
\newcommand{\modpk}{\bmod{p^k}}
\newcommand{\modtk}{\bmod{2^k}}
\newcommand{\eqq}{\overset{q}{\sim}}
\newcommand{\eqp}{\overset{p^*}{\sim}}
\newcommand{\eqpk}{\overset{p^k}{\sim}}
\newcommand{\ordp}{\ord_p}
\newcommand{\ordt}{\ord_2}
\newcommand{\sgnp}{\sgn_p}
\newcommand{\sgnt}{\sgn_2}
\newcommand{\MA}{\mathtt{A}} \newcommand{\MB}{\mathtt{B}}
 \newcommand{\MD}{\mathtt{D}}
 \newcommand{\MI}{\mathtt{I}}
\newcommand{\MQ}{\mathtt{Q}} \newcommand{\MS}{\mathtt{S}}
 \newcommand{\MU}{\mathtt{U}}
\newcommand{\MV}{\mathtt{V}}
\newcommand{\Vb}{\mathbf{b}}
\newcommand{\Vx}{\mathbf{x}} 
\newcommand{\Vy}{\mathbf{y}}
\newcommand{\Vv}{\mathbf{v}}
\newcommand{\Vw}{\mathbf{w}}
\newcommand{\typet}{type II$^*$}
\DeclareMathOperator{\ORD}{\textsc{ord}}
\DeclareMathOperator{\SGN}{\textsc{sgn}}
\DeclareMathOperator{\gl}{GL}
\DeclareMathOperator{\SL}{SL}
\DeclareMathOperator{\gln}{GL_n}
\DeclareMathOperator{\sln}{SL_n}
\newcommand{\legendre}[2]{%
\left( \frac{#1}{#2} \right)%
}
\DeclareMathOperator{\gen}{Gen}
\newtheorem{fact}{\textsc{Fact}}
\newcommand{\Vxtilde}{\tilde{\Vx}}
\newcommand{\bbR}{\mathbb{R}}
\newcommand{\bbZ}{\mathbb{Z}}
\newcommand{\bbP}{\mathbb{P}}
\newcommand{\fA}{\mathcal{A}}\newcommand{\fB}{\mathcal{B}}
\newcommand{\fC}{\mathcal{C}}
\newcommand{\fS}{\mathcal{S}}
\newcommand{\fY}{\mathcal{Y}}
\renewcommand{\fA}{\mathcal{A}}\renewcommand{\fB}{\mathcal{B}}
\renewcommand{\fC}{\mathcal{C}}
\renewcommand{\fS}{\mathcal{S}}
\renewcommand{\fY}{\mathcal{Y}}
\DeclareSymbolFont{bbold}{U}{bbold}{m}{n}
\DeclareSymbolFontAlphabet{\mathbbold}{bbold}
\newsavebox{\theorembox} \newsavebox{\lemmabox}
\newsavebox{\corollarybox} \newsavebox{\propositionbox}
\newsavebox{\examplebox} \newsavebox{\conjecturebox}
\newsavebox{\algbox} \newsavebox{\qbox} \newsavebox{\problembox}
\newsavebox{\definitionbox} \newsavebox{\assumptionbox}
\newsavebox{\hypothesisbox} \newsavebox{\obsbox} 
\savebox{\theorembox}{\noindent\bf Theorem} 
\savebox{\lemmabox}{\noindent\bf Lemma}
\savebox{\corollarybox}{\noindent\bf Corollary}
\savebox{\propositionbox}{\noindent\bf Proposition}
\savebox{\examplebox}{\noindent\bf Example}
\savebox{\conjecturebox}{\noindent\bf Conjecture}
\savebox{\algbox}{\noindent\bf Algorithm}
\savebox{\qbox}{\noindent\bf Question}
\savebox{\definitionbox}{\noindent\bf Definition}
\savebox{\problembox}{\noindent\bf Problem}
\savebox{\assumptionbox}{\noindent\bf Assumption}
\savebox{\hypothesisbox}{\noindent\bf Hypothesis}
\savebox{\obsbox}{\noindent\bf Observation}
\newtheorem{theorem}{\usebox{\theorembox}}
\newtheorem{lemma}[theorem]{\usebox{\lemmabox}}
\newtheorem{definition}{\usebox{\definitionbox}}
\newcommand{\qed}{\;\;\;\Box} 
\newenvironment{proof}{{\bf Proof:}}{\hfill\(\qed\)\newline}
\title{Sampling a Uniform Random Solution of a Quadratic Equation Modulo $p^k$}
\author{
	Chandan Dubey\\
	\texttt{chandan.dubey@inf.ethz.ch}
	\and
	Thomas Holenstein\\
	\texttt{thomas.holenstein@inf.ethz.ch}
}
\date{Institut f\"ur Theoretische Informatik, ETH Z\"urich}
\begin{document}

\maketitle

\begin{abstract}
An $n$-ary integral quadratic form is a formal expression 
$Q(x_1,\cdots,x_n)=\sum_{1\leq i,j\leq n}a_{ij}x_ix_j$ in $n$-variables 
$x_1,\cdots,x_n$, where $a_{ij}=a_{ji} \in \bbZ$.
We present a $\poly(n,k, \log p, \log t)$ randomized algorithm
that given a quadratic form $Q(x_1,\cdots,x_n)$, a prime $p$, a 
positive integer $k$ and an integer $t$,
samples a uniform solution of $Q(x_1,\cdots,x_n)\equiv t \bmod{p^k}$.
\end{abstract}

\section{Introduction}

Let $\Ring$ be a commutative ring with unity and $\Ring^\times$ 
be the set of units (i.e., invertible elements) of $\Ring$. A 
quadratic form over the ring $\Ring$ in $n$-formal variables 
$x_1,\cdots,x_n$ in an expression 
$\sum_{1\leq i, j \leq n}a_{ij}x_ix_j$, where $a_{ij}=a_{ji} 
\in \Ring$. A quadratic form can equivalently be represented by
a symmetric matrix $\MQ^n=(a_{ij})$ such that 
$Q(x_1,\cdots,x_n)=(x_1,\cdots,x_n)'\MQ(x_1,\cdots,x_n)$. The
quadratic form is called integral if $\Ring=\bbZ$ and the 
determinant of the quadratic form $Q$ is defined as $\det(\MQ)$.

Quadratic forms are central to various branches of Mathematics, including
number theory, linear algebra, group theory, and Lie theory.
They also appear in several areas of Computer Science like Cryptography and
Lattices. Several modern factorization algorithms, including Dixon's 
algorithm \cite{Dixon81}, the continued fractions method, and the quadratic
sieve; try to solve $x^2\equiv t\bmod{n}$, where $n$ is the number being
factorized. They also arise naturally as 
the $\ell_2$ norm of lattice vectors. 

It is not surprising that the study of quadratic forms predates Gauss, 
who gave the law of quadratic reciprocity and contributed a great deal
in the study of quadratic forms, including a complete classification of
binary quadratic forms (i.e., $n=2$). Another giant leap was made by
Minkowski in his ``Geometry of Numbers'' \cite{Minkowski10}, which
proposed a geometric method to solve problems in number theory. 
Minkowski also gave explicit formulae to calculate the number of 
solutions $\Vx=(x_1,\cdots,x_n) \in (\zpkz)^n$ to the equation 
$\Vx'\MQ\Vx \equiv t \modpk$ albeit they contain errors for the 
$p=2$ case, see \cite{Watson76}.


The general problem is to solve quadratic equations of the form
$\Vx'\MQ\Vx\equiv t \bmod{q}$ in several variables, where $q, t$
are integers. 
If the factorization of $q$ is known then using
Chinese Remainder Theorem one can show that it is enough to solve
the equation when $q$ is a prime power i.e., $q=p^k$ for some prime
$p$ and positive integer $k$.

We remark that typically 
mathematicians are mainly interested in counting the number
of solutions if $k$ is ``large enough''.
One reason for this is that once $k$ is large enough, increasing $k$ by $1$
simply multiplies the number of solutions by $p^{n-1}$.
Another reason is that the corresponding normalized quantity (the local
density, which is the number of solutions divided by $p^{k(n-1)}$
for $k$ large enough) 
seems to be the ``mathematically natural quantity''.
It arises in many places, for example 
in (some forms of) the celebrated Siegel mass formula \cite{Siegel35}.

Several alternatives are available for counting; mainly aimed
towards computing the local density, which is a more general
problem
\cite{Siegel35,OMeara73,Yang98,Kitaoka99,CS99,GY00,Hanke04}. 
As an example, \cite{Siegel35}
gives an ingenious Gaussian sum technique to count solutions in case
$p$ does not divide $2t\det(\MQ)$.

The case of the prime $p=2$ is tricky and needs careful analysis. 
Pall \cite{Pall65} pointed out
that the work of Minkowski omits many details, resulting in errors 
for the case of prime~2. Later, Watson \cite{Watson76} found errors in 
the fixes suggested by Pall. 
It is believed by the community that the work by Watson
does not contain any errors.

\medskip

Coming back to our original problem of finding solutions, a few results
are known. We are aware of two relevant results on the question
of finding any solution (in contrast to sampling one, uniformly at random). The 
first \cite{AEM87,PS87} 
solves $x^2-ky^2\equiv m \pmod q$ for composite
$q$, when the factorization of $q$ is unknown. The second and more
relevant is the work done by Hartung \cite{Hartung08}. For odd $p$,
he gives a correct polynomial time algorithm to find one solution of 
$\Vx'\MQ\Vx\equiv t\bmod{p^k}$ (though it seems to be safe to say that
the possibility of this was folklore before). 
Unfortunately, his construction seems to contain
errors for the case $p=2$ (e.g., he divides by~2 in the ring $\ztkz$ while
proving Lemma~3.3.1 pp.~47-48). 

\paragraph*{Our Contribution.} Apart from the difficulty of giving
correct formulae for $p=2$, the method of
Minkowski (and others, including the Gaussian sum method) for 
counting the number of
solutions of $\Vx'\MQ\Vx\equiv t \bmod{p^k}$ has another drawback.
It is not constructive in the sense that it
does not provide a way to sample uniform solutions to the
equation. 
In this work, we give an alternate way of counting 
solutions, and thus by the above remarks, an
alternate way to compute the local density.
Our way of counting also yields a Las Vegas algorithm that, given an integral
quadratic form $\MQ$, a prime $p$, a positive integer $k$ and an integer
$t \in \zpkz$, runs in time $\poly(n,k,\log p)$ and samples a
uniform random solution of $\Vx'\MQ\Vx\equiv t\bmod{p^k}$, if a solution
exists.

\section{Preliminaries}

Integers and ring elements are denoted by lowercase letters,
vectors by bold lowercase letters and matrices by typewriter
uppercase letters. 
Sets are denoted by upper case letters $A, B, \cdots$, and their
cardinalities by $\fA, \fB, \cdots$. 
The $i$'th component of a vector $\Vv$ is
denoted by $v_i$. We use the notation $(v_1,\cdots,v_n)$ for
a column vector and the transpose of matrix $\MA$ is denoted by
$\MA'$. The matrix $\MA^n$ will denote a $n\times n$ square matrix.
The scalar product of two vectors will be denoted 
$\Vv'\Vw$ and equals $\sum_i v_iw_i$. The standard Euclidean norm of the 
vector $\Vv$ is denoted by $||\Vv||$ and equals $\sqrt{\Vv'\Vv}$.

If $\MQ_1^n, \MQ_2^m$ are matrices, then the {\em direct product}
of $\MQ_1$ and $\MQ_2$ is denoted by $\MQ_1\oplus\MQ_2$ and is
defined as $\diag(\MQ_1,\MQ_2)=\begin{pmatrix}
\MQ_1&0\\0&\MQ_2
\end{pmatrix}$.
Given two matrices $\MQ_1$ and $\MQ_2$ with the same number
of rows, $[\MQ_1,\MQ_2]$ is the matrix which is obtained
by concatenating the two matrices columnwise.
If $\MQ^n$ is a $n\times n$ integer matrix and $q$ is a positive integer
then $\MQ \bmod{q}$ is defined as the matrix with all entries of $\MQ$
reduced modulo $q$.
A matrix is called unimodular
if it is an integer $n\times n$ matrix with determinant $\pm 1$.

Let $\Ring$ be a commutative ring with unity and $\Ring^\times$ be the
set of units (i.e., invertible elements) of $\Ring$.
If $\MQ \in \Ring^{n\times n}$ is a square matrix, the 
{\em adjugate} of $\MQ$ is defined as the transpose of 
the cofactor matrix and is denoted by $\adj(\MQ)$. The matrix $\MQ$ 
is invertible if and only if $\det(\MQ)$ is a unit of $\Ring$. 
In this case, $\adj(\MQ)=\det(\MQ)\MQ^{-1}$. The set of invertible 
$n\times n$ matrices over $\Ring$ is denoted by $\gln(\Ring)$. The 
subset of matrices with determinant $1$ will be denoted by 
$\sln(\Ring)$.

\begin{fact}\label{fact:gln}
A matrix $\MU$ is in $\gln(\Ring)$ iff $\det(\MU) \in \Ring^\times$. 
\end{fact}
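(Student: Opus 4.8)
The plan is to deduce both implications from the adjugate identity recalled in the preliminaries, namely $\MU\,\adj(\MU) = \adj(\MU)\,\MU = \det(\MU)\,\Id$ for every square matrix $\MU$ over a commutative ring with unity, together with multiplicativity of the determinant. First I would record why these two facts hold over an \emph{arbitrary} commutative ring $\Ring$ and not merely over a field: each is an equality of matrices whose entries are fixed polynomials with integer coefficients in the entries of the matrices involved, and each such equality holds identically in the polynomial ring $\bbZ[t_{ij}]$ (it holds over $\bbQ$ by the classical theory, and $\bbZ[t_{ij}] \hookrightarrow \bbQ[t_{ij}]$). Applying the evaluation homomorphism $\bbZ[t_{ij}] \to \Ring$ transports the identities to $\Ring$; in particular $\det(\MA\MB) = \det(\MA)\det(\MB)$ and $\det(\Id) = 1$ over $\Ring$.

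For the ``only if'' direction, I would assume $\MU \in \gln(\Ring)$ and pick $\MV \in \Ring^{n\times n}$ with $\MU\MV = \Id$. Taking determinants and using multiplicativity, $\det(\MU)\,\det(\MV) = \det(\Id) = 1$, so $\det(\MU)$ is invertible in $\Ring$, i.e.\ $\det(\MU) \in \Ring^\times$.

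For the ``if'' direction, I would assume $u := \det(\MU) \in \Ring^\times$ with inverse $u^{-1} \in \Ring$, and set $\MV := u^{-1}\,\adj(\MU)$, which lies in $\Ring^{n\times n}$ since $\adj(\MU)$ has entries in $\Ring$. The adjugate identity then gives $\MU\MV = u^{-1}\,\bigl(\MU\,\adj(\MU)\bigr) = u^{-1}(u\,\Id) = \Id$, and symmetrically $\MV\MU = \Id$; hence $\MU$ is invertible over $\Ring$, so $\MU \in \gln(\Ring)$.

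I do not expect any genuine obstacle here; the only step that deserves explicit mention is the passage from fields to general commutative rings for the cofactor expansion and for multiplicativity of the determinant, which is exactly what the polynomial-specialization remark in the first paragraph handles.
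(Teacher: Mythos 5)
Your proof is correct, and it is exactly the argument the paper has in mind: the Fact is stated without proof, but the preliminaries recall precisely the adjugate identity $\adj(\MQ)=\det(\MQ)\MQ^{-1}$ that your two directions rest on. The polynomial-specialization remark justifying the identities over an arbitrary commutative ring is a nice touch but not a point of divergence.
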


For every prime $p$ and positive integer $k$,
we define the ring $\zpkz=\{0,\cdots,p^k-1\}$, where product
and addition is defined modulo $p^k$.
Let $a, b$ be integers. Then, 
$\ordp(a)$ is the largest integer exponent of $p$ such that $p^{\ordp(a)}$
divides $a$. We let $\ordp(0) = \infty$. The $p$-coprime part of $a$ is then 
$\copp(a)=\frac{a}{p^{\ordp(a)}}$. Note that $\copp(a)$ is,
by definition, a unit of $\zpz$. 
For a positive integer $q$,
one writes $a\equiv b \bmod{q}$, if $q$ 
divides $a-b$. By $x:=a \bmod{q}$, we mean that
$x$ is assigned the unique value $b \in \{0,\cdots,q-1\}$ such that 
$b \equiv a \bmod{q}$.
An integer $t$ is 
called a {\em quadratic residue} modulo $q$ if $\gcd(t,q)=1$ and
$x^2\equiv t \bmod{q}$ has a solution. 

\begin{definition}\label{def:Legendre} 
Let $p$ be an odd prime, and $t$ be a positive integer with
$\gcd(t,p)=1$.
Then, the Legendre-symbol of $t$ with 
respect to $p$ is defined as follows.
\begin{displaymath}
\legendre{t}{p} = \left\{ \begin{array}{ll}
1 & \textrm{if $t$ is a quadratic residue 
 modulo $p$}\\
-1 & \textrm{otherwise.}
\end{array}\right.
\end{displaymath}
The Legendre symbol can also be computed by the Euler's 
formula, given by $t^{(p-1)/2} \bmod{p}$.
\end{definition}

For the prime~2, there is an extension of Legendre symbol called the
Kronecker symbol. It is defined for odd integers $t$ and 
$\legendre{t}{2}$ equals $1$ if $t\equiv \pm 1 \bmod 8$, and $-1$
if $t \equiv \pm 3 \bmod 8$.

The $p$-sign of $t$, denoted $\sgnp(t)$, is defined as 
$\legendre{\copp(t)}{p}$
for odd primes $p$ and $\copt(t) \bmod 8$ otherwise. We
also define $\sgnp(0)=0$, for all primes $p$.
Thus,
\[
\sgnp(0) = 0 \qquad 
\sgnp(t>0) \in \left\{
	\begin{array}{ll}
	\{+1,-1\} & \text{if $p$ is odd}\\
	\{1,3,5,7\} & \text{otherwise}
	\end{array}\right.
\]

The following lemma is well known.

\begin{lemma}\label{lem:QR}
Let $p$ be an odd prime. Then, there
are $\frac{p-1}{2}$ quadratic residues and $\frac{p-1}{2}$ 
quadratic non-residues modulo $p$. Also, every quadratic residue
in $\zpz$ can be written as a sum of two quadratic non-residues
and every quadratic non-residue can be written as a sum of two quadratic
residues.
\end{lemma}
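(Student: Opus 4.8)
The plan is to prove the two assertions of the lemma separately.

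For the count, consider the squaring map $\phi\colon(\zpz)^\times\to(\zpz)^\times$, $x\mapsto x^2$, whose image is by definition the set of quadratic residues modulo $p$. Since $p$ is odd we have $x\neq -x$ for every $x\in(\zpz)^\times$, and $x^2=y^2$ forces $y=\pm x$, so $\phi$ is exactly two-to-one onto its image. Hence there are $\frac{p-1}{2}$ residues, and the remaining $\frac{p-1}{2}$ units of $\zpz$ are the non-residues. (Equivalently: $(\zpz)^\times$ is cyclic of order $p-1$, and the squares are its unique subgroup of index $2$.)

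For the additive statement I would first record a reduction. Scaling a relation $c=a+b$ by a unit $\lambda$ gives $\lambda c=\lambda a+\lambda b$; multiplication by a residue preserves the residue/non-residue type of an element, while multiplication by a non-residue swaps it (clear from the cyclic structure, residues being the even powers of a generator). Consequently the four statements ``every residue is a sum of two non-residues'', ``every non-residue is a sum of two residues'', ``some residue is a sum of two non-residues'', and ``some non-residue is a sum of two residues'' are all equivalent, so it suffices to produce a single non-residue that is a sum of two residues. For that, let $m$ be the least element of $\{2,\dots,p-1\}$ that is a non-residue; this set is nonempty because $1$ is a residue, so all $\frac{p-1}{2}\ge 1$ non-residues lie in $\{2,\dots,p-1\}$. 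By minimality of $m$ (or trivially when $m=2$), $m-1$ is a residue, and $1$ is a residue, so $m=(m-1)+1$ is the desired witness.

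An alternative worth recording — it also gives the exact number of representations and connects with the Gaussian-sum counting method mentioned in the introduction — is a direct character-sum computation. With $\chi=\legendre{\cdot}{p}$, for $a\neq 0$ the quantity $\frac{1+\chi(a)}{2}$ is the indicator that $a$ is a nonzero residue and $\frac{1-\chi(a)}{2}$ that it is a nonzero non-residue, so for fixed $c\neq 0$ the number of ordered pairs of nonzero residues with sum $c$ is $\frac14\sum_{a\notin\{0,c\}}(1+\chi(a))(1+\chi(c-a))$. Expanding and using $\sum_{a}\chi(a)=0$, the only term that is not immediate is $\sum_{a}\chi(a(c-a))=-\chi(-1)$, which follows from the classical evaluation $\sum_{y}\chi(y^2+d)=-1$ for $d\neq 0$ (itself obtained by noting that $w$ has $1+\chi(w)$ square roots). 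This yields $\frac{p-\chi(-1)}{4}>0$ representations of a non-residue as a sum of two residues, and symmetrically $\frac{p-\chi(-1)}{4}>0$ representations of a residue as a sum of two non-residues. The lemma is classical and elementary, so there is no genuine obstacle; the only spot needing care is the bookkeeping of the degenerate terms $a\in\{0,c\}$ in the character-sum version (or, in the short argument, checking that ``the least non-residue in $\{2,\dots,p-1\}$'' is well defined for $p=3,5$, where it equals $2$).
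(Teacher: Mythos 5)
Your proof is correct. Note that the paper itself offers no proof of this lemma -- it is stated as ``well known'' with nothing in the appendix -- so there is no in-paper argument to match against; the closest relative is Lemma~\ref{lem:SizeModP} and Table~\ref{table} in Appendix~\ref{sec:Proofs}, which give the exact count of pairs $(x,x+a)$ with prescribed Legendre symbols, and the positivity of every entry of that table is precisely the additive half of the present lemma. Your two routes bracket that material nicely: the character-sum alternative you sketch is essentially the computation behind Lemma~\ref{lem:SizeModP} (your $\frac{p-\legendre{-1}{p}}{4}$ agrees with the relevant rows of Table~\ref{table}), while your first argument -- reduce by unit scaling to exhibiting a single witness, then take the least non-residue $m\ge 2$ and write $m=(m-1)+1$ -- is more elementary, needs no character sums, and is all the lemma actually requires. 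The counting half via the two-to-one squaring map is standard and correct, and your handling of the degenerate terms $a\in\{0,c\}$ and of the small primes $p=3,5$ is sound. The only cosmetic caveat is that the paper's Definition of quadratic residue excludes $0$ (it requires $\gcd(t,p)=1$), which your argument already respects since both summands $m-1$ and $1$ are nonzero.
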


An integer $t$ is a square modulo $q$ 
if there exists an integer $x$ such that $x^2\equiv t \pmod{q}$.
The integer $x$ is called the {\em square root} of $t$
modulo $q$. If no such $x$ exists, then $t$ is a non-square modulo $q$.

The following lemma is folklore and gives the necessary and sufficient
conditions for an integer $t$ to be a square modulo $p^k$. For
completeness, a proof is provided in Appendix \ref{sec:Proofs}.

\begin{lemma}\label{lem:Square}
Let $p$ be a prime, $k$ be a positive integer and 
$t \in \zpkz$ be a non-zero integer. Then, $t$ is a 
square modulo $p^k$ if and only if $\ordp(t)$ is even and
$\sgnp(t)=1$. 
\end{lemma}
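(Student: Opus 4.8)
The plan is to reduce to the case where $t$ is a unit and then split on the parity of $p$. Write $e=\ordp(t)$ and $u=\copp(t)$, so $t=p^{e}u$ with $p\nmid u$; since $t$ is a nonzero element of $\zpkz$ we have $0\le e\le k-1$ and, crucially, $0<u<p^{k-e}$. Set $m=k-e\ge 1$.

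\textbf{Step 1: the $p$-order must be even, and then $t$ is a square modulo $p^{k}$ iff $u$ is a square modulo $p^{m}$.} Suppose $x^{2}\equiv t\pmod{p^{k}}$ and let $f=\ordp(x)$. If $2f\ge k$ then $x^{2}\equiv 0$, forcing $t\equiv 0$, a contradiction; hence $2f<k$, and writing $t=x^{2}+p^{k}\ell$ and using that $\ordp$ of a sum equals the smaller of two distinct orders, we get $\ordp(t)=\ordp(x^{2})=2f$, so $e$ is even. Writing $x=p^{f}v$ with $p\nmid v$ and dividing $p^{2f}v^{2}\equiv p^{2f}u\pmod{p^{k}}$ by $p^{2f}$ gives $v^{2}\equiv u\pmod{p^{m}}$. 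Conversely, if $e=2f$ is even and $u\equiv v^{2}\pmod{p^{m}}$, multiplying by $p^{2f}$ gives $t\equiv(p^{f}v)^{2}\pmod{p^{k}}$. So it remains to characterize when the unit $u=\copp(t)$ is a square modulo $p^{m}$.

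\textbf{Step 2: the unit case.} For odd $p$, Hensel's lemma (equivalently a one-step lifting: from $u\equiv v^{2}\pmod{p^{m}}$ put $w=v+p^{m}d$ and solve $2vd\equiv(u-v^{2})/p^{m}\pmod p$, possible since $2v$ is a unit) shows $u$ is a square modulo $p^{m}$ iff it is a square modulo $p$, i.e. iff $\legendre{u}{p}=1$; since $u=\copp(t)$ this is exactly $\sgnp(t)=1$, and combining with Step 1 proves the lemma for odd $p$. For $p=2$ the key sublemma is that an odd $u$ is a square modulo $2^{m}$ iff $u\equiv 1\pmod{2^{\min(m,3)}}$: the forward direction is that an odd square is $\equiv 1\pmod 8$, and the reverse for $m\ge 3$ is the $2$-adic lifting step — from $u\equiv v^{2}\pmod{2^{m}}$ with $v$ odd, the choice $w=v+2^{m-1}d$ gives $w^{2}\equiv v^{2}+2^{m}vd\pmod{2^{m+1}}$ (here $m\ge 3$ makes $2m-2\ge m+1$), and picking the parity of $d$ to match $(u-v^{2})/2^{m}$ lifts the square root one level. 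Finally, because $0<u<2^{m}$, the cases $m=1$ and $m=2$ of the sublemma force $u=1$, which agrees with $u\equiv 1\pmod 8$; hence in every case $u$ is a square modulo $2^{m}$ iff $u\bmod 8=1$, i.e. iff $\sgnt(t)=1$, and Step 1 again concludes.

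\textbf{Main obstacle.} The delicate part is entirely the $p=2$ analysis: getting the $2$-adic square-root lifting right (the shift by $2^{m-1}$ rather than $2^{m}$, and the requirement $m\ge 3$), and then the bookkeeping for the small ranges $m\in\{1,2\}$, where one must invoke $u<2^{m}$ to see that the condition "$\copt(t)\equiv 1\pmod 8$" is still exactly the correct one.
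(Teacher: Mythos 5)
Your proof is correct and follows essentially the same route as the paper's: the forward direction extracts that $\ordp(t)$ must be even and that $\copp(t)$ is a square modulo $p^{k-\ordp(t)}$, the converse rescales a Hensel-lifted square root of the unit part, and the $p=2$ case is settled by the mod-$8$ criterion together with the observation that $\copt(t)<2^{k-\ordt(t)}$ forces $\copt(t)=1$ when $k-\ordt(t)\le 2$. The paper packages the two lifting arguments as Lemmas \ref{lem:HenselP} and \ref{lem:Hensel2}, but the content is the same as your Step 2.
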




\begin{definition}\label{def:Prim}
Let $p^k$ be a prime power. A vector $\Vv \in (\zpkz)^n$ is called 
primitive if there exists a component $v_i$, $i \in [n]$, of $\Vv$ such
that $\gcd(v_i,p)=1$. Otherwise, the vector $\Vv$ is 
non-primitive.
\end{definition}

Our definition of primitiveness of a vector is different but equivalent 
to the 
usual one in the literature. A
vector $\Vv \in (\zqz)^n$ is called primitive over $\zqz$ for a 
composite integer $q$ if it is primitive modulo $p^{\ordp(q)}$ for
all primes that divide $q$.

\begin{definition}\label{def:pexpansion}
Let $p$ be a prime, $k$ be a positive integer and $x$ be an element of 
$\zpkz$. The $p$-expansion of $x$ is $x$ written in base $p$ i.e.,
$x=x_{(0)} + x_{(1)} \cdot p + \cdots + x_{(k-1)}\cdot p^{k-1}$, where 
$x_{(i)} \in \zpz$ for $i \in \{0,\cdots,k-1\}$, is called the $i$'th
{\em digit} of $x$.
\end{definition}

\paragraph{Quadratic Form.} 
An $n$-ary quadratic form over a ring $\Ring$
is a symmetric matrix $\MQ \in \Ring^{n\times n}$, interpreted as the following
polynomial in $n$ formal variables $x_1,\cdots, x_n$ of uniform degree~2.
\[
\sum_{1\leq i,j \leq n}\MQ_{ij}x_ix_j = 
\MQ_{11}x_1^2 + \MQ_{12}x_1x_2 + \cdots = \Vx'\MQ\Vx
\]
The quadratic form is called {\em integral} if it is defined over the ring
$\bbZ$. It is called positive definite if for all non-zero column vectors
$\Vx$, $\Vx'\MQ\Vx > 0$. This work deals with integral quadratic forms,
henceforth called simply {\em quadratic forms}.
The {\em determinant} of the quadratic 
form is defined as $\det(\MQ)$. 
A quadratic form is called {\em diagonal} if $\MQ$ is a diagonal matrix. 

Given a set of formal variables 
$\Vx=\begin{pmatrix}x_1 & \cdots & x_n\end{pmatrix}'$ one can make a linear 
change of variables to $\Vy=\begin{pmatrix}y_1 & \cdots & y_n\end{pmatrix}'$ 
using a matrix $\MU \in \Ring^{n\times n}$ by setting $\Vy=\MU\Vx$. 
If additionally, 
$\MU$ is invertible over $\Ring$ i.e., $\MU \in \gln(\Ring)$, then this 
change of 
variables is reversible over the ring. We now define the equivalence of
quadratic forms over the ring $\Ring$ (compare with Lattice Isomorphism).

\begin{definition}\label{def:equiv}
Let $\MQ_1^n, \MQ_2^n$ be quadratic forms over a ring $\Ring$. They are called 
$\Ring$-{\em equivalent} if there exists a $\MU \in \gln(\Ring)$ such that 
$\MQ_2=\MU'\MQ_1\MU$.
\end{definition}

If $\Ring=\zqz$, for some positive integer $q$, then two integral
quadratic forms $\MQ_1^n$ and $\MQ_2^n$ will be called $q$-equivalent (denoted,
$\MQ_1\eqq \MQ_2$)
if there exists a matrix $\MU \in \gln(\zqz)$ such that 
$\MQ_2\equiv\MU'\MQ_1\MU\pmod q$.

If the equation $\Vx'\MQ\Vx \equiv t \pmod{q}$ has a solution
then we say that $t$ has a $q$-representation in $\MQ$ (or $t$ has 
a representation in
$\MQ$ over $\zqz$). Solutions 
$\Vx \in (\zqz)^n$ to the equation are called $q$-{\em representations} of
$t$ in $\MQ$. We classify the representations into two
categories: {\em primitive} and {\em non-primitive}, see Definition 
\ref{def:Prim}.
The set of non-primitive, primitive and all $p^k$-representations of $t$
in $\MQ$ is denoted by $\scpk(\MQ,t), \sbpk(\MQ,t)$ and
$\sapk(\MQ,t)$, respectively. Their sizes are denoted by 
$\fcpk(\MQ,t), \fbpk(\MQ,t)$ and $\fapk(\MQ,t)$, respectively. 

\paragraph{Randomized Algorithms.} 
Our randomized algorithms are 
Las Vegas algorithms. They either fail
and output nothing, or produce a correct answer. The 
probability of failure is bounded by a constant. Thus, for any
$\delta>0$, it is possible to repeat the algorithm 
$O(\log \frac{1}{\delta})$ times and succeed with probability at least
$1-\delta$. Henceforth, these algorithms will be called
{\em randomized algorithms}.

Our algorithms perform two kinds of operations. Ring operations e.g.,
multiplication, additions, inversions over $\zpkz$ and operations
over integers $\bbZ$ e.g., multiplications, additions, divisions etc.
The runtime for all these operations is treated as constant i.e., $O(1)$
and the time complexity of the algorithms is measured in terms of 
ring operations. 
For computing a uniform representation, we also need to sample a 
uniform ring element from $\zpkz$. We adapt the convention that sampling
a uniform ring elements also takes $O(1)$ ring operations.
For example, the Legendre symbol of an integer $a$ can be computed
by fast exponentiation in $O(\log p)$ ring operations over $\zpz$ 
while $\ordp(t)$
for $t \in \zpkz$ can be computed by fast exponentiation in $O(\log k)$ ring
operations over $\zpkz$.

\medskip

Let $\omega$ be the constant, such that multiplying two $n\times n$ 
matrices over $\zpkz$ takes
$O(n^{\omega})$ ring operations. 

\section{Technical Overview}

This section will give a complete outline of our method to count
solutions of $\Vx'\MQ\Vx \equiv t\bmod{p^k}$.

\subsection{Simplifying the LHS}

Recall the definition of equivalence of quadratic forms i.e.,
Definition \ref{def:equiv}. If $\MQ^n$ and $\MS^n$ are equivalent
over $\zpkz$ then the
following lemma show that the 
number of solutions of $\Vx'\MQ\Vx\equiv t \bmod{p^k}$ is the same as
the number of solutions
of $\Vx'\MS\Vx \equiv t \bmod{p^k}$.

\begin{lemma}\label{lem:QFEquivalence}
Let $p$ be a prime, $k, t$ be positive integers,
$\MQ^n$ be an integral quadratic form, $\MU \in \gln(\zpkz)$
and $\MS=\MU'\MQ\MU \modpk$. Then, 
$\fapk(\MQ,t)=\fapk(\MS,t)$, 
$\fbpk(\MQ,t)=\fbpk(\MS,t)$, and 
$\fcpk(\MQ,t) = \fcpk(\MS,t)$.
\end{lemma}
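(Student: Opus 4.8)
The plan is to turn the algebraic relation $\MS \equiv \MU'\MQ\MU \modpk$ into an explicit bijection between the relevant solution sets, namely the linear change of variables $\Vx \mapsto \MU\Vx \modpk$, and then to verify that this bijection also respects primitivity.

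First I would record that, since $\MU \in \gln(\zpkz)$, the map $\Vx \mapsto \MU\Vx \modpk$ is a bijection of $(\zpkz)^n$ onto itself, with inverse induced by $\MU^{-1} \in \gln(\zpkz)$. For any $\Vx \in (\zpkz)^n$ one has, working modulo $p^k$,
\[
\Vx'\MS\Vx \equiv \Vx'(\MU'\MQ\MU)\Vx = (\MU\Vx)'\MQ(\MU\Vx) \modpk ,
\]
so $\Vx$ is a $p^k$-representation of $t$ in $\MS$ if and only if $\MU\Vx \modpk$ is a $p^k$-representation of $t$ in $\MQ$. Hence $\Vx \mapsto \MU\Vx \modpk$ is a bijection from $\sapk(\MS,t)$ onto $\sapk(\MQ,t)$, which gives $\fapk(\MQ,t) = \fapk(\MS,t)$.

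Next I would check primitivity is preserved. A vector $\Vv \in (\zpkz)^n$ is non-primitive exactly when $\Vv \equiv 0 \modp$ (Definition~\ref{def:Prim}). By Fact~\ref{fact:gln}, $\det(\MU)$ is a unit of $\zpkz$, hence not divisible by $p$, so $\MU \bmod p \in \gln(\zpz)$; since $\zpz$ is a field, $\MU \bmod p$ is an invertible linear map on $(\zpz)^n$, and therefore $\MU\Vv \equiv 0 \modp$ if and only if $\Vv \equiv 0 \modp$. Consequently the bijection above sends primitive vectors to primitive vectors and non-primitive vectors to non-primitive vectors, so it restricts to bijections $\sbpk(\MS,t) \to \sbpk(\MQ,t)$ and $\scpk(\MS,t) \to \scpk(\MQ,t)$, yielding $\fbpk(\MQ,t) = \fbpk(\MS,t)$ and $\fcpk(\MQ,t) = \fcpk(\MS,t)$.

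I do not expect a genuine obstacle here; the only points needing a line of care are that $\MU \in \gln(\zpkz)$ is precisely what makes $\Vx \mapsto \MU\Vx$ well defined and bijective on residues modulo $p^k$, and that invertibility modulo $p^k$ forces invertibility modulo $p$, so that primitivity — a condition read off modulo $p$ — is carried along by the same map.
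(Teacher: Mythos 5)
Your proposal is correct and follows essentially the same route as the paper: both exhibit the linear change of variables induced by $\MU$ (the paper phrases it via the inverse $\MV$ with $\MU\MV\equiv\MI\modpk$) as a bijection between the two solution sets and then check that it preserves primitivity. Your reduction-mod-$p$ argument for primitivity is a clean equivalent of the paper's contradiction argument, so nothing further is needed.
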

\begin{proof}
Let $\MV \in \gln(\zpkz)$ be such that $\MU\MV\equiv \MI \bmod{p^k}$.
The map $f:(\zpkz)^n \to (\zpkz)^n$ defined by 
$f(\Vx):= \MV\Vx \bmod{p^k}$ is bijective because $\MU, \MV$
are invertible over $\zpkz$.

If $\Vx$ is primitive then $\MV\Vx$ is also primitive. We prove this
by contradiction. Suppose $\MV\Vx$ is not primitive. Then, $\MV\Vx$
can be written as $p\Vy$, where $\Vy \in (\zpkz)^n$. But, $\Vx \equiv
\MU\MV\Vx \equiv p\MU \Vy \bmod{p^k}$, which implies that $\Vx$ is not
primitive.

The lemma now follows from the
equation $(\MV\Vx)'\MS(\MV\Vx) \equiv \Vx'\MQ\Vx \modpk$. 
\end{proof}

Thus, in order to count, we first transform $\MQ^n$ into a simpler
quadratic form. We refer to this transformation procedure as 
``diagonalization''.

An intuitive description of the diagonalization procedure 
follows. Given a quadratic form over a ring $\Ring$, one can classify
them according to the following equivalence. Two quadratic 
forms are equivalent over $\Ring$ if one can be obtained from the
other by an invertible linear change of variables over $\Ring$.
For example, $x^2$ and $2y^2$ are equivalent over the field of reals
$\bbR$ because
the transformations $x\to\sqrt{2}y$ and $y\to\frac{1}{\sqrt{2}}x$
are inverse of each other in $\bbR$, are linear and transform
$x^2$ to $2y^2$ and $2y^2$ to $x^2$ respectively. Thus, over $\bbR$
instead of trying to solve both $x^2$ and $2y^2$ separately, one 
can instead solve $x^2$ and then use the invertible linear 
transformation to map the solutions of $x^2$ to the solutions of $2y^2$.
It is well known that every quadratic form in $n$-variables over 
$\bbR$ is equivalent to $\sum_{i=1}^a x_i^2 - \sum_{i=a+1}^n x_i^2$, 
for some $a \in [n]$. This is known as the Sylvester's Law of inertia.

For the ring $\zpkz$ such that $p$ is odd, there
always exists an equivalent quadratic form which is also diagonal 
(see \cite{CS99}, Theorem 2, page 369).
Additionally, one can explicitly find the invertible change of
variables that turns it into a diagonal quadratic form.
The situation is tricky over the ring $\ztkz$. Here, it might not
be possible to eliminate all mixed terms, i.e., terms of the form
$2a_{ij}x_ix_j$ with $i\neq j$. For example, consider
the quadratic form $2xy$ i.e., 
$\begin{pmatrix}0 & 1 \\ 1 & 0 \end{pmatrix}$ over $\ztkz$. 
An invertible linear change of variables over $\ztkz$ is
of the following form.
\begin{gather*}
\begin{array}{l}x \to a_1x_1 + a_2x_2\\ 
y \to b_1x_1+b_2x_2\end{array} \qquad 
\begin{pmatrix}a_1 & a_2 \\ b_1 & b_2\end{pmatrix} \text{ invertible over
$\ztkz$}
\end{gather*}
The mixed term after this transformation is $2(a_1b_2+a_2b_1)$. 
As $a_1b_2+a_2b_1 \bmod 2$ is the same as the determinant
of the change of variables above i.e., $a_1b_2-a_2b_1$
modulo $2$; it is not possible for a transformation in 
$\gl_2(\ztkz)$ to eliminate the mixed term. 
Instead, one can show that over $\ztkz$ it is possible to get an 
equivalent form where the mixed terms are disjoint i.e., both
$x_ix_j$ and $x_ix_k$ do not appear, where $i,j,k$ are 
pairwise distinct.
One captures this form by the following definition.

\begin{definition}\label{def:BlockDiagonal}
A matrix $\MD^n$ over integers is in a block diagonal form if it is a
direct sum of type I and type II forms; where type I form is an integer
while type II is a matrix of the form 
$\begin{pmatrix}2^{\ell+1}a & 2^{\ell} b \\ 2^{\ell}b & 2^{\ell+1}c\end{pmatrix}$
with $b$ odd.
\end{definition}

The following theorem is folklore and is also implicit in the proof of
Theorem~2 on page~369 in \cite{CS99}. For completeness, we provide a 
proof in Appendix \ref{sec:BlockDiagonal}.

\begin{theorem}\label{thm:BlockDiagonal}
Let $\MQ^n$ be an integral quadratic form, $p$ be a prime, 
and $k$ be a positive integer. 
Then, there is an algorithm that performs $O(n^{1+\omega}\log k)$ ring operations
and produces a matrix $\MU \in \sln(\zpkz)$ such that $\MU'\MQ\MU\pmod{p^k}$, is a
diagonal matrix for odd primes $p$ and a block diagonal matrix
(in the sense of Definition~\ref{def:BlockDiagonal}) for $p=2$.
\end{theorem}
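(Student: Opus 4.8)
The plan is to prove this by induction on $n$, peeling off one variable (for odd $p$) or a $1\times 1$ or $2\times 2$ block (for $p=2$) at a time, while tracking the cost of each peeling step. The induction hypothesis is that a quadratic form in $m<n$ variables can be block-diagonalized with $O(m^{1+\omega}\log k)$ ring operations by a transformation in $\sln(\zpkz)$; combined with the per-step cost this should telescope to $O(n^{1+\omega}\log k)$. The subtle points are (i) finding a unit entry on the diagonal when all diagonal entries are divisible by $p$, (ii) the $p=2$ case where no unit diagonal entry can be produced and a $2\times 2$ block must be split off instead, (iii) keeping the transformation matrix in $\sln$ rather than merely $\gln$, and (iv) justifying the $\log k$ factor, which comes from the cost of inverting a unit in $\zpkz$ (e.g., by Hensel lifting / fast exponentiation), since division is needed to clear a row and column.

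First I would handle the odd-$p$ case. Write $\MQ = (a_{ij})$. If some diagonal entry $a_{ii}$ is a unit of $\zpkz$, permute it into position $(1,1)$ (a permutation can be chosen in $\sln$ by adjusting signs if needed). If instead every $a_{ii}$ is divisible by $p$ but some off-diagonal $a_{ij}$ is a unit, apply the elementary transformation $x_i \to x_i + x_j$, which is in $\sln(\zpkz)$ and turns the $(i,i)$ entry into $a_{ii} + 2a_{ij} + a_{jj}$, a unit since $p$ is odd and $a_{ij}$ is a unit. (If $\MQ \equiv 0 \bmod p$ entirely, factor out the largest power of $p$ dividing all entries; this is fine since we only need the form up to the stated block-diagonal shape, and the recursion handles the cofactor.) Now with $a_{11} \in \zpkz^\times$, use the transformation $x_j \to x_j - a_{11}^{-1}a_{1j}x_1$ for $j\ge 2$ simultaneously — this is a single unipotent upper-triangular matrix, hence in $\sln(\zpkz)$ — to clear the first row and column, producing $\langle a_{11}\rangle \oplus \MQ'$ with $\MQ'$ an $(n-1)\times(n-1)$ form. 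Computing $a_{11}^{-1}$ costs $O(\log k)$ ring operations (or $O(\log p^k)$, absorbed into the model's constant-per-ring-op convention), and forming $\MQ'$ costs $O(n^2)$ ring operations for the rank-one update, well within $O(n^{1+\omega})$. Recurse on $\MQ'$.

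For $p=2$, the argument is the same whenever a unit diagonal entry can be obtained — note that $a_{ii}+2a_{ij}+a_{jj}$ need no longer be a unit, but $a_{ii}+2a_{ii'}\cdots$ tricks or simply $x_i \to x_i + x_j$ combined with checking all pairs will produce a unit diagonal entry \emph{unless} every $a_{ii}$ is even; in the latter case pick indices $i,j$ with $a_{ij}$ a unit (if none exists, factor out $2$ as before) and split off the $2\times 2$ block $\begin{pmatrix}a_{ii}&a_{ij}\\ a_{ij}&a_{jj}\end{pmatrix}$, whose off-diagonal entry is odd and whose diagonal entries are even, i.e., a type II block with $\ell$ determined by the $2$-order of the odd part. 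The column-clearing step now uses the inverse of the $2\times 2$ block modulo $2^k$: since the block's determinant $a_{ii}a_{jj}-a_{ij}^2$ is odd (as $a_{ij}$ is odd and $a_{ii},a_{jj}$ even), the block is invertible over $\zRz{2^k}$, and the adjugate-based clearing $x_\ell \to x_\ell - (\text{row of }\text{block}^{-1})\cdot(\text{first two columns})$ for $\ell$ outside $\{i,j\}$ removes all cross terms; this transformation is again unipotent, hence in $\sln(\zRz{2^k})$. Recurse on the remaining $(n-2)\times(n-2)$ form.

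The main obstacle I anticipate is \textbf{bookkeeping the transformation matrix as a product of the per-step transformations while preserving the $\sln$ constraint and the overall cost bound}, together with cleanly handling the degenerate case where an entire principal submatrix is $\equiv 0 \bmod 2^j$ for some $j$ — there, one factors $2^j$ out of the whole block, records it as the power of $2$ in a type I or type II summand, and continues. Accumulating the $O(n)$ (odd $p$) or $O(n/2)$ ($p=2$) elementary transformations into a single $\MU$ by matrix multiplication naively costs $O(n^{1+\omega})$ per multiplication times $O(n)$ steps, which is $O(n^{2+\omega})$ — too slow. To hit $O(n^{1+\omega}\log k)$ one instead never forms intermediate products: maintain $\MU$ implicitly, apply each unipotent step to $\MU$ as a rank-one (or rank-two) update in $O(n^2)$ time, so the matrix updates total $O(n^3)\le O(n^{1+\omega})$ (using $\omega\ge 2$), while the dominant $n^{1+\omega}$ arises from a one-time use of fast matrix multiplication somewhere in the pipeline (e.g., if one prefers to batch-process $\Theta(n)$ peeling steps at once via block elimination à la block LU), and the $\log k$ from the single modular inversion needed at each pivot, of which there are $O(n)$, contributing $O(n\log k)$ — subsumed. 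I would present the recursion with the $O(n^2)$-per-step update and argue the total is $O(n^{1+\omega}\log k)$, deferring the precise matching of the $n^{1+\omega}$ term to the block-elimination variant described in Appendix~\ref{sec:BlockDiagonal}.
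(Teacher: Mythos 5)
Your proposal follows essentially the same route as the paper's proof in Appendix~\ref{sec:BlockDiagonal}: pivot on a minimal-order (after rescaling, unit) diagonal entry, manufacture one via $x_i \to x_i + x_j$ when $p$ is odd, split off a type~II block with odd (hence invertible) determinant when $p=2$ and all diagonal entries are even, clear the pivot row and column by a unipotent transformation, and recurse. The only real slip is in your final paragraph: a single $n\times n$ matrix multiplication costs $O(n^{\omega})$ ring operations, not $O(n^{1+\omega})$, so accumulating the $O(n)$ per-step transformations by straightforward multiplication already costs $O(n\cdot n^{\omega})=O(n^{1+\omega})$ — which is exactly how the paper obtains the stated bound — and the ``obstacle'' you devote the last paragraph to, along with the deferral to a block-elimination variant, is illusory (your rank-one-update workaround is harmless but unnecessary).
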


\subsection{Simplifying the RHS}

Our next step is to simplify the right hand side of the equation
$\Vx'\MQ\Vx \equiv t \bmod{p^k}$ i.e., $t$. 

\begin{definition}\label{def:SymbolOneDim}
The $p^k$-{\em symbol} of an integer $t$ is
$\sympk(t)=(\ordp(t \modpk), \sgnp(t \modpk))$.
\end{definition}

A $p^k$-symbol will be denoted as $\gamma$ and $\ordp(\gamma)$
will denote the $p$-order of $\gamma$ and $\sgnp(\gamma)$ will denote the
$p$-sign of $\gamma$.
The next lemma shows the importance of the $p^k$-symbol.

\begin{lemma}\label{lem:IntegerSymbolIsInvariant}
For integers $a,b$ and prime $p$:
$b \eqpk a$ iff $\sympk(a)=\sympk(b)$.
\end{lemma}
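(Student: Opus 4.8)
The statement to prove is Lemma~\ref{lem:IntegerSymbolIsInvariant}: for integers $a,b$ and a prime $p$, one has $b \eqpk a$ (meaning $a$ and $b$ are $p^k$-equivalent as $1\times 1$ forms, i.e.\ there is a unit $u \in (\zpkz)^\times$ with $b \equiv u^2 a \pmod{p^k}$) if and only if $\sympk(a)=\sympk(b)$. The plan is to prove both directions by reducing to the unit case via the $p$-order, and then to invoke Lemma~\ref{lem:Square} to characterize which units are squares.

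First I would handle the degenerate case: if $a \equiv 0 \pmod{p^k}$ then no unit multiple $u^2 a$ can be anything but $0$, and $\sympk(0) = (\infty,0)$ (or whatever the convention gives for $t\equiv 0$) matches $\sympk(b)$ exactly when $b\equiv 0$; so assume henceforth $a,b \not\equiv 0$. Write $\alpha = \ordp(a\bmod p^k)$, $\beta = \ordp(b\bmod p^k)$, and $a = p^\alpha a'$, $b = p^\beta b'$ with $a', b'$ units mod $p^{k-\alpha}$ resp.\ $p^{k-\beta}$. For the forward direction, if $b \equiv u^2 a \pmod{p^k}$ with $u$ a unit, then $\ordp(b\bmod p^k) = \ordp(u^2 a\bmod p^k) = \ordp(a \bmod p^k)$ since $u^2$ is a unit, giving $\alpha = \beta$; and then $b' \equiv u^2 a' \pmod{p^{k-\alpha}}$, so $\copp(b)$ and $\copp(a)$ differ by the square of a unit, whence their $p$-signs agree — for odd $p$ because $\legendre{u^2 a'}{p} = \legendre{a'}{p}$, and for $p=2$ because $u^2 \equiv 1 \pmod 8$ for any odd $u$, so $u^2 a' \equiv a' \pmod 8$. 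That gives $\sympk(a) = \sympk(b)$.

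For the converse, suppose $\sympk(a) = \sympk(b)$, so $\alpha = \beta$ and $\sgnp(a) = \sgnp(b)$. Cancelling the common factor $p^\alpha$, it suffices to find a unit $u$ with $b' \equiv u^2 a' \pmod{p^{k-\alpha}}$, i.e.\ to show that $b'(a')^{-1}$ is a square unit modulo $p^{k-\alpha}$. By Lemma~\ref{lem:Square}, a nonzero element of $\zRz{p^{k-\alpha}}$ is a square iff its $p$-order is even and its $p$-sign is $1$; since $b'(a')^{-1}$ is a unit its $p$-order is $0$, and its $p$-sign is $\sgnp(b')\sgnp(a')^{-1}$, which I claim equals $1$: for odd $p$ this is $\legendre{b'}{p}\legendre{a'}{p} = \legendre{b'}{p}^2\cdot(\text{hypothesis})$ — more directly, $\sgnp(a)=\sgnp(b)$ means $\legendre{a'}{p} = \legendre{b'}{p}$, so their ratio is $1$; for $p=2$, $\sgnp(a)=\sgnp(b)$ means $a' \equiv b' \pmod 8$, hence $b'(a')^{-1}\equiv 1 \pmod 8$, which by the Kronecker-symbol convention has $2$-sign $1$. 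Thus $b'(a')^{-1}$ is a square unit, providing the required $u$ and hence $b \eqpk a$.

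The only mildly delicate point — and where I would be most careful — is the interface with Lemma~\ref{lem:Square} when $\alpha$ is close to $k$, so that the modulus $p^{k-\alpha}$ is small (e.g.\ $k-\alpha = 1$ or, for $p=2$, $k-\alpha \le 3$): in those small-modulus regimes the "$\sgnp = 1$" condition can be automatic or vacuous, and one must check the $p$-sign bookkeeping is still consistent with how $\sgnp(t\bmod p^k)$ is defined for such $t$. This is a finite case check rather than a real obstacle. I expect the bulk of the argument to be the routine unit-square manipulations above; there is no substantive difficulty, since all the number-theoretic content has been front-loaded into Lemma~\ref{lem:Square} and the definition of $\sgnp$.
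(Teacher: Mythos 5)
Your proposal is correct and follows essentially the same route as the paper's proof: both directions reduce to the unit case by factoring out $p^{\ordp(a)}$, the forward direction uses that multiplying by a unit square preserves $p$-order and $p$-sign, and the converse applies Lemma~\ref{lem:Square} to conclude that the ratio of the coprime parts is a square unit modulo $p^{k-\alpha}$. Your extra attention to the small-modulus regime ($k-\alpha$ small, especially for $p=2$) is a minor refinement the paper passes over silently, but it does not change the argument.
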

\begin{proof}
The lemma is true if $\ordp(a)$ or $\ordp(b)$ is at least
$k$. Hence, we assume that $\ordp(a), \ordp(b) < k$.

We first show that $b \eqpk a$ implies $\sympk(a)=\sympk(b)$.
If $b \eqpk a$ then there 
exists a $u \in (\zpkz)^\times$ such that 
$b \equiv u^2a \pmod{p^k}$. But, multiplying by a square of a unit does
not change the sign i.e., 
$\sgnp(a \bmod{p^k})=\sgnp(u^2a \bmod p^k)=\sgnp(b \bmod{p^k})$.
Also, $\ordp(u)=0$ implies that $\ordp(a)=\ordp(b)$. This shows
that $\sympk(a)=\sympk(b)$.

We now show the converse. Suppose $a$ and $b$ be such that 
$\sympk(a)=\sympk(b)$. Let $\ordp(a)=\ordp(b)=\alpha$. By
definition of $p^k$-symbol, $\sgnp(a \modpk)=\sgnp(b \modpk)$. But then,
\begin{align*}
\sgnp\left(\copp(a) \bmod{p^{k-\alpha}}\right) = 
\sgnp\left(\copp(b) \bmod{p^{k-\alpha}}\right)\\
\iff \sgnp\left(\copp(a)\copp(b)^{-1} \bmod{p^{k-\alpha}}\right) = 1
\end{align*}
By Lemma \ref{lem:Square}, 
$\copp(a)\copp(b)^{-1} \bmod{p^{k-\alpha}}$ is a 
quadratic residue modulo $p^{k-\alpha}$. But then,
there exists a unit $u$ such that 
\[
u^2 \equiv \copp(a)\copp(b)^{-1} \pmod{p^{k-\alpha}}\;.
\] 
Multiplying this equation by 
$\copp(b)p^\alpha$ yields
$u^2b \equiv a \modpk$ or $b \eqpk a$.
\end{proof}

The following lemma shows that the number of 
solutions only depend on two things: $\ordp(t \modpk)$, and 
$\sgnp(t \modpk)$.

\begin{lemma}\label{lem:SymbolInvariance}
Let $\MQ^n$ be a quadratic form, $p$ be a prime, $k$ be a positive
integer and $t,s$ be integers such that $\sympk(t)=\sympk(s)$.
Then, $\fapk(\MQ,t)=\fapk(\MQ,s)$, 
$\fbpk(\MQ,t)=\fbpk(\MQ,s)$, and 
$\fcpk(\MQ,t)=\fcpk(\MQ,s)$.
\end{lemma}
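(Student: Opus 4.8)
The plan is to reduce Lemma~\ref{lem:SymbolInvariance} to Lemma~\ref{lem:IntegerSymbolIsInvariant} together with Lemma~\ref{lem:QFEquivalence}. Since $\sympk(t)=\sympk(s)$, Lemma~\ref{lem:IntegerSymbolIsInvariant} gives $t\eqpk s$, i.e.\ there is a unit $u\in(\zpkz)^\times$ with $s\equiv u^2 t\pmod{p^k}$. The idea is that scaling $t$ by $u^2$ is the same, for counting purposes, as scaling the quadratic form $\MQ$ by $u^{-2}$; and scaling $\MQ$ by the square of a unit is a $\zpkz$-equivalence, so the counts are preserved by Lemma~\ref{lem:QFEquivalence}.

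Concretely, first I would observe that for any $c\in(\zpkz)^\times$ the map $\Vx\mapsto c\Vx\bmod p^k$ is a bijection of $(\zpkz)^n$ that preserves primitivity (a vector is primitive iff some coordinate is a unit, and multiplying by a unit does not change which coordinates are units). Under this bijection $\Vx'\MQ\Vx\equiv t$ becomes $(c\Vx)'\MQ(c\Vx)=c^2\,\Vx'\MQ\Vx\equiv c^2 t$. Hence, taking $c=u$, the solution sets of $\Vx'\MQ\Vx\equiv t$ and of $\Vx'\MQ\Vx\equiv u^2 t\equiv s\pmod{p^k}$ are in primitivity-preserving bijection, which gives $\fapk(\MQ,t)=\fapk(\MQ,s)$, $\fbpk(\MQ,t)=\fbpk(\MQ,s)$, and $\fcpk(\MQ,t)=\fcpk(\MQ,s)$ directly.

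Alternatively—and this is the route that most cleanly invokes the machinery already set up—one can write $s\equiv u^2 t$ and note $\MU:=u\cdot\MI\in\gln(\zpkz)$ since $u$ is a unit (Fact~\ref{fact:gln}). Then $\MU'\MQ\MU\equiv u^2\MQ\pmod{p^k}$, so by Lemma~\ref{lem:QFEquivalence} the quadratic form $u^2\MQ$ has the same representation counts for every target as $\MQ$ does. Combining with the trivial observation that $\Vx'(u^2\MQ)\Vx\equiv t$ has exactly the solutions of $\Vx'\MQ\Vx\equiv u^{-2}t$ (multiply the target through by the unit $u^{-2}$), one again lands on the claimed equalities. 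I would also dispose of the degenerate case $\ordp(t\bmod p^k)\ge k$ up front: then $t\equiv 0\equiv s\pmod{p^k}$ and there is nothing to prove.

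The only subtlety—hardly an obstacle—is the bookkeeping for primitivity, namely checking that the bijection $\Vx\mapsto u\Vx$ (or equivalently the change of variables in Lemma~\ref{lem:QFEquivalence}) genuinely carries primitive representations of $t$ to primitive representations of $s$ and non-primitive to non-primitive; but this is exactly the argument already given in the proof of Lemma~\ref{lem:QFEquivalence}, since $u$ and $u^{-1}$ are both units mod $p^k$. So the proof is essentially a one-line reduction once Lemma~\ref{lem:IntegerSymbolIsInvariant} is in hand.
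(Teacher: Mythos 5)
Your first (and primary) argument is exactly the paper's proof: invoke Lemma~\ref{lem:IntegerSymbolIsInvariant} to get a unit $u$ with $s\equiv u^2t\pmod{p^k}$, and observe that $\Vx\mapsto u\Vx$ is a primitivity-preserving bijection carrying $p^k$-representations of $t$ to those of $s$. The alternative route through Lemma~\ref{lem:QFEquivalence} is also valid but unnecessary; the proposal is correct as written.
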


\begin{proof}
By Lemma \ref{lem:IntegerSymbolIsInvariant}, it follows 
that there exists a unit $u \in (\zpkz)^\times$
such that $s\equiv u^2t \modpk$. But then, 
$t$ has a representation $\Vx \in (\zpkz)^n$ in $\MQ$ iff $s$ has 
a representation $u\Vx \in (\zpkz)^n$ in $\MQ$; 
\[
(u\Vx)'\MQ(u\Vx)\equiv u^2 \Vx'\MQ\Vx \equiv u^2t \equiv s \modpk\;.
\]
The function $\Vx\to u\Vx$ maps $p^k$-representations of $t$ in $\MQ$
to $p^k$-representations of $s$ in $\MQ$.
Also, the map is bijective, and preserves primitiveness;
completing the proof.
\end{proof}

Let $\sym$ be a $p^k$-symbol and $t \in \zpkz$ be an
integer such that $\sympk(t)=\sym$. Then,
using Lemma \ref{lem:SymbolInvariance}, we can define the following
quantities. 
\[
\fapk(\MQ,\sym) = \fapk(\MQ,t) \qquad
\fbpk(\MQ,\sym) = \fbpk(\MQ,t) \qquad
\fcpk(\MQ,\sym) = \fcpk(\MQ,t)
\]

There are $p^k$ 
different possible values for $t$ over $\zpkz$ i.e., exponential
in $k$. But, for the $p^k$-symbol $\sym$, there are only 
$(4k+1)$ possibilities; when $p=2$, and $(2k+1)$ otherwise (the 
``$+1$'' is for $0$). Note that the $p$-order $\infty$ only appears
with $p$-sign~0 and vice-versa.

For notational convenience, we define the following sets (the modulo $p^k$
will be clear from the context, whenever we use this notation).
\begin{align}\label{def:opk}
\ORD = \{\infty,0,\cdots,k-1\} ~~
\SGN = \left\{\begin{array}{ll}
\{0,1,-1\} & \text{$p$ is an odd prime} \\
\{0,1,3,5,7\} & \text{otherwise}
\end{array}\right.
\end{align}

The following definition is useful in reducing the problem of 
counting representations in higher dimensions to the problem of
counting representations for individual blocks in a block diagonal
form.

\begin{definition}\label{def:SymbolSplitSize}
Let $p$ be a prime, $k$ be a positive integer, 
$t \in \zpkz$ be an integer, and $\gamma_1, \gamma_2$ be 
$p^k$-symbols.
Then, the $(\gamma_1,\gamma_2)$-split size of $t$ over $\zpkz$,
denoted $\fspkt(\gamma_1,\gamma_2)$, is the size of the following
set,
\[
\begin{aligned}
\spkt(\gamma_1,\gamma_2) = \Big\{(a,b) \in (\zpkz)^2\mid 
\sympk(a)=\gamma_1, \sympk(b)=\gamma_2, 
\hfill t\equiv a+b \modpk\Big\}.
\end{aligned}
\]
\end{definition}

Let $\sym, \sym_1, \sym_2$ be $p^k$-symbols and $t \in \zpkz$ be an
integer such that $\sympk(t)=\sym$. Then, the following
lemma shows that we can define 
$\fS^{\sym}_{p^k}(\sym_1,\sym_2)$ as
$\fS^{t}_{p^k}(\sym_1,\sym_2)$.

\begin{lemma}\label{lem:SplitClassInvariance}
Let $p$ be a primes, $k$ be a positive integer, 
$\sym_1, \sym_2$ be $p^k$-symbols and $t,s$ be 
integers such that $\sympk(t)=\sympk(s)$. Then, 
$\fS^t_{p^k}(\sym_1,\sym_2)=
\fS^s_{p^k}(\sym_1,\sym_2)$.
\end{lemma}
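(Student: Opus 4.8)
The plan is to mimic the proof of Lemma~\ref{lem:SymbolInvariance}: since $\sympk(t)=\sympk(s)$, Lemma~\ref{lem:IntegerSymbolIsInvariant} gives a unit $u \in (\zpkz)^\times$ with $s \equiv u^2 t \modpk$. First I would use this unit to build an explicit bijection between the sets $\spkt(\sym_1,\sym_2)$ and $\spk^s(\sym_1,\sym_2)$, namely $(a,b) \mapsto (u^2 a \bmod p^k, u^2 b \bmod p^k)$.

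Next I would verify that this map is well-defined, i.e.\ that it actually lands in $\spk^s(\sym_1,\sym_2)$. There are three conditions to check. The additive condition is immediate: $a + b \equiv t \modpk$ implies $u^2 a + u^2 b \equiv u^2 t \equiv s \modpk$. The symbol conditions $\sympk(u^2 a) = \sympk(a) = \sym_1$ and $\sympk(u^2 b) = \sympk(b) = \sym_2$ follow because multiplication by the square of a unit changes neither the $p$-order nor the $p$-sign --- this is exactly the computation already carried out in the proof of Lemma~\ref{lem:IntegerSymbolIsInvariant} (or one can simply invoke Lemma~\ref{lem:IntegerSymbolIsInvariant} itself, since $u^2 a \eqpk a$ forces $\sympk(u^2 a) = \sympk(a)$).

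Finally I would note that the map is a bijection: its inverse is $(a',b') \mapsto (v^2 a' \bmod p^k, v^2 b' \bmod p^k)$ where $v$ is the inverse of $u$ in $(\zpkz)^\times$, since $u^2 v^2 \equiv 1 \modpk$ and the same well-definedness argument applies with the roles of $t$ and $s$ swapped. Hence $\fspkt(\sym_1,\sym_2) = |\spkt(\sym_1,\sym_2)| = |\spk^s(\sym_1,\sym_2)| = \fS^s_{p^k}(\sym_1,\sym_2)$.

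I do not anticipate a genuine obstacle here: the statement is an invariance lemma of exactly the same flavor as Lemma~\ref{lem:SymbolInvariance}, and every ingredient (existence of the unit, $p$-order and $p$-sign invariance under squares of units, bijectivity) is already available in the excerpt. The only point requiring a word of care is the edge case where $\sym_1$ or $\sym_2$ is the zero symbol $(\infty,0)$, but even there the argument goes through unchanged since $u^2 \cdot 0 \equiv 0$ and $\sympk(0) = (\infty,0)$, so the map still respects the constraint. Thus the lemma justifies writing $\fspkst(\sym_1,\sym_2)$ for the common value, indexed by the symbol $\sym$ of $t$ rather than by $t$ itself.
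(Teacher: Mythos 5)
Your proposal is correct and follows essentially the same route as the paper's own proof: both invoke Lemma~\ref{lem:IntegerSymbolIsInvariant} to obtain a unit $u$ with $s\equiv u^2t \modpk$ and then observe that $(a,b)\mapsto(u^2a,u^2b)$ is a symbol-preserving bijection between the two split sets. Your additional checks (explicit inverse map, the zero-symbol edge case) are fine but not needed beyond what the paper records.
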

\begin{proof}
By Lemma \ref{lem:IntegerSymbolIsInvariant}, there exists
a $u \in (\zpkz)^\times$ such that $u^2t\equiv s \modpk$.
If $t \equiv a + b \modpk$ then 
$u^2t \equiv u^2a+u^2b \modpk$ with $\sympk(u^2a)=\sympk(a)$,
and $\sympk(u^2b)=\sympk(b)$. The lemma now follows from the
observation that the map $x \to u^2 x$ is a bijection from $\zpkz$ to
itself.
\end{proof}

\begin{lemma}\label{lem:Split}
Let $\MQ=\diag(\MQ_1,\MQ_2)$ be an integral quadratic form, 
$p$ be a prime, $k$ be a positive integer,
and $\sym$ be a $p^k$-symbol. Then,
\begin{gather*}
\fapk(\MQ,\sym) = \sum_{\gamma_1, \gamma_2\in\ORD\times\SGN} 
	\fS^{\sym}_{p^k}(\gamma_1,\gamma_2)\cdot\fapk(\MQ_1,\gamma_1)
	\cdot\fapk(\MQ_2,\gamma_2)\\
\fcpk(\MQ,\sym) = \sum_{\gamma_1, \gamma_2 \in \ORD\times \SGN} 
	\fS^{\sym}_{p^k}(\gamma_1,\gamma_2)\cdot\fcpk(\MQ_1,\gamma_1)
	\cdot\fcpk(\MQ_2,\gamma_2) 
\end{gather*}
\end{lemma}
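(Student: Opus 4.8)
The plan is to decompose each representation of $t$ in $\MQ=\diag(\MQ_1,\MQ_2)$ according to the values it attains on the two blocks, and then to reorganize the resulting sum by $p^k$-symbols. First I would fix an integer $t\in\zpkz$ with $\sympk(t)=\sym$ (such a $t$ exists by Definition~\ref{def:SymbolOneDim}); by the definitions of $\fapk(\MQ,\sym)$, $\fcpk(\MQ,\sym)$ and $\fS^{\sym}_{p^k}$ introduced after Lemmas~\ref{lem:SymbolInvariance} and~\ref{lem:SplitClassInvariance}, it suffices to prove both identities with $\sym$ replaced by $t$ and $\fS^{\sym}_{p^k}$ replaced by $\fS^{t}_{p^k}$. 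Now write every $\Vx\in(\zpkz)^n$ as $\Vx=(\Vu,\Vw)$, where $\Vu$ ranges over the coordinates belonging to $\MQ_1$ and $\Vw$ over those belonging to $\MQ_2$; since $\MQ$ is block diagonal, $\Vx'\MQ\Vx=\Vu'\MQ_1\Vu+\Vw'\MQ_2\Vw$. I would then define the map
\[
\Phi:\sapk(\MQ,t)\to\bigsqcup_{\substack{a,b\in\zpkz\\ a+b\equiv t\,(p^k)}}\sapk(\MQ_1,a)\times\sapk(\MQ_2,b),\qquad \Phi(\Vx)=(\Vu,\Vw),
\]
where the image lands in the summand indexed by $a:=\Vu'\MQ_1\Vu\bmod p^k$ and $b:=\Vw'\MQ_2\Vw\bmod p^k$; this is well defined because $a+b\equiv\Vx'\MQ\Vx\equiv t\modpk$. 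The map $\Phi$ is injective since $\Vx\mapsto(\Vu,\Vw)$ is, and it is surjective because for $(\Vu,\Vw)$ in the $(a,b)$-summand the concatenation $\Vx=(\Vu,\Vw)$ satisfies $\Vx'\MQ\Vx\equiv a+b\equiv t\modpk$. Hence $\fapk(\MQ,t)=\sum_{a+b\equiv t}\fapk(\MQ_1,a)\,\fapk(\MQ_2,b)$.

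Next I would group the terms of this sum according to the pair $(\gamma_1,\gamma_2)=(\sympk(a),\sympk(b))\in(\ORD\times\SGN)^2$. By Lemma~\ref{lem:SymbolInvariance}, $\fapk(\MQ_1,a)$ depends only on $\gamma_1$ and equals $\fapk(\MQ_1,\gamma_1)$, and likewise $\fapk(\MQ_2,b)=\fapk(\MQ_2,\gamma_2)$; and by Definition~\ref{def:SymbolSplitSize} the number of pairs $(a,b)$ with $\sympk(a)=\gamma_1$, $\sympk(b)=\gamma_2$ and $a+b\equiv t\modpk$ is exactly $\fS^{t}_{p^k}(\gamma_1,\gamma_2)$. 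Collecting these gives the first identity. For the non-primitive count, I would observe that $\Vx=(\Vu,\Vw)$ is non-primitive iff every component of $\Vx$ is divisible by $p$, i.e.\ iff both $\Vu$ and $\Vw$ are non-primitive; therefore $\Phi$ restricts to a bijection between $\scpk(\MQ,t)$ and $\bigsqcup_{a+b\equiv t}\scpk(\MQ_1,a)\times\scpk(\MQ_2,b)$, and running the same computation with $\fcpk$ in place of $\fapk$ (again using Lemma~\ref{lem:SymbolInvariance} for the two factors and Definition~\ref{def:SymbolSplitSize} for the count of pairs) yields the second identity.

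The hard part here is essentially bookkeeping rather than a genuine obstacle: the only step that needs a moment's care is the equivalence ``$(\Vu,\Vw)$ is non-primitive iff both $\Vu$ and $\Vw$ are non-primitive,'' which is what makes $\Phi$ restrict correctly to the non-primitive sets. Everything else reduces to the additivity $\Vx'\MQ\Vx=\Vu'\MQ_1\Vu+\Vw'\MQ_2\Vw$ for a direct sum together with the already-established facts that the representation counts and the split sizes depend only on the relevant $p^k$-symbols.
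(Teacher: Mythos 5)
Your proof is correct and follows essentially the same route as the paper: decompose each representation of $t$ by the block-diagonal form into its two block components, sum over the ways of splitting $t$ as $a+b$, and then regroup by the $p^k$-symbols of $a$ and $b$ using Lemma~\ref{lem:SymbolInvariance} and Definition~\ref{def:SymbolSplitSize}, with the same observation that non-primitivity of $(\Vu,\Vw)$ is equivalent to non-primitivity of both $\Vu$ and $\Vw$. The only difference is presentational: you make the underlying bijection $\Phi$ explicit, whereas the paper writes the convolution sum $\sum_{a}\fapk(\MQ_1,a)\,\fapk(\MQ_2,t-a)$ directly.
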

\begin{proof}
Let $t \in \zpkz$
be such that $\sympk(t)=\sym$.
The formula for the total number of representations of $\sym$ by 
$\MQ$ over $\zpkz$ follows from the calculations below. 
\begin{align*}
\fapk(\MQ,\sym) &=\fapk(\MQ,t) 
= \sum_{a \in \zpkz} \fapk(\MQ_1,a)\cdot\fapk(\MQ_2,t-a) \\
&= \sum_{a \in \zpkz} \fapk(\MQ_1,\sympk(a))\cdot\fapk(\MQ_2,
	\sympk(t-a)) \\
&= \sum_{\gamma_1,\gamma_2 \in \ORD\times\SGN} 
\fS^t_{p^k}(\gamma_1,\gamma_2)\cdot
	\fapk(\MQ_1,\gamma_1) \cdot\fapk(\MQ_2,\gamma_2) \\
&= \sum_{\gamma_1,\gamma_2 \in \ORD\times\SGN} 
\fS^{\sym}_{p^k}(\gamma_1,\gamma_2)\cdot
	\fapk(\MQ_1,\gamma_1) \cdot\fapk(\MQ_2,\gamma_2) 
\end{align*}

The same
calculation works for the number of non-primitive representations
because a representation of $\sym$ by $\MQ$ is non-primitive iff
every component of the representation is non-primitive. 
\end{proof}

\subsection{Overview of the Counting/Sampling Algorithm}

Given $(\MQ^n,p,k,t)$ our 
counting algorithm for finding $\fapk(\MQ,t)$ is analyzed in detail in
Section \ref{sec:Count}. An overview of the algorithm is given below.
\begin{enumerate}
\addtolength{\itemsep}{-4pt}
\item Block diagonalize $\MQ$ over $\zpkz$ using Theorem 
\ref{thm:BlockDiagonal}. Let $\MD^n=\MD_1 \oplus \cdots \oplus \MD_m$
be the block diagonal form returned by the algorithm. Recall,
each $\MD_i$ is either Type I i.e., an integer, or Type II (only
when $p=2$).
\item For each symbol $\sym \in \ORD \times \SGN$ and $i \in [m]$,
calculate $\fapk(\MD_i, \sym)$. The case of prime~2 is handled 
separately and needs careful analysis for Type II blocks. (Sections
\ref{sec:Dim1P}, \ref{sec:Dim12}, and \ref{sec:Dim2II})
\item For each triple $\sym, \sym_1, \sym_2 \in \ORD\times\SGN$ compute the 
size of split classes (Section \ref{sec:SplitSize})
i.e., $\mathcal{S}^{\sym}_{p^k}(\sym_1,\sym_2)$. 
\item Compute 
$\fapk(\MD_1 \oplus \cdots \oplus \MD_i, \sym)$ for each 
$\sym \in \ORD\times\SGN$ and $i \in [m]$, using Lemma 
\ref{lem:Split}.
\item Output $\fapk(\MD,\sympk(t))$. 
\addtolength{\itemsep}{4pt}
\end{enumerate}

As mentioned in the introduction of the paper, this algorithm
can also be used to compute what mathematicians call the ``local density''
(see Section \ref{Sec:Density}).
Furthermore, this algorithm can be generalized to sample uniform representations
(details in Section \ref{Sec:Sample}). 
The following two theorems are the main contribution of this paper
(proved in Section \ref{Sec:SamplingTheorem}). 

\begin{theorem}\label{thm:Sample2}
Let $\MQ^n$ be an integral quadratic form, $k$ be a positive integer, 
and $t$ be an element of $\ztkz$. Then, there exists a deterministic 
polynomial time algorithm that performs
$O(n^{1+\omega}\log k+nk^3)$ ring operations over $\ztkz$ and
samples a uniform (primitive/non-primitive)
representation of $t$ by $\MQ$ over $\ztkz$, 
if such a representation exists.
\end{theorem}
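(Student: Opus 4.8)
We sketch the algorithm behind Theorem~\ref{thm:Sample2}; the details are carried out in Section~\ref{Sec:SamplingTheorem}, building on Section~\ref{sec:Count}. It has three phases: normalisation, counting, and recursive sampling.

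\emph{Normalisation and counting.} First run the algorithm of Theorem~\ref{thm:BlockDiagonal} with $p=2$ to obtain $\MU\in\sln(\ztkz)$ such that $\MD:=\MU'\MQ\MU\bmod 2^k$ is block diagonal, $\MD=\MD_1\oplus\cdots\oplus\MD_m$ with each $\MD_i$ of type I or type II; set $\mathtt{P}_i:=\MD_1\oplus\cdots\oplus\MD_i$, so $\mathtt{P}_m=\MD$. Letting $\MV$ be an inverse of $\MU$ modulo $2^k$, the proof of Lemma~\ref{lem:QFEquivalence} shows that $\Vx\mapsto\MV\Vx\bmod 2^k$ is a primitiveness-preserving bijection from the representations of $t$ by $\MD$ onto those by $\MQ$; so it suffices to sample a uniform representation of $t$ by $\MD$ and then apply $\MV$. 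Next, using the counting routine of the overview, precompute $\fapk(\mathtt{P}_i,\gamma)$, $\fbpk(\mathtt{P}_i,\gamma)$, $\fcpk(\mathtt{P}_i,\gamma)$ for every $i\in[m]$ and every symbol $\gamma\in\ORD\times\SGN$ --- inductively via Lemma~\ref{lem:Split} applied to $\mathtt{P}_i=\mathtt{P}_{i-1}\oplus\MD_i$ --- together with all split sizes $\fspkst(\gamma_1,\gamma_2)$ (well-defined by Lemma~\ref{lem:SplitClassInvariance}). If the count of $t$ for the requested variant (primitive / non-primitive / arbitrary) is $0$, report that no representation exists.

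\emph{Recursive sampler.} The plan is a routine $\mathrm{Samp}(i,\tau,\star)$, $\star\in\{\mathrm{all},\mathrm{prim},\mathrm{nonprim}\}$, returning a uniform $\star$-representation of $\tau$ by $\mathtt{P}_i$. For $i>1$, a representation $(\Vx_1,\Vx_2)$ of $\tau$ by $\mathtt{P}_i=\mathtt{P}_{i-1}\oplus\MD_i$ is grouped by the values $a:=\Vx_1'\mathtt{P}_{i-1}\Vx_1$, $b:=\Vx_2'\MD_i\Vx_2$ with $a+b\equiv\tau$: the number of pairs with given $(a,b)$ is $\fapk(\mathtt{P}_{i-1},a)\fapk(\MD_i,b)$, the pair is non-primitive iff both parts are, and primitive unless both parts are not. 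Regrouping $(a,b)$ by their symbols $(\gamma_1,\gamma_2)$, we (i) sample $(\gamma_1,\gamma_2)$ with probability proportional to $\fspkst(\gamma_1,\gamma_2)$ times, respectively, $\fapk(\mathtt{P}_{i-1},\gamma_1)\fapk(\MD_i,\gamma_2)$ / $\fcpk(\mathtt{P}_{i-1},\gamma_1)\fcpk(\MD_i,\gamma_2)$ / $\fapk(\mathtt{P}_{i-1},\gamma_1)\fapk(\MD_i,\gamma_2)-\fcpk(\mathtt{P}_{i-1},\gamma_1)\fcpk(\MD_i,\gamma_2)$ for $\star=\mathrm{all}/\mathrm{nonprim}/\mathrm{prim}$ (the normalisers being $\fapk(\mathtt{P}_i,\tau)$, $\fcpk(\mathtt{P}_i,\tau)$, $\fbpk(\mathtt{P}_i,\tau)$ by Lemma~\ref{lem:Split}); (ii) sample $b$ uniformly from the nonempty set $\{b:\sympk(b)=\gamma_2,\ \sympk(\tau-b)=\gamma_1\}$ of size $\fspkst(\gamma_1,\gamma_2)$ and put $a:=\tau-b$, using the explicit description of split classes of Section~\ref{sec:SplitSize}; (iii) recurse --- for $\mathrm{all}$/$\mathrm{nonprim}$ take $\Vx_1\leftarrow\mathrm{Samp}(i-1,a,\star)$ and $\Vx_2$ a uniform $\star$-representation of $b$ by $\MD_i$; for $\mathrm{prim}$ first toss a coin with outcomes proportional to $\fbpk(\mathtt{P}_{i-1},a)\fapk(\MD_i,b)$ and $\fcpk(\mathtt{P}_{i-1},a)\fbpk(\MD_i,b)$, taking $(\Vx_1,\Vx_2)$ to be (uniform primitive, uniform arbitrary) or (uniform non-primitive, uniform primitive) accordingly (these two exhaust the primitive pairs with values $(a,b)$ because $\fapk(\mathtt{P}_{i-1},a)\fapk(\MD_i,b)-\fcpk(\mathtt{P}_{i-1},a)\fcpk(\MD_i,b)=\fbpk(\mathtt{P}_{i-1},a)\fapk(\MD_i,b)+\fcpk(\mathtt{P}_{i-1},a)\fbpk(\MD_i,b)$, using $\fapk=\fbpk+\fcpk$ on each part). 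Multiplying the conditional probabilities over (i)--(iii) telescopes, so each $\star$-representation is output with probability exactly $1/\fapk(\mathtt{P}_i,\tau)$, resp. $1/\fcpk(\mathtt{P}_i,\tau)$ or $1/\fbpk(\mathtt{P}_i,\tau)$, i.e.\ uniformly.

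\emph{Base case and cost.} For $i=1$ we sample a uniform (primitive / non-primitive / arbitrary) representation of a symbol by a single block. For a type I block $(d)$ one wants a uniform $x\in\ztkz$ with $dx^2\equiv\tau\bmod 2^k$: writing $x=2^vy$ with $y$ odd, the order equation forces $v=\tfrac12(\ordt(\tau)-\ordt(d))$ (so a representation exists iff $\ordt(\tau)\ge\ordt(d)$, the difference is even, and the sign condition of Lemma~\ref{lem:Square} holds), after which the odd roots $y$ of $\copt(d)^{-1}\copt(\tau)$ modulo $2^{k-\ordt(\tau)}$ form an explicit coset of constant size (square roots modulo $2^m$ being extractable deterministically) and the bits of $x$ above position $k-\ordt(\tau)$ are free; so one draws a root and the free bits uniformly, with $v=0$ exactly in the primitive case. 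For a type II block (as in Definition~\ref{def:BlockDiagonal}, arising only because $p=2$) the analogous explicit per-symbol description of the solution set is worked out in Sections~\ref{sec:Dim12} and~\ref{sec:Dim2II}, and a uniform element is drawn from it. Block diagonalisation costs $O(n^{1+\omega}\log k)$ ring operations, the counting phase $O(nk^3)$ (for each of $O(n)$ prefixes and $O(k)$ target symbols, Lemma~\ref{lem:Split} sums over $O(k^2)$ symbol pairs), and the recursion has depth $m\le n$ with $O(k^2)$ work per level plus $\poly(k)$ per base call, which is absorbed; since $p=2$ every arithmetic step is deterministic, so the algorithm never fails when a representation exists. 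The main obstacle is the single-block type II base case: there the $p=2$ arithmetic genuinely departs from odd $p$ and earlier treatments are known to err, and one must both count and uniformly sample solutions of a binary form over $\ztkz$. A secondary subtlety is the $\mathrm{prim}$ branch of the recursion, whose per-$(a,b)$ count $\fapk^2-\fcpk^2$ does not factor and must be split into the two disjoint cases above to keep the sampler exact and rejection-free.
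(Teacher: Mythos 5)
Your proposal is correct and follows essentially the same route as the paper: block-diagonalize via Theorem~\ref{thm:BlockDiagonal}, precompute per-block and per-prefix counts together with split-class sizes, then recursively pick a symbol pair with probability proportional to its contribution, draw a uniform element of the split class, and recurse into the two factors, with the single-block lemmas as base cases and determinism for $p=2$ coming from the fact that the equal-orders split case never arises. The only (immaterial) differences are that you peel blocks off a prefix rather than splitting as $\MD_1\oplus\MD_{2+}$, and you decompose the primitive count as $\fbpk\fapk+\fcpk\fbpk$ (two cases) where the paper uses the equivalent three-way split $\fcpk\fbpk+\fbpk\fcpk+\fbpk\fbpk$.
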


\begin{theorem}\label{thm:SampleP}
Let $\MQ^n$ be an integral quadratic form, $p$ be an odd prime, 
$k$ be a positive integer, $t$ be an element of $\zpkz$.
Then, there is a 
polynomial time algorithm Las Vegas algorithm
that performs
$O(n^{1+\omega}\log k+nk^3+n\log p)$ ring operations over 
$\zpkz$ and fails with constant probability (say, at most $\frac1{3}$).
Otherwise, the algorithm
outputs a uniform  (primitive/non-primitive)
 $p^k$-representation of $t$ by $\MQ$,
if such a representation exists.
\end{theorem}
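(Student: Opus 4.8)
The plan is to bootstrap the sampler from the counting procedure outlined above, running its recursion ``in reverse'' and relying on the multiplicativity in Lemma~\ref{lem:Split}. The first move is to pass to a diagonal form: by Theorem~\ref{thm:BlockDiagonal} (the odd-$p$ case) compute, in $O(n^{1+\omega}\log k)$ ring operations, a matrix $\MU\in\sln(\zpkz)$ for which $\MD:=\MU'\MQ\MU\bmod p^k$ is diagonal, say $\MD=d_1x_1^2+\cdots+d_nx_n^2$. The proof of Lemma~\ref{lem:QFEquivalence} exhibits the bijection $\Vy\mapsto\MU\Vy\bmod p^k$ from the $p^k$-representations of $t$ by $\MD$ onto those of $t$ by $\MQ$, and this bijection preserves primitiveness; so it suffices to sample a uniform representation of $t$ by $\MD$ of the requested primitive/non-primitive flavour and then left-multiply by $\MU$.

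Next I would precompute the tables the recursion consumes. For each block $d_i$ and each symbol $\gamma\in\ORD\times\SGN$, compute the counts $\fapk(d_i,\gamma)$, $\fbpk(d_i,\gamma)$, $\fcpk(d_i,\gamma)$ from the single-variable analysis (Section~\ref{sec:Dim1P}); compute the split sizes $\fS^{\gamma}_{p^k}(\gamma_1,\gamma_2)$ for all triples, which are well defined by Lemma~\ref{lem:SplitClassInvariance} (Section~\ref{sec:SplitSize}); and, by a dynamic program over the prefixes $\MD^{(i)}:=d_1x_1^2+\cdots+d_ix_i^2$ that applies Lemma~\ref{lem:Split} to $\MD^{(i)}=\MD^{(i-1)}\oplus d_i$, fill in $\fapk(\MD^{(i)},\gamma)$ and $\fcpk(\MD^{(i)},\gamma)$ (hence $\fbpk=\fapk-\fcpk$) for all $i$ and $\gamma$. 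Each of the $O(nk)$ prefix entries is a sum over $O(k^2)$ symbol pairs, which is the origin of the $O(nk^3)$ term. If the relevant count for $t$ by $\MD$ is zero, report that no representation exists.

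The sampler then peels off blocks one at a time. Facing a prefix $\MD^{(i)}=\MD^{(i-1)}\oplus d_i$ and a target $t$ with $\sym=\sympk(t)$, it (i)~draws a symbol pair $(\gamma_1,\gamma_2)$ with probability proportional to $\fS^{\sym}_{p^k}(\gamma_1,\gamma_2)\,\fapk(\MD^{(i-1)},\gamma_1)\,\fapk(d_i,\gamma_2)$; (ii)~draws $(a,b)$ uniformly from the $\fS^{\sym}_{p^k}(\gamma_1,\gamma_2)$ pairs of Definition~\ref{def:SymbolSplitSize} with those symbols and $a+b\equiv t$; (iii)~recurses on $(\MD^{(i-1)},a)$ and invokes the base case below on $(d_i,b)$, then concatenates. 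Uniformity is precisely the computation in the proof of Lemma~\ref{lem:Split}: the probability of emitting a fixed representation equals the probability of choosing the value-split it induces, times the reciprocal of the number of representations of $a$ by $\MD^{(i-1)}$, times the reciprocal of the number of representations of $b$ by $d_i$ --- and, using Lemma~\ref{lem:SymbolInvariance} to replace each $a$ (resp.\ $b$) by its symbol, this telescopes to $1/\fapk(\MD^{(i)},t)$. The non-primitive variant is obtained by running the identical scheme with $\fcpk$ in place of $\fapk$ throughout, which is legitimate since a representation of a direct sum is non-primitive iff both summands are represented non-primitively; the primitive variant replaces step~(iii) by a choice between ``primitive representation of $a$ with arbitrary representation of $b$'', weighted $\fbpk(\MD^{(i-1)},\gamma_1)\fapk(d_i,\gamma_2)$, and ``non-primitive representation of $a$ with primitive representation of $b$'', weighted $\fcpk(\MD^{(i-1)},\gamma_1)\fbpk(d_i,\gamma_2)$.

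Finally, the base case --- sampling a uniform $x\in\zpkz$ with $dx^2\equiv b\pmod{p^k}$ for a single entry $d$ --- is the only place randomness enters. Pulling out the appropriate power of $p$ and applying Lemma~\ref{lem:Square}, it reduces to sampling a uniform square root of a given unit modulo a power of $p$, and by Hensel lifting to one square-root extraction modulo $p$ (the two roots lift uniquely, after which the admissible higher $p$-adic digits are chosen uniformly). A square root modulo $p$ requires a quadratic non-residue modulo $p$; exactly half of $(\zpz)^\times$ works by Lemma~\ref{lem:QR}, so $O(1)$ random trials find one with failure probability at most $\frac13$, and a single such non-residue then powers all $n$ extractions, each an extra $O(\log p)$ ring operations. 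The cost is $O(n^{1+\omega}\log k)$ for diagonalization, $O(nk^3)$ for the prefix-count dynamic program (which dominates the split-size and single-block tables), $O(nk^2)$ for the $n$ stages of the recursion, and $O(n\log p)$ for the square roots, matching the claim. I expect the main difficulty to lie not in any single step but in the uniformity bookkeeping --- checking that the hypotheses of Lemma~\ref{lem:Split} genuinely apply at every recursion level so that the local uniform choices multiply to a global uniform distribution --- together with sampling a uniform element of the split class of Definition~\ref{def:SymbolSplitSize} within the time budget, which needs the explicit description of these classes from Section~\ref{sec:SplitSize} rather than naive rejection sampling.
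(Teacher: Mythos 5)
Your proposal follows the paper's own construction essentially verbatim: block-diagonalize via Theorem~\ref{thm:BlockDiagonal}, precompute the per-block counts, the split sizes, and the prefix counts by the dynamic program of Theorem~\ref{thm:RepresentTbyQModPk}, then recursively choose a symbol pair with probability proportional to $\fS^{\gamma}_{p^k}(\gamma_1,\gamma_2)$ times the two representation counts, draw $(a,b)$ uniformly from the corresponding split class (Lemmas~\ref{lem:SampleFromSymbol} and~\ref{lem:SampleModP}), and bottom out in square-root extraction plus Hensel lifting for single diagonal entries. The only cosmetic differences are that you peel blocks from the end of the diagonal rather than the front, and your two-case weighting for the primitive variant is an algebraically equivalent regrouping of the paper's three cases in Equation~\ref{eq:Probability}.
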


In other words, the algorithm is able to output a uniform representation,
a representation which is uniform among the primitive ones, and a 
representation which is uniform among the non-primitive ones.

\section{Squares over Quotient Rings}\label{PS:sec:squares}

To understand quadratic forms and their equivalence, we need some
elementary results from number theory and algebra. In particular,
we want to know when an integer $t$ is a square over the ring $\zpkz$.

The prime~2 creates some technical complications. Thus, we chose to
spilt the results in two parts; for odd primes, and for the even prime.
We remark that everything in this section is well-known. 


\subsection{Squares Modulo $p^k$; $p$ odd}
A positive
integer $t$ is a quadratic residue modulo $p^k$ iff $t$ is a 
quadratic residue modulo $p$ (see Theorem $2.30$, \cite{Shoup09}).
If $t$ is a quadratic residue modulo $p$ then there are two 
solutions to the equation $x^2\equiv t \bmod{p}$. Furthermore, 
there is a Las-Vegas algorithm due to Cippola 
and Lehmer which performs $O(\log p)$ ring operations over $\zpz$ and 
computes the square root of a quadratic residue $t$ over $\zpz$.

Given a positive integer $t$ such that $t$ is a quadratic residue 
modulo $p$ both solutions to $x^2\equiv t \bmod{p^k}$ can be 
found using Hensel's Lemma. We give a sketch of the proof here and
a full proof can be found on page~174, \cite{BS96}.

\begin{lemma}\label{lem:HenselP}
Let $p$ be an odd prime, $k$ a positive integer, and $t$ be a quadratic residue
modulo $p$. Then, there is a randomized algorithm that performs 
$O(\log k+\log p)$ ring operations and with constant probability outputs 
both solutions of $x^2\equiv t \bmod{p^k}$.
\end{lemma}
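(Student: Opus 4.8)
The plan is to combine a Tonelli--Shanks/Cipolla style square-root computation modulo $p$ with Hensel lifting to promote one root modulo $p$ to a root modulo $p^k$, and then obtain the second root by negation. Concretely, first I would invoke the Cipolla--Lehmer algorithm (cited above) to produce, in $O(\log p)$ ring operations over $\zpz$, an $x_0$ with $x_0^2 \equiv t \pmod p$; since $t$ is a quadratic residue mod $p$ this succeeds with constant probability (the only randomness in the whole procedure comes from finding a non-residue, or from Cipolla's choice of a field element of norm $t$). Because $p$ is odd and $\gcd(t,p)=1$, we have $2x_0 \not\equiv 0 \pmod p$, so $f(x)=x^2-t$ has a simple root at $x_0$ modulo $p$ and Hensel's lemma applies.

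Next I would carry out the lifting. The cleanest route is the quadratic (Newton-iteration) form of Hensel: given $x_i$ with $x_i^2 \equiv t \pmod{p^{2^i}}$, set
\[
x_{i+1} := x_i - (x_i^2 - t)\,(2x_i)^{-1} \pmod{p^{2^{i+1}}},
\]
where the inverse of $2x_i$ is taken modulo $p^{2^{i+1}}$ (it exists since $2x_i$ is a unit). A standard one-line calculation shows $x_{i+1}^2 \equiv t \pmod{p^{2^{i+1}}}$, so after $\lceil \log_2 k\rceil$ iterations we have a solution modulo $p^{2^{\lceil \log_2 k\rceil}}$, which we reduce mod $p^k$. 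Each iteration costs $O(1)$ ring operations (one inversion, a couple of multiplications), giving $O(\log k)$ ring operations for the lifting; adding the $O(\log p)$ for the initial square root yields the claimed $O(\log k + \log p)$ bound. Finally, output $x$ and $p^k - x$; these are the two solutions because over $\zpkz$ with $p$ odd, $x^2 \equiv y^2$ forces $p^k \mid (x-y)(x+y)$ and $x+y, x-y$ cannot both be divisible by $p$ (their sum $2x$ is a unit), so all of $p^k$ divides one of them, i.e. $y \equiv \pm x$.

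The correctness argument has essentially no obstacle — Hensel lifting of a simple root is textbook, and the paper explicitly allows citing the full proof in \cite{BS96}. The one point that needs a sentence of care is the probabilistic guarantee: the algorithm is Las Vegas because the square-root step mod $p$ can fail (e.g.\ Cipolla picks an unlucky element), and I would simply note that it succeeds with probability $\ge 1/2$ and that the deterministic lifting never fails, so the whole procedure is a randomized algorithm with constant success probability in the sense defined in the Preliminaries. A secondary bookkeeping point is to confirm that all arithmetic stays within $\poly$-size: the moduli $p^{2^i}$ are bounded by $p^{2k}$, so ring elements have $O(k\log p)$ bits, consistent with the $O(1)$-per-ring-operation cost model. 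Assembling these pieces gives the lemma.
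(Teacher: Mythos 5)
Your proposal is correct and follows essentially the same route as the paper: a Cipolla--Lehmer square root modulo $p$ followed by precision-doubling Hensel/Newton lifting in $O(\log k)$ steps (your update $x_{i+1}=x_i-(x_i^2-t)(2x_i)^{-1}$ is algebraically identical to the paper's $a\mapsto a+p^e b$ with $b=\frac{(t-a^2)/p^e}{2a}$). Your extra observation that the only solutions are $\pm x$ because $2x$ is a unit is a small but welcome addition the paper leaves implicit.
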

\begin{proof}
We use the Las Vegas algorithm by Cippola and Lehmer to first find
the solutions of $x^2\equiv t \bmod{p}$. We now show how to ``lift''
this solution from $\zpz$ to $\zpkz$.
We do this incrementally by lifting solutions of 
$x^2 \equiv t \bmod{p^e}$ to
solutions of $x^2\equiv t \bmod{p^{2e}}$. 

Let $a$ be a solution of $x^2\equiv t \bmod{p^e}$ i.e., 
$a^2 \equiv t \bmod{p^e}$. If
$b := \frac{(t-a^2)/p^e}{2a} \modpk$, then
\[
(a+p^eb)^2 \equiv a^2+2abp^e \equiv t \pmod{p^{2e}} \;.
\]
The computation of $b$ takes $O(1)$ ring operations and hence to
lift a solution modulo $p$ to a solution modulo $p^k$, we need to
perform $O(\log k)$ ring operations.
\end{proof}

\subsection{Squares Modulo $2^k$}

There is only one non-zero element in $\ztz$ i.e., $1$ and by definition,
it is a quadratic residue. This is the only quadratic residue 
in $\bbZ/4\bbZ$ and $\bbZ/8\bbZ$. 



%

Let $t \in \ztkz$ be an integer. Then, $t$ is a quadratic residue modulo
$2^k$ iff $t\equiv 1 \bmod 8$ (see Theorem~2, pp~49, \cite{BS86}).
A similar result as in Lemma \ref{lem:HenselP} can be shown. The ``lifting''
of solutions modulo $p^e$ to solutions modulo $p^{2e}$ does not work
in this case because~2 does not have an inverse (see the proof of Lemma
\ref{lem:HenselP}). 

\begin{lemma}\label{lem:Hensel2}
Let $k$ be a positive integer, and $t$ be an element of $\ztkz$ such that
$t\equiv 1\pmod{8}$. Then, $x^2\equiv t \pmod{2^k}$ has one solution
for $k=1$, two for $k=2$ and four otherwise. Additionally, all solutions
can be found in $O(k)$ ring operations.
\end{lemma}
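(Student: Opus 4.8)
The claim has two parts: a count of the number of solutions of $x^2 \equiv t \pmod{2^k}$ when $t \equiv 1 \pmod 8$, and an algorithm to produce them in $O(k)$ ring operations. The plan is to handle the small cases $k=1,2,3$ directly, and for $k \geq 3$ to lift solutions incrementally from $x^2 \equiv t \pmod{2^e}$ to $x^2 \equiv t \pmod{2^{e+1}}$, one bit at a time (in contrast to the odd-$p$ doubling step of Lemma~\ref{lem:HenselP}, which fails here because $2$ is not invertible). The base cases are immediate: modulo $8$ the only quadratic residue is $1$, and $x^2 \equiv 1 \pmod 8$ has the four solutions $x \in \{1,3,5,7\}$; for $k=1$ the equation $x^2 \equiv 1 \pmod 2$ has the unique solution $x=1$; for $k=2$, $x^2 \equiv 1 \pmod 4$ has the two solutions $x \in \{1,3\}$.

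**The lifting step.** Suppose $k \geq 3$ and $a$ is a solution of $x^2 \equiv t \pmod{2^e}$ with $3 \leq e < k$; note $a$ is necessarily odd. I want to determine which lifts $a$ or $a + 2^{e-1}$ satisfy $x^2 \equiv t \pmod{2^{e+1}}$. Since $(a+2^{e-1})^2 = a^2 + 2^e a + 2^{2e-2}$ and $2e-2 \geq e+1$ (because $e \geq 3$), we get $(a + 2^{e-1})^2 \equiv a^2 + 2^e a \equiv a^2 + 2^e \pmod{2^{e+1}}$ (using that $a$ is odd). So exactly one of $a^2$ and $(a+2^{e-1})^2$ is congruent to $t$ modulo $2^{e+1}$: write $a^2 = t + c \cdot 2^e \pmod{2^{e+1}}$ with $c \in \{0,1\}$; if $c = 0$ keep $a$, if $c=1$ replace $a$ by $a + 2^{e-1}$. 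This is $O(1)$ ring operations per step, so starting from one of the four solutions modulo $8$ and iterating $k-3$ times gives a solution modulo $2^k$ in $O(k)$ ring operations.

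**Counting the solutions.** For $k \geq 3$ I claim there are exactly four solutions modulo $2^k$. One direction: each of the four residues $\{1,3,5,7\}$ modulo $8$ lifts, by the argument above, to at least one solution modulo $2^k$ (indeed to exactly one, since at each step the lift is forced), and these four are distinct modulo $8$ hence distinct modulo $2^k$. The other direction: if $x^2 \equiv t \equiv y^2 \pmod{2^k}$ then $2^k \mid (x-y)(x+y)$; since $x,y$ are both odd, $x-y$ and $x+y$ are both even and their gcd is $2$, so one of them is divisible by $2^{k-1}$, giving $x \equiv \pm y$ or $x \equiv \pm y + 2^{k-1} \pmod{2^k}$ — at most four possibilities. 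Hence exactly four solutions, and to output all of them one runs the lifting procedure from each of the four seeds modulo $8$ (still $O(k)$ ring operations total).

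**Main obstacle.** The only subtle point is the $e \geq 3$ hypothesis in the lifting step: the inequality $2e - 2 \geq e+1$ is what makes the cross term $2^e a$ the dominant correction and forces a unique choice of lift; this is exactly why the four solutions modulo $8$ are the right starting point and why the statement distinguishes $k \leq 2$ from $k \geq 3$. I would make sure the bookkeeping of which bit to flip (the $c \in \{0,1\}$ extraction) is stated cleanly, as that is the heart of the $O(k)$ algorithm.
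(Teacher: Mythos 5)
Your base cases, your single lifting step (exactly one of $a$ and $a+2^{e-1}$ squares to $t$ modulo $2^{e+1}$), and your ``at most four'' bound (via $\ordt(x-y)+\ordt(x+y)\geq k$ together with $\min\{\ordt(x-y),\ordt(x+y)\}=1$ for odd $x,y$) are all correct. The gap is in the counting/enumeration step: you claim the four seeds $\{1,3,5,7\}$ modulo $8$ lift to four solutions modulo $2^k$ that are ``distinct modulo $8$ hence distinct modulo $2^k$.'' But at the very first lift ($e=3$) your procedure may replace $a$ by $a+4$, which changes the residue modulo $8$, so the output is not congruent to its seed modulo $8$, and the seed-to-output map is neither injective nor surjective onto the solution set. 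Concretely, take $t=1$, $k=4$: the solutions of $x^2\equiv 1\pmod{16}$ are $\{1,7,9,15\}$ (only the residues $1$ and $7$ modulo $8$ occur, each twice); your procedure sends the seeds $1,3,5,7$ to $1,7,9,7$ respectively, producing a duplicate and never outputting $15$. So both the ``at least four'' half of the count and the claim that running the lift from all four seeds enumerates every solution fail as written. (The underlying structural point is that the two preimages $a$ and $a+2^e$ of a solution $a$ modulo $2^e$ satisfy $(a+2^e)^2\equiv a^2\pmod{2^{e+1}}$, so each solution modulo $2^e$ lifts to either two solutions or none; exactly two of the four solutions modulo $2^e$ survive, which is why tracking four independent chains collapses.)

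The repair is what the paper does: lift a \emph{single} solution $c$ from modulus $2^k$ to $2^{k+1}$ (your forced-bit step does exactly this), and then obtain the other three solutions from the symmetries of the solution set, namely $-c$, $2^k+c$, and $2^k-c$ modulo $2^{k+1}$; these four are pairwise distinct for $k\geq 3$ because $c$ is odd. Combined with your correct upper bound of four, this yields the exact count, and it produces all four solutions from one lifting chain in $O(k)$ ring operations.
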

\begin{proof}
The set of solutions is $\{1\}$ in $\ztz$, $\{1,3\}$ in $\bbZ/4\bbZ$ 
and $\{1,3,5,7\}$ in $\bbZ/8\bbZ$.
For larger $k$, we prove the lemma by induction. 

Let us assume that the lemma is true for $k\geq 3$ and let $b$ be a square root
of $t$ modulo $2^k$. Then, $b^2 \equiv t \modtk$ and $2^k|(t-b^2)$. By 
Definition \ref{def:pexpansion}, the $k$'th digit of $t-b^2$ is 
$d=(t-b^2)/2^k \bmod 2$. Recall, $x_{(k)}$ is the $k$'th digit in the 
$p$-expansion of $x$ (see Definition \ref{def:pexpansion}, $p=2$ here). Let
$c=b+2^{k-1}d$. From $b\equiv 1 \pmod 2$, we conclude that 
$2^kbd \equiv 2^kd \pmod{2^{k+1}}$. For $k\geq 3$, it follows that,
\begin{align*}
c^2&\equiv b^2+2^kbd \equiv 
(\sum_{i=0}^{k-1}t_{(i)}2^i + (b^2)_{(k)}2^{k})
+2^kd \equiv t\pmod{2^{k+1}}
\end{align*}
Additionally, $-c,2^k+c,2^k-c$ are all distinct
solutions of $x^2\equiv t \pmod{2^{k+1}}$. 
The fact that these are the only possible solutions is argued as follows.
If $x$ is a solution modulo $2^{k+1}$ 
then it is also a solution modulo $2^k$. So, there are only eight 
choices for $x$ in $\bbZ/2^{k+1}\bbZ$ given four solutions modulo $\ztkz$.
Only four of these work.
The proof is constructive i.e.,
given all solutions modulo $2^{k}$, all solutions modulo 
$2^{k+1}$ can be found in constant number of ring operations.
\end{proof}

\section{Counting Representations}

In this section, we count the number of solutions of the equation
$\Vx'\MQ\Vx \equiv t \modpk$. 

\subsection{Dimension = 1, Odd Prime}\label{sec:Dim1P}

The following lemma gives the necessary and sufficient conditions for
an integral quadratic form $\MQ^{n=1}$ to represent $t$ over $\zpkz$,
when $k > \ordp(t)$.

\begin{lemma}\label{lem:RepresentTModPk1D}
If $\MQ, t$ be integers, $p$ be an odd prime and $k>\ordp(t)$.
Then, $t$ can be represented by $\MQ$ over $\zpkz$ if and only 
if $\ordp(t)-\ordp(\MQ)$ is even, $\geq 0$, and 
$\legendre{\copp(\MQ)\copp(t)}{p}=1$. 
\end{lemma}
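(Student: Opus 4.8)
The statement is a one-variable representation criterion: $\MQ x^2 \equiv t \pmod{p^k}$ is solvable (with $k > \ordp(t)$) iff $\ordp(t) - \ordp(\MQ) \ge 0$ is even and $\legendre{\copp(\MQ)\copp(t)}{p} = 1$. I would reduce this to the square-criterion already established (Lemma~\ref{lem:Square}) by peeling off the $p$-valuations of $\MQ$ and $t$.

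The plan is as follows. Write $\MQ = p^{\alpha}u$ and $t = p^{\beta}w$ with $u = \copp(\MQ)$, $w = \copp(t)$ units mod $p$, $\alpha = \ordp(\MQ)$, $\beta = \ordp(t)$, and recall $\beta < k$ by hypothesis. First I would dispose of trivial cases: if $\MQ \equiv 0 \pmod{p^k}$ then only $t \equiv 0$ is represented, and one checks this is consistent with the stated conditions (here $\ordp(\MQ) \ge k > \ordp(t)$ would make $\ordp(t)-\ordp(\MQ) < 0$, so the condition correctly fails unless $t\equiv 0$ too, which the hypothesis $k>\ordp(t)$ excludes; so assume $\alpha < k$ and $\MQ$ a genuine unit times $p^\alpha$). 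Next, observe that $\MQ x^2 \equiv t \pmod{p^k}$ is solvable iff there is $x$ with $p^{\alpha} u x^2 \equiv p^{\beta} w \pmod{p^k}$. If $\beta < \alpha$ this forces $p^\alpha \mid p^\beta w$, impossible since $w$ is a unit, so no solution — matching the requirement $\ordp(t)-\ordp(\MQ) \ge 0$. So assume $\beta \ge \alpha$; then the equation is equivalent to $u x^2 \equiv p^{\beta - \alpha} w \pmod{p^{k-\alpha}}$. Multiplying through by $u^{-1} \pmod{p^{k-\alpha}}$ (valid since $u$ is a unit), solvability is equivalent to $p^{\beta-\alpha}(u^{-1}w)$ being a square modulo $p^{k-\alpha}$.

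Now I would apply Lemma~\ref{lem:Square} to $s := p^{\beta - \alpha} u^{-1} w$ over the ring $\zRz{p^{k-\alpha}}$: it is a square there iff $\ordp(s)$ is even and $\sgnp(s) = 1$. Since $u^{-1}w$ is a unit, $\ordp(s) = \beta - \alpha = \ordp(t) - \ordp(\MQ)$, so the parity condition is exactly "$\ordp(t)-\ordp(\MQ)$ even." For the sign, $\sgnp(s) = \legendre{\copp(s)}{p} = \legendre{u^{-1}w}{p} = \legendre{u}{p}^{-1}\legendre{w}{p} = \legendre{uw}{p} = \legendre{\copp(\MQ)\copp(t)}{p}$, using multiplicativity of the Legendre symbol and $\legendre{u}{p}^{-1} = \legendre{u}{p}$. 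This matches the stated sign condition. One small point to check carefully: Lemma~\ref{lem:Square} is stated for $t \in \zRz{p^k}$ nonzero with the square taken mod $p^k$; here the modulus is $p^{k-\alpha}$ and we need $\ordp(s) < k - \alpha$, i.e. $\beta - \alpha < k - \alpha$, i.e. $\beta < k$, which is exactly the hypothesis $k > \ordp(t)$ — so the lemma applies cleanly.

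The main obstacle is bookkeeping rather than depth: one must be careful that "solvable mod $p^k$" correctly descends to "solvable mod $p^{k-\alpha}$" after cancelling $p^\alpha$ (this uses that the $x$ ranges over all of $\zRz{p^k}$, which surjects onto $\zRz{p^{k-\alpha}}$ under reduction, and conversely any $x$ mod $p^{k-\alpha}$ lifts; I would state this equivalence explicitly), and that the non-negativity requirement $\ordp(t)-\ordp(\MQ) \ge 0$ is precisely the obstruction in the $\beta < \alpha$ case. Everything else is a direct translation through Lemma~\ref{lem:Square} and the multiplicativity of $\legendre{\cdot}{p}$.
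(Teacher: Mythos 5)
Your proposal is correct and follows essentially the same route as the paper: both arguments peel off the $p$-valuations of $\MQ$ and $t$ and reduce the question to the square criterion of Lemma~\ref{lem:Square} (the paper invokes it through Lemma~\ref{lem:IntegerSymbolIsInvariant} and writes down the explicit solution $x=p^{(\ordp(t)-\ordp(\MQ))/2}u$, while you package the same content as a single chain of equivalences). Your extra care about the degenerate case $\ordp(\MQ)\geq k$ and about why the modulus may be lowered from $p^k$ to $p^{k-\ordp(\MQ)}$ is sound and, if anything, slightly more explicit than the paper's treatment.
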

\begin{proof}
Suppose that $t$ can be represented by $\MQ$ over $\zpkz$ and $k>\ordp(t)$.
Then, there exists integers $x$ and $a$ such that $x^2\MQ=t+ap^k$. But then,
\begin{align*}
\ordp(t) = \ordp(t+ap^k) =  \ordp(x^2\MQ) = 2\ordp(x)+\ordp(\MQ) \;,\\
\legendre{\copp(t)}{p} = \legendre{\copp(t+ap^k)}{p} 
= \legendre{\copp(x^2\MQ)}{p} = \legendre{\copp(\MQ)}{p} \;.
\end{align*}

Conversely, suppose that $\ordp(t)-\ordp(\MQ)$ is even, $\geq 0$ and 
$\legendre{\copp(\MQ)\copp(t)}{p}=1$. Then, by Lemma 
\ref{lem:IntegerSymbolIsInvariant}, there exists an integer $u$ such 
that $\copp(\MQ)u^2 \equiv \copp(t) \pmod{p^k}$. But then, multiplying
this equation by $p^{\ordp(t)}$, we conclude that 
$x=p^{\frac{\ordp(t)-\ordp(\MQ)}{2}}u$ is a $p^k$-representation of $t$; 
as follows.
\[
(p^{\frac{\ordp(t)-\ordp(\MQ)}{2}}u)^2\MQ = p^{\ordp(t)}\copp(\MQ)u^2 
\equiv p^{\ordp(t)}\copp(t) \pmod{p^k}\;.
\]
\end{proof}

\begin{lemma}\label{lem:CountModPk1D}
Let $\MQ, t, k$ be integers, and $p$ be an odd prime. Then, 
Algorithm \ref{alg:CountModPk1D} performs $O(\log k+\log p)$ ring operations 
and outputs the 
number of primitive and non-primitive $p^k$-representations of $t$ in $\MQ$.
\end{lemma}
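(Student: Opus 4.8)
The plan is to count $p^k$-representations of $t$ by the one-dimensional form $\MQ$ by combining the classification from Lemma~\ref{lem:RepresentTModPk1D} (which tells us \emph{when} a representation exists) with an enumeration of the solutions grouped by the $p$-order of the solution vector $x$. Write $\alpha = \ordp(t \modpk)$, which is computed in $O(\log k)$ ring operations, and $\beta = \ordp(\MQ \modpk)$. First the algorithm handles the degenerate cases: if $t \equiv 0 \modpk$ then every $x$ with $2\ordp(x)+\beta \geq k$ is a solution; if $\MQ \equiv 0 \modpk$ then the only solutions are the non-primitive ones with $\ordp(x)$ large enough, and so on. In the generic case $\alpha < k$, the necessary and sufficient condition of Lemma~\ref{lem:RepresentTModPk1D} (parity of $\alpha - \beta$, non-negativity, and $\legendre{\copp(\MQ)\copp(t)}{p}=1$) is checked in $O(\log p)$ ring operations via Euler's formula; if it fails, output $0$ for both counts.

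When a representation exists, I would partition the solution set by $j := \ordp(x)$. Every solution $x$ of $x^2\MQ \equiv t \modpk$ satisfies $2j + \beta = \alpha$, so $j$ is forced to the single value $j_0 = (\alpha-\beta)/2$; this already shows the representation is primitive iff $j_0 = 0$, i.e.\ iff $\alpha = \beta$, so one of the two counts ($\fbpk$ or $\fcpk$) will be zero and we only need to count the nonzero one. Writing $x = p^{j_0} u$ with $u$ a unit mod $p^{k-2j_0}$ (lifted arbitrarily to $\zpkz$), the equation $x^2 \MQ \equiv t \modpk$ becomes $u^2 \copp(\MQ) \equiv \copp(t) \pmod{p^{k-\alpha}}$, which by Lemma~\ref{lem:Square} (applied exactly as in the proof of Lemma~\ref{lem:IntegerSymbolIsInvariant}) has precisely two solutions $u$ modulo $p^{k-\alpha}$ whenever the sign condition holds, and these can be produced by Lemma~\ref{lem:HenselP}. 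The number of lifts of each such $u$ from $\zRz{p^{k-\alpha}}$ to a valid $x \in \zpkz$ is then a clean counting exercise: $x$ ranges over $p^{j_0} u + p^{j_0}\cdot p^{k-\alpha}\bbZ$ intersected with $\{0,\dots,p^k-1\}$, giving $p^{k - j_0 - (k-\alpha)} = p^{\alpha - j_0} = p^{j_0 + \beta}$ values per residue class, hence $2p^{j_0+\beta}$ solutions in total — with the usual edge corrections when $p=2$, since Lemma~\ref{lem:Hensel2} gives $1$, $2$, or $4$ square roots rather than always $2$. (For an odd prime, as in this subsection, it is always exactly $2$ when $k-\alpha \geq 1$.)

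The main obstacle, and the one that requires the most care in the write-up, is the boundary bookkeeping: making sure the formula $2p^{j_0+\beta}$ (and its primitive/non-primitive split) is correct in the corner cases where $\alpha$ is close to $k$, where $\copp(t)$ or $\copp(\MQ)$ is only defined modulo a small power of $p$, and — although it is deferred to later sections — flagging that the same enumeration with the $\{1,2,4\}$ root count of Lemma~\ref{lem:Hensel2} must replace the constant $2$ when $p=2$. None of this is deep, but it is exactly the kind of place where Minkowski's formulae went wrong for $p=2$, so the proof should state the case split explicitly. The runtime claim is then immediate: computing $\alpha$ and $\beta$ costs $O(\log k)$ ring operations, the Legendre-symbol check costs $O(\log p)$, invoking the square-root algorithm of Lemma~\ref{lem:HenselP} costs $O(\log k + \log p)$, and assembling the final counts from $\alpha,\beta,j_0$ is $O(1)$ arithmetic; summing gives $O(\log k + \log p)$ as claimed.
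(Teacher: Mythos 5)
Your proposal is correct and follows essentially the same route as the paper: split on whether $\ordp(t)\geq k$, invoke Lemma~\ref{lem:RepresentTModPk1D} for existence, observe that $\ordp(x)$ is forced to equal $(\ordp(t)-\ordp(\MQ))/2$, reduce to $u^2\equiv\copp(t)\copp(\MQ)^{-1}\pmod{p^{k-\ordp(t)}}$ with exactly two roots via Lemma~\ref{lem:HenselP}, and multiply by the $p^{(\ordp(t)+\ordp(\MQ))/2}$ free lifts, with primitivity holding iff $\ordp(t)=\ordp(\MQ)$. One slip to fix in the degenerate case $t\equiv 0\pmod{p^k}$: your gloss is reversed --- when $\MQ\equiv 0\pmod{p^k}$ \emph{every} $x$ (primitive or not) is a solution, and it is when $\MQ\not\equiv 0\pmod{p^k}$ that all solutions are non-primitive; your governing condition $2\ordp(x)+\beta\geq k$ is correct and yields the right counts once the two sub-cases are read off from it properly.
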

\begin{proof}
We want to count the number of primitive and non-primitive 
$x \in \zpkz$ such that $x^2\MQ\equiv t \pmod{p^k}$. We 
distinguish between the following cases.

\paragraph{$\mathbf{\ordp(t) \geq k}$.} In this case, $t \bmod{p^k}$
is $0$. Thus,
\[
x^2p^{\ordp(\MQ)}\equiv 0 \pmod{p^k}
\iff p^{k-\ordp(\MQ)}|x^2 
\iff p^{\lceil \frac{k-\ordp(\MQ)}{2}\rceil}|x
\]
There are primitive representations iff $\ordp(\MQ) \geq k$. But then,
every $x \in \zpkz$ is a representation. 
Recall, the $p$-expansion of $x \in \zpkz$ (Definition \ref{def:pexpansion}).
By definition, $x$ is primitive iff $x_{(0)} \neq 0$. The rest of the 
$k-1$ digits can be chosen freely. The number of
primitive and non-primitive representations in the case of 
$\ordp(\MQ) \geq k$ is then $(p-1)p^{k-1}$ and $p^{k-1}$ respectively.
Otherwise, if $\ordp(\MQ) < k$ then there are no primitive representations
and the number of non-primitive representations is 
$p^{k-\lceil \frac{k-\ordp(\MQ)}{2}\rceil}$. This completes the case
of $\ordp(t) \geq k$.

\paragraph{$\mathbf{\ordp(t) < k}$.} It follows from
Lemma \ref{lem:RepresentTModPk1D} that $\MQ$ represents $t$ over $\zpkz$
iff $\ordp(t)-\ordp(\MQ) \geq 0$, is even and 
$\legendre{\copp(t)\copp(\MQ)}{p}=1$. But then,
\begin{align*}
x^2\MQ \equiv &t \pmod{p^k} \iff 
x^2 \equiv p^{\ordp(t)-\ordp(\MQ)}\copp(t)\copp(\MQ)^{-1} 
\pmod{p^{k-\ordp(\MQ)}} \\
&\iff x=p^{\frac{\ordp(t)-\ordp(\MQ)}{2}}y, 
\qquad y \in \bbZ/p^{k-\frac{\ordp(t)-\ordp(\MQ)}{2}}\bbZ, \\
&\qquad y^2 \equiv \copp(t)\copp(\MQ)^{-1}  \pmod{p^{k-\ordp(t)}}\;.
\end{align*}
The number of possible
representations is the number of $y \in \bbZ/p^{k-\frac{\ordp(t)-\ordp(\MQ)}{2}}\bbZ$ 
satisfying
$y^2 \equiv \copp(t)\copp(\MQ)^{-1} \bmod{p^{k-\ordp(t)}}$. As 
$\copp(t)\copp(\MQ)^{-1}$ is a quadratic residue modulo $p$, by Lemma
\ref{lem:HenselP}, there are exactly two possible $y$ over the ring
$\bbZ/p^{k-\ordp(t)}\bbZ$. 
Recall, the $p$-expansion of $y$
(definition \ref{def:pexpansion}). In the $p$-expansion of $y$, 
there are $k-\frac{\ordp(t)-\ordp(\MQ)}{2}$ digits; the first
$k-\ordp(t)$ of those must be a solution of 
$y^2=\copp(t)\copp(\MQ)^{-1}$ over the ring 
$\bbZ/p^{k-\ordp(t)}\bbZ$. Hence, the remaining 
$\frac{\ordp(t)+\ordp(\MQ)}{2}$ can be chosen freely from $\zpz$.
Thus, the number of $p^k$-representation of $t$ in $\MQ$ is
$2p^{\frac{\ordp(t)+\ordp(\MQ)}{2}}$. Note that
there are primitive representations iff $\ordp(t)=\ordp(\MQ)$. But then,
every representation $x \in \zpkz$ is primitive. 
This completes the case of
$\ordp(t)<k$.

\begin{algorithm}
\SetKw{or}{or}
\SetKwData{prim}{prim}
\SetKwData{nprim}{nprim}
\SetKw{Return}{return}
\SetKwInOut{Input}{input}
\caption{\textsc{CountModpk1Dim}($\MQ^{n=1}, \sympk(t), p, k$)}\label{alg:CountModPk1D}
\uIf{$\ordp(t) \geq k$}{\nllabel{if:count1dmodp}
	\lIf{$\ordp(\MQ)\geq k$}{
		\prim$:=(p-1)p^{k-1}$;
		\nprim$:=p^{k-1}$
	}
	\lElse{
		\prim$:=0$;
		\nprim$:=p^{k-\lceil (k-\ordp(\MQ))/2 \rceil}$
	}
}
\lElseIf{$\ordp(t)-\ordp(\MQ) < 0$ \or $\ordp(t)-\ordp(\MQ)$ odd 
		\or $\legendre{\copp(t)\copp(\MQ)}{p}=-1$}{\nllabel{else:count1dmodp}
	\prim$:=0;$ \nprim$:=0$
}
\lElseIf{$\ordp(\MQ)=\ordp(t)$}{
	\prim$:=2p^{\ordp(\MQ)};$ 
	\nprim$:=0$
}
\lElse{
	\prim$:=0;$ 
	\nprim$:=2p^{(\ordp(t)+\ordp(\MQ))/2}$
}
\end{algorithm}

The algorithm works with numbers of size $p^k$ and makes only a
constant number of operations. The Legendre symbol can be 
calculated using fast exponentiation (see Definition \ref{def:Legendre}) 
in $O(\log p)$ ring operations and the computations of $p$-orders take 
$O(\log k)$ ring operations. 
The algorithm hence, performs at most $O(\log k+\log p)$ ring operations.
\end{proof}

\subsection{Dimension $=1$, $p=2$}\label{sec:Dim12}

The following lemma gives the necessary and sufficient conditions for 
an integral quadratic form $\MQ^{n=1}$ to represent a non-zero 
$t \in \ztkz$ over $\ztkz$.

\begin{lemma}\label{lem:RepresentTMod2k1D}
Let $\MQ$ be an integer, and $t$ be a non-zero integer from 
$\ztkz$.
Then, $t$ can be represented by $\MQ$ over $\ztkz$ if and only 
if $\ordt(t)-\ordt(\MQ)$ is even, $\geq 0$, and 
$\copt(\MQ)\equiv \copt(t)\bmod{\min\{8,2^{k-\ordt(t)}\}}$.
\end{lemma}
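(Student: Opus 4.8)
The plan is to follow the same two-step template as the proof of Lemma~\ref{lem:RepresentTModPk1D}, comparing $2$-orders and (suitably truncated) $2$-signs on both sides of $x^2\MQ\equiv t\pmod{2^k}$. The only new feature is that over $\ztkz$ the relevant notion of ``sign'' is the residue of the coprime part modulo $\min\{8,2^{k-\ordt(t)}\}$ rather than a Legendre symbol, so the square-versus-nonsquare dichotomy of the odd case is replaced by an appeal to Lemma~\ref{lem:Hensel2}. Throughout I would write $\alpha=\ordt(t)$ and note that since $t$ is a non-zero element of $\ztkz$ we have $\alpha<k$; I would also dispatch the degenerate case $\ordt(\MQ)>\ordt(t)$ (in particular $\MQ\equiv 0$) at the start, where the order condition $\ordt(t)-\ordt(\MQ)\ge 0$ fails and $x^2\MQ$ cannot be a non-zero residue either.

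For the forward direction, suppose $x$ satisfies $x^2\MQ\equiv t\pmod{2^k}$, i.e.\ $x^2\MQ=t+a2^k$ for some integer $a$. Since $\ordt(t)=\alpha<k$, the ultrametric property forces $\ordt(x^2\MQ)=\alpha$ as well (a multiple of $2^k$ cannot change an order already below $k$). Hence $2\ordt(x)+\ordt(\MQ)=\alpha$, which already gives that $\ordt(t)-\ordt(\MQ)=2\ordt(x)$ is even and $\ge 0$. Dividing $x^2\MQ=t+a2^k$ by $2^\alpha$ yields $\copt(x)^2\copt(\MQ)\equiv\copt(t)\pmod{2^{k-\alpha}}$; since the square of an odd integer is $\equiv 1\pmod 8$ we have $\copt(x)^2\equiv 1\pmod{\min\{8,2^{k-\alpha}\}}$, and combining the two congruences gives $\copt(\MQ)\equiv\copt(t)\pmod{\min\{8,2^{k-\alpha}\}}$, which is the asserted condition.

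For the converse, assume $\beta:=\ordt(\MQ)\le\alpha<k$ with $\alpha-\beta$ even, and $\copt(\MQ)\equiv\copt(t)\pmod{\min\{8,2^{k-\alpha}\}}$; note $\copt(\MQ)$ is then an odd integer, hence invertible modulo every power of $2$. The aim is to produce a $y$ with $y^2\equiv\copt(\MQ)^{-1}\copt(t)\pmod{2^{k-\alpha}}$ and set $x:=2^{(\alpha-\beta)/2}y$, since then $x^2\MQ=2^\alpha y^2\copt(\MQ)\equiv 2^\alpha\copt(t)=t\pmod{2^k}$. To obtain $y$, split on the size of $k-\alpha$: if $k-\alpha\le 3$ then $\min\{8,2^{k-\alpha}\}=2^{k-\alpha}$, and any odd residue (in particular $\copt(\MQ)^{-1}\copt(t)$, which the hypothesis says is $\equiv 1\pmod{2^{k-\alpha}}$) is a square modulo $2,4,8$, so $y$ exists; if $k-\alpha>3$ then the hypothesis says $\copt(\MQ)^{-1}\copt(t)\equiv 1\pmod 8$, and Lemma~\ref{lem:Hensel2} supplies a square root $y$ of it modulo $2^{k-\alpha}$. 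Either way the substitution goes through.

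The only real subtlety — and the step I would be most careful about — is the bookkeeping around the truncation level $\min\{8,2^{k-\alpha}\}$: one must see that the $\pmod 8$ hypothesis of Lemma~\ref{lem:Hensel2} is exactly what is available once $k-\alpha>3$, that for $k-\alpha\le 3$ the sign condition degenerates into something automatically satisfiable, and that the elementary fact ``the square of an odd number is $1$ modulo $8$'' is precisely what lets the $\copt(x)^2$ factor be discarded in the forward direction. Everything else (the order comparison, the final substitution $x=2^{(\alpha-\beta)/2}y$, and the degenerate cases) is routine and parallels Lemma~\ref{lem:RepresentTModPk1D}.
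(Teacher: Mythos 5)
Your proof is correct and takes essentially the same route as the paper: the forward direction is identical (compare $2$-orders, divide by $2^{\alpha}$, use that odd squares are $1 \bmod 8$), and your converse---Lemma~\ref{lem:Hensel2} plus a hand-check for $k-\alpha\le 3$---simply unpacks the paper's one-line appeal to Lemma~\ref{lem:IntegerSymbolIsInvariant}. One phrase to repair: ``any odd residue is a square modulo $2,4,8$'' is false as stated ($3$ is not a square mod $4$ or mod $8$); what you actually need, and what your parenthetical already supplies, is that the hypothesis forces $\copt(\MQ)^{-1}\copt(t)\equiv 1\pmod{2^{k-\alpha}}$ in that range, and $1$ is trivially a square.
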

\begin{proof}
Suppose that $t$ can be represented by $\MQ$ over $\ztkz$ and 
$k>\ordt(t)$. Then, there exists integers $x$ and $a$ such that
$x^2\MQ=t+a2^k$. But then,
\begin{gather*}
\ordt(t)=\ordt(t+a2^k)=\ordt(x^2\MQ)=2\ordt(x)+\ordt(\MQ)\;,\\
\copt(t+a2^k)=\copt(x^2\MQ)=\copt(x)^2\copt(\MQ)\equiv \copt(\MQ) \pmod{8}\;.
\end{gather*}
By assumption, $\ordt(t)<k$ and hence 
$\copt(t+a2^k)\equiv\copt(t) \bmod{2^{k-\ordt(t)}}$.

Conversely, suppose that $\ordt(t)-\ordt(\MQ)$ is even, $\geq 0$ and
$\copt(t)\equiv \copt(\MQ) \bmod \min\{8,2^{k-\ordt(t)}\}$. Then, by Lemma 
\ref{lem:IntegerSymbolIsInvariant}, there exists an integer $u$ such
that $\copt(\MQ)u^2\equiv \copt(t) \bmod{2^{k-\ordt(t)}}$. But then, multiplying
this equation by $2^{\ordt(t)}$, we conclude that 
$x=2^{\frac{\ordt(t)-\ordt(\MQ)}{2}}$ is a $2^k$-representation of $t$; 
as follows.
\[
(2^{\frac{\ordt(t)-\ordt(\MQ)}{2}}u)^2\MQ = 2^{\ordt(t)}\copt(\MQ)u^2 
\equiv 2^{\ordt(t)}\copt(t) \modtk\;.
\]
\end{proof}


\begin{lemma}\label{lem:CountMod2k1D}
Let $\MQ, t,$ and $k$ be integers. Algorithm 
\ref{alg:CountMod2k1D} performs $O(\log k)$ ring operations and outputs 
the number of primitive and non-primitive $2^k$-representations
of $t$ in $\MQ$.
\end{lemma}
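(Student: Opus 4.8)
The plan is to replay the proof of Lemma~\ref{lem:CountModPk1D} with $p=2$, substituting the congruence criterion of Lemma~\ref{lem:RepresentTMod2k1D} for the Legendre-symbol criterion and the $1$/$2$/$4$ square roots of Lemma~\ref{lem:Hensel2} for the two square roots of Lemma~\ref{lem:HenselP}. So I would count the primitive and non-primitive $x\in\ztkz$ with $x^2\MQ\equiv t\pmod{2^k}$ by splitting on $\ordt(t)\geq k$ versus $\ordt(t)<k$. In the first case $t\equiv 0\pmod{2^k}$, so $x^2\MQ\equiv 0$ is equivalent to $2^{\lceil(k-\ordt(\MQ))/2\rceil}\mid x$; if $\ordt(\MQ)\geq k$ this holds for all $x$, and by Definition~\ref{def:pexpansion} such an $x$ is primitive iff its $0$-th digit is nonzero, giving $2^{k-1}$ primitive and $2^{k-1}$ non-primitive representations (note $(p-1)p^{k-1}=p^{k-1}=2^{k-1}$ for $p=2$); if $\ordt(\MQ)<k$ every solution is even, hence non-primitive, and there are $2^{k-\lceil(k-\ordt(\MQ))/2\rceil}$ of them.

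In the case $\ordt(t)<k$, Lemma~\ref{lem:RepresentTMod2k1D} tells us that $\MQ$ represents $t$ over $\ztkz$ iff $\ordt(t)-\ordt(\MQ)$ is even, $\geq0$, and $\copt(\MQ)\equiv\copt(t)\pmod{\min\{8,2^{k-\ordt(t)}\}}$; when this fails both counts are $0$. Otherwise set $e=(\ordt(t)-\ordt(\MQ))/2\geq0$; since $\ordt(\MQ)+2e=\ordt(t)$, one checks that $x\mapsto x/2^e$ is a bijection between the set of solutions $x\in\ztkz$ of $x^2\MQ\equiv t\pmod{2^k}$ and the set of $y\in\bbZ/2^{k-e}\bbZ$ with $y^2\equiv\copt(t)\copt(\MQ)^{-1}\pmod{2^{k-\ordt(t)}}$. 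The right-hand side is $\equiv1$ modulo $\min\{8,2^{k-\ordt(t)}\}$, so Lemma~\ref{lem:Hensel2} at modulus $2^{k-\ordt(t)}$ yields exactly $N$ square roots, with $N=1$ if $k-\ordt(t)=1$, $N=2$ if $k-\ordt(t)=2$, and $N=4$ if $k-\ordt(t)\geq3$. Counting digits in the $2$-expansion of $y$: the lowest $k-\ordt(t)$ digits must form one of these $N$ roots, and the remaining $(k-e)-(k-\ordt(t))=(\ordt(t)+\ordt(\MQ))/2$ digits are free, so there are $N\cdot2^{(\ordt(t)+\ordt(\MQ))/2}$ representations in all; they are primitive iff $e=0$, i.e.\ $\ordt(t)=\ordt(\MQ)$, in which case all are primitive, giving $N\cdot2^{\ordt(\MQ)}$ primitive and $0$ non-primitive, and otherwise $0$ primitive and $N\cdot2^{(\ordt(t)+\ordt(\MQ))/2}$ non-primitive. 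I would then verify that the formulas produced by Algorithm~\ref{alg:CountMod2k1D} agree with this case analysis branch by branch.

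For the running time, every quantity above is a comparison or an arithmetic operation on integers of size $O(2^k)$, and the parameter $\sgnt$ (the residue mod $8$) costs $O(1)$; the only nontrivial computations are $\ordt(t)$ and $\ordt(\MQ)$, which fast exponentiation over $\ztkz$ performs in $O(\log k)$ ring operations, so the whole algorithm uses $O(\log k)$ ring operations, matching the claim. The main point requiring care is the bookkeeping at the small values $k-\ordt(t)\in\{1,2\}$, where Lemma~\ref{lem:Hensel2} drops the square-root count from $4$ to $2$ to $1$: one has to confirm that this is exactly compensated by the truncated modulus $\min\{8,2^{k-\ordt(t)}\}$ in Lemma~\ref{lem:RepresentTMod2k1D} and that the $k=1,2$ base cases of Lemma~\ref{lem:Hensel2} are invoked correctly. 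I expect this to be the only delicate point, with no conceptual obstacle beyond the usual awkwardness of $p=2$.
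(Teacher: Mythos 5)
Your proposal is correct and follows essentially the same route as the paper's proof: the identical case split on $\ordt(t)\geq k$ versus $\ordt(t)<k$, the reduction via $x=2^{e}y$ to counting square roots of $\copt(t)\copt(\MQ)^{-1}$ modulo $2^{k-\ordt(t)}$ with Lemma~\ref{lem:Hensel2}, the same free-digit count $(\ordt(t)+\ordt(\MQ))/2$, and the same primitivity criterion $\ordt(t)=\ordt(\MQ)$. The delicate point you flag (the drop from $4$ to $2$ to $1$ roots when $k-\ordt(t)\leq 2$, matched against the truncated modulus $\min\{8,2^{k-\ordt(t)}\}$) is handled exactly the same way in the paper.
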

\begin{proof}
We want to count the number of primitive and non-primitive 
$x \in \ztkz$ such that $x^2\MQ\equiv t \modtk$. We 
distinguish between the following cases.

\begin{algorithm}
\SetKw{or}{or}
\SetKwData{nprim}{non-prim}
\SetKwData{prim}{prim}
\SetKwData{rep}{rep}
\SetKw{Return}{return}
\SetKwInOut{Input}{input}
\caption{\textsc{CountMod2k1Dim}($\MQ^{n=1}, k, \symtk(t)$)}
\label{alg:CountMod2k1D}
\uIf{$\ordt(t)\geq k$}{
	\lIf{$\ordt(\MQ)\geq k$}{
		\prim$:=2^{k-1}$;
		\nprim$:=2^{k-1}$
	}
	\lElse{
		\prim$:=0$;
		\nprim$:=2^{k-\lceil (k-\ordt(\MQ))/2 \rceil}$
	}
}
\lElseIf{$\ordt(t)-\ordt(\MQ) < 0$ \or $\ordt(r)-\ordt(\MQ)$ odd 
		\or $\cop_2(r)\not\equiv \cop_2(\MQ) \bmod{\min\{8,2^{k-\ordt(t)}\}}$}{
	\prim$:=0;$ \nprim$:=0$
}
\Else{
	\lIf{$k-\ordt(t)\geq 3$}{$\rep:=4\cdot 2^{(\ordt(t)+\ordt(\MQ))/2}$}
	\lElse{$\rep:=(k-\ordt(t))\cdot 2^{(\ordt(t)+\ordt(\MQ))/2}$}
	\lIf{$\ordt(\MQ)=\ordt(t)$}{$\prim:=\rep; \nprim:=0$}
	\lElse{$\prim:=0; \nprim:=\rep$}
}
\Return{$[\prim, \nprim]$}
\end{algorithm}

\paragraph{$\mathbf{\ordt(t) \geq k}$.} In this case, $t \modtk$
is $0$ and hence,
\[
x^22^{\ordt(\MQ)} \equiv 0 \modtk
\iff 2^{k-\ordt(\MQ)}|x^2 
\iff 2^{\lceil \frac{k-\ordt(\MQ)}{2}\rceil}|x
\]
There are primitive representations iff $\ordt(\MQ) \geq k$. But then,
every $x \in \ztkz$ is a representation. 
Recall, the $2$-expansion of $x \in \ztkz$ (definition \ref{def:pexpansion}).
By definition, $x$ is primitive iff $d_0(x)=1$. The rest of the 
$k-1$ digits can be chosen freely. The number of
primitive and non-primitive representations in the case of 
$\ordt(\MQ) \geq k$ is then $2^{k-1}$ and $2^{k-1}$ respectively.
Otherwise, if $\ordt(\MQ) < k$ then there are no primitive representations
and the number of non-primitive representations is 
$2^{k-\lceil \frac{k-\ordt(\MQ)}{2}\rceil}$. This completes the case
of $\ordt(t) \geq k$.

\paragraph{$\mathbf{\ordt(t) < k}$.} In this case, 
$t$ is a non-zero element of $\ztkz$.
It follows from
Lemma \ref{lem:RepresentTMod2k1D} that $\MQ$ represents 
$t$ over $\ztkz$ iff $\ordt(t)-\ordt(\MQ) \geq 0$, is even and 
$\copt(t)\equiv\copt(\MQ)\bmod{\min\{8,2^{k-\ordt(t)}\}}$. But then,
\begin{align*}
x^2\MQ \equiv &t \modtk \iff 
x^2 \equiv 2^{\ordt(t)-\ordt(\MQ)}\copt(t)\copt(\MQ)^{-1} 
\bmod{2^{k-\ordt(\MQ)}} \\
&\iff x=2^{\frac{\ordt(t)-\ordt(\MQ)}{2}}y, 
\qquad y \in \bbZ/2^{k-\frac{\ordt(t)-\ordt(\MQ)}{2}}\bbZ, \\
&\qquad y^2 \equiv \copt(t)\copt(\MQ)^{-1}  \pmod{2^{k-\ordt(t)}}\;.
\end{align*}
The number of possible representations is the number of 
$y \in \bbZ/2^{k-\frac{\ordt(t)-\ordt(\MQ)}{2}}\bbZ$ satisfying
$y^2 \equiv \copt(t)\copt(\MQ)^{-1} \bmod{2^{k-\ordt(t)}}$. By Lemma
\ref{lem:Hensel2}, there are
exactly four possible $y$ over the ring
$\bbZ/2^{k-\ordp(t)}\bbZ$ if $k-\ordt(t)>2$, and 
$k-\ordt(t)$ otherwise.
Recall, the $2$-expansion of $y \in \ztkz$ 
(definition \ref{def:pexpansion}). There are 
$k-\frac{\ordt(t)-\ordt(\MQ)}{2}$ digits in the $2$-expansion;
the first $(k-\ordt(t))$ of which must be a solution to
$y^2\equiv\copt(t)\copt(\MQ)^{-1}$ modulo $2^{k-\ordt(t)}$.
The rest of the $\frac{\ordt(t)+\ordt(\MQ)}{2}$ digits
can be chosen freely from $\ztz$.
Thus, the number of representations of $t$ by $\MQ$ over $\ztkz$ is
$2^{2+\frac{\ordp(\MQ)+\ordt(t)}{2}}$ if $k-\ordt(t)>2$ and 
$(k-\ordt(t))2^{\frac{\ordt(\MQ)+\ordt(t)}{2}}$ otherwise. 
The correctness for $\ordt(t)<k$ now follows from the fact that
primitive representations exist iff $\ordp(t)=\ordp(\MQ)$.

The computation of $\ordt(\MQ)$ and $\ordt(t)$ by fast exponentiation
takes $O(\log k)$ ring operations.
\end{proof}

\subsection{Type II, $p=2$}\label{sec:Dim2II}

Recall, Definition \ref{def:BlockDiagonal}, of a type II quadratic 
form. In this section, we solve the representation problem for Type II matrices
over $\ztkz$. But first we define a scaled version of a type II matrix.

\begin{definition}\label{def:TypeIIStar}
A two-by-two matrix of the following form is called \typet\, matrix.
\[
\begin{pmatrix}a & b/2 \\ b/2 & c\end{pmatrix} \qquad 
a,b,c \in \bbZ, b \text{ odd}
\]
\end{definition}

Additionally, in this 
section we will think of \typet\,
as the following quadratic form in formal variables 
$x_1, x_2$ which take values in the ring $\ztkz$.
\begin{equation}\label{def:qftype2}
ax_1^2+bx_1x_2+cx_2^2 \qquad a,b,c \in \bbZ, b \text{ odd}\;.
\end{equation}


\begin{lemma}\label{lem:RepresentTTypeII}
Let $\MQ^*=(a,b,c)$, $b$ odd be a \typet\; integral 
quadratic form, and $t, k$ be positive integers.
If $a_1,a_2 \in \ztz$ be such that $(a_1,a_2)$ represent 
$t$ over $\ztz$ and either $a_1$ or $a_2$ is odd then
there are exactly $2^{k-1}$ distinct representations 
$(x_1,x_2)$ of $t$ over $\ztkz$ such that
$x_1\equiv a_1 \pmod {2}, x_2 \equiv a_2 \pmod {2}$.
\end{lemma}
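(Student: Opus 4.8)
The plan is to fix the parity vector $(a_1,a_2)$ and count the representations of $t$ modulo $2^k$ that reduce to it, by a Hensel-style lifting argument. Since $b$ is odd, the quadratic form $Q^*(x_1,x_2)=ax_1^2+bx_1x_2+cx_2^2$ is ``nondegenerate enough'' at the prime $2$: its partial derivatives are $\partial_1 Q^* = 2ax_1 + bx_2$ and $\partial_2 Q^* = bx_1 + 2cx_2$, so modulo $2$ these are $bx_2$ and $bx_1$ respectively. Because $b$ is a unit modulo $2$, at any point $(a_1,a_2)$ with $a_1$ or $a_2$ odd, at least one of the two partial derivatives is a unit modulo $2$. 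This is exactly the condition needed to run Hensel lifting on the single equation $Q^*(x_1,x_2)\equiv t \pmod{2^k}$ in two variables.

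Concretely, I would argue by induction on $k$. The base case $k=1$ is the hypothesis: $(a_1,a_2)$ is given to represent $t$ over $\mathbb{Z}/2\mathbb{Z}$, and $2^{k-1}=1$, so there is exactly one representation with the prescribed parities (namely $(a_1,a_2)$ itself). For the inductive step, suppose every solution modulo $2^k$ with the given parities has been counted, say there are $2^{k-1}$ of them. Take one such solution $(x_1,x_2)$, so $Q^*(x_1,x_2)\equiv t \pmod{2^k}$. A lift to $\mathbb{Z}/2^{k+1}\mathbb{Z}$ with the same parities has the form $(x_1+2^k e_1, x_2+2^k e_2)$ with $e_1,e_2\in\{0,1\}$, and
\[
Q^*(x_1+2^ke_1, x_2+2^ke_2) \equiv Q^*(x_1,x_2) + 2^k(e_1 \partial_1 Q^*(x_1,x_2) + e_2 \partial_2 Q^*(x_1,x_2)) \pmod{2^{k+1}},
\]
since the quadratic correction terms carry a factor $2^{2k}$ which vanishes mod $2^{k+1}$ for $k\geq 1$. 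Writing $t - Q^*(x_1,x_2) \equiv 2^k \delta \pmod{2^{k+1}}$ with $\delta\in\{0,1\}$, we need $e_1 \partial_1 Q^*(x_1,x_2) + e_2 \partial_2 Q^*(x_1,x_2) \equiv \delta \pmod 2$. Since $(x_1,x_2)\equiv(a_1,a_2)\pmod 2$, the gradient mod $2$ is $(b a_2, b a_1)$ with $b$ odd and $(a_1,a_2)\neq(0,0)$, so the linear map $(e_1,e_2)\mapsto e_1 b a_2 + e_2 b a_1$ from $(\mathbb{Z}/2\mathbb{Z})^2$ to $\mathbb{Z}/2\mathbb{Z}$ is surjective; hence exactly two of the four choices of $(e_1,e_2)$ solve the congruence. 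Therefore each solution modulo $2^k$ lifts to exactly two solutions modulo $2^{k+1}$ with the prescribed parities, giving $2\cdot 2^{k-1}=2^k$ solutions and completing the induction.

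The main thing to be careful about — rather than a deep obstacle — is the bookkeeping that makes the Hensel step clean: one must check that the cross term $Q^*(x_1+2^ke_1,x_2+2^ke_2) - Q^*(x_1,x_2) - 2^k(e_1\partial_1 Q^* + e_2\partial_2 Q^*)$ really is divisible by $2^{k+1}$. Expanding, the omitted terms are $a(2^ke_1)^2$, $b(2^ke_1)(2^ke_2)$, and $c(2^ke_2)^2$, each divisible by $2^{2k}$, and $2k \geq k+1$ whenever $k\geq 1$, so this is fine for all $k\geq 1$. One should also note that the reduction mod $2$ of any solution $(x_1,x_2)$ counted here automatically represents $t$ over $\mathbb{Z}/2\mathbb{Z}$ and has the right parity by construction, so no solutions are missed and none are double-counted: distinct lifts produce distinct residues mod $2^{k+1}$ because they differ in the $2^k$-digit. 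That gives the exact count $2^{k-1}$ and finishes the proof.
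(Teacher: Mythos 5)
Your proof is correct and follows essentially the same route as the paper's: induction on $k$, lifting each solution modulo $2^k$ to exactly two solutions modulo $2^{k+1}$ by solving a linear congruence in the top digits $(e_1,e_2)$, with the oddness of $b$ and the primitivity of $(a_1,a_2)$ guaranteeing that the relevant linear functional $(e_1,e_2)\mapsto e_1ba_2+e_2ba_1$ over $\ztz$ is nonzero and hence has fibers of size exactly $2$. The only cosmetic difference is that you package the first-order term as a gradient and argue surjectivity of a linear map, where the paper writes out the cross term explicitly and does a two-case analysis on which of $y_1,y_2$ is odd; the content is identical.
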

\begin{proof}
We prove this by induction on $k$.
We show that given a representation $y_1,y_2$ of $t$ over 
the ring $\bbZ/2^i\bbZ$, for $i \geq 1$, such that at least one of
$y_1,y_2$ is odd there are exactly two representations $z_1,z_2$ of $t$ over
the ring $\bbZ/2^{i+1}\bbZ$ such that 
$z_1\equiv y_1 \pmod{2^i}, z_2\equiv y_2\ \pmod{2^i}$.

Let $(y_1,y_2)$ be a representation of $t$ by $\MQ^*$ over $\bbZ/2^i\bbZ$. 
Then, the pair of integers $(z_1, z_2)$ such that 
$(z_1,z_2) \equiv (y_1, y_2) \pmod{2^i}$ is a representation of 
$t$ over $\bbZ/2^{i+1}\bbZ$ iff 
\begin{equation}
\begin{array}{ll}
z_1 \equiv y_1 + b_1\cdot 2^{i} \pmod{2^{i+1}} 
	& z_2 \equiv y_2 + b_2 \cdot 2^{i} \pmod{2^{i+1}} \\
b_1, b_2 \in \{0,1\} & az_1^2+bz_1z_2+cz_2^2\equiv t \pmod{2^{i+1}}
\end{array}
\end{equation}

Plugging in the values of $z_1$ and $z_2$ and re-arranging we get the 
following equation.
\begin{align}\label{eq2:RepresentTTypeII}
(bb_2y_1+bb_1y_2)2^i 
  &\equiv t-(ay_1^2+by_1y_2+cy_2^2) \pmod{2^{i+1}}
\end{align}
As $b$ is odd, $b$ is invertible over $\bbZ/2^{i+1}\bbZ$. 
By assumption, $y_1,y_2$ represent $t$ over $\bbZ/2^{i}\bbZ$ and hence
$2^{i}$ divides $t-(ay_1^2+by_1y_2+cy_2^2)$. The Equation 
\ref{eq2:RepresentTTypeII} reduces to the following equation.
\begin{align}\label{eq3:RepresentTTypeII}
b_2y_1+b_1y_2 
  &\equiv \frac{t-(ay_1^2+by_1y_2+cy_2^2)}{2^ib} \pmod 2
\end{align}

We now split the proof in two cases: i) when $y_1$ is odd,
and ii) when $y_1$ is even and $y_2$ is odd.

\begin{description}
\item [$\mathbf{y_1}$ \bf{odd}.] For each choice of $b_1 \in \{0,1\}$
there is a unique choice for $b_2$ because $y_1\equiv 1 \pmod{2}$.
\begin{align*}
b_1 \in \{0,1\} \qquad b_2 &= \frac{t-(ay_1^2+by_1y_2+cy_2^2)}{2^ib}
	-b_1y_2\pmod 2
\end{align*}
\item [$\mathbf{y_1}$ \bf{even}.] In this case, $y_2 \equiv 1 \pmod{2}$ and
so $b_2$ can be chosen freely. 
\begin{align*}
b_2 \in \{0,1\} \qquad b_1 = \frac{t-(ay_1^2+by_1y_2+cy_2^2)}{2^ib} \pmod 2
\end{align*}
\end{description}
\end{proof}

\begin{lemma}\label{lem:CountTypeII}
Let $\MQ^{n=2}$ be a type II matrix and $t, k$ be positive integers. Algorithm 
\ref{alg:CountTypeII} performs $O(k\log k)$ ring operations and 
counts the number of 
primitive and non-primitive representations of $t$ by $\MQ$ over $\ztkz$.
\end{lemma}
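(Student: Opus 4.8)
First I would factor out the $2$-power common to all entries of a type II matrix. Writing $\MQ=\begin{pmatrix}2^{\ell+1}a&2^{\ell}b\\2^{\ell}b&2^{\ell+1}c\end{pmatrix}$ with $b$ odd as in Definition~\ref{def:BlockDiagonal}, we have $\Vx'\MQ\Vx=2^{\ell+1}\bigl(ax_1^2+bx_1x_2+cx_2^2\bigr)=2^{\ell+1}\MQ^*(\Vx)$, where $\MQ^*=(a,b,c)$ is the \typet\ form of Definition~\ref{def:TypeIIStar}. So I must count $\Vx\in(\ztkz)^2$ with $2^{\ell+1}\MQ^*(\Vx)\equiv t\pmod{2^k}$. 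If $\ell+1\ge k$ this reads $0\equiv t\pmod{2^k}$, so there is no representation unless $t\equiv 0\pmod{2^k}$, in which case every $\Vx$ qualifies and, counting primitive versus non-primitive vectors of $(\ztkz)^2$, the answers are $3\cdot 4^{k-1}$ and $4^{k-1}$. If $\ell+1<k$ and $\ordt(t)<\ell+1$ there is no representation at all. Otherwise $\ell+1<k$ and $\ordt(t)\ge\ell+1$, and the equation is equivalent to $\MQ^*(\Vx)\equiv t'\pmod{2^{k'}}$ with $k'=k-\ell-1\ge1$ and $t'=(t/2^{\ell+1})\bmod 2^{k'}$; since this constrains only $\Vx\bmod 2^{k'}$ and, as $k'\ge1$, primitivity of $\Vx$ is already visible mod~$2$, lifting back to $(\ztkz)^2$ multiplies every count by $2^{2(\ell+1)}$ and respects the primitive/non-primitive split. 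So it suffices to count primitive and non-primitive representations of an integer $s$ by a \typet\ form $\MQ^*=(a,b,c)$ over $\bbZ/2^m\bbZ$.

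For the primitive count I would use Lemma~\ref{lem:RepresentTTypeII}. Modulo~$2$ the three primitive residues $(a_1,a_2)\in\{(0,1),(1,0),(1,1)\}$ take the values $\bar c$, $\bar a$, $\bar a+\bar c+1$ under $\MQ^*$ (using $b$ odd and $a_i^2\equiv a_i$); let $N_s\in\{0,1,2,3\}$ be the number of them equal to $s\bmod 2$, which a short check in $(\bar a,\bar c)$ determines ($N_1=3$, $N_0=0$ exactly when $\bar a=\bar c=1$, i.e.\ when $\MQ^*\bmod 2$ is anisotropic; $N_0=2$, $N_1=1$ otherwise). By Lemma~\ref{lem:RepresentTTypeII} each valid primitive residue lifts to exactly $2^{m-1}$ representations over $\bbZ/2^m\bbZ$ and an invalid one to none, so $\fB_{2^m}(\MQ^*,s)=N_s\cdot 2^{m-1}$.

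For the non-primitive count, $\Vy\mapsto 2\Vy$ is a bijection from $(\bbZ/2^{m-1}\bbZ)^2$ onto the non-primitive vectors of $(\bbZ/2^m\bbZ)^2$ and $\MQ^*(2\Vy)=4\,\MQ^*(\Vy)$, so $\fC_{2^m}(\MQ^*,s)$ counts $\Vy$ with $4\MQ^*(\Vy)\equiv s\pmod{2^m}$. For $m\le 2$ this is immediate; for $m\ge 3$ it is $0$ unless $4\mid s$, and when $4\mid s$ it equals $4\,\fA_{2^{m-2}}(\MQ^*,s/4)$, because the condition then reads $\MQ^*(\Vy)\equiv s/4\pmod{2^{m-2}}$ and each solution modulo $2^{m-2}$ has $2^2$ preimages in $(\bbZ/2^{m-1}\bbZ)^2$. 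Since $\fA=\fB+\fC$, this is a recursion lowering $m$ by~$2$ and dividing the target by~$4$ at each step; unrolling it expresses $\fA_{2^{k'}}(\MQ^*,t')$ (and separately its primitive and non-primitive parts) as sums of $O(k')$ explicitly computable terms, and this is precisely what Algorithm~\ref{alg:CountTypeII} evaluates. (Equivalently, one can stratify the representations $\Vx$ by $j=\min_i\ordt(x_i)$, write $\Vx=2^j\Vz$ with $\Vz$ primitive, and reduce stratum $j$ with $2j<k'$ to a primitive count for $\MQ^*$ modulo $2^{k'-2j}$.) Assembling these identities proves correctness. For the running time: forming $\MQ^*,k',t'$ requires $\ordt$ of the off-diagonal entry and $\ordt(t)$, i.e.\ $O(\log k)$ ring operations by fast exponentiation, and the recursion has depth $O(k)$ with $O(\log k)$ ring operations per level (dominated by an $\ordt$ computation), for $O(k\log k)$ in total.

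I expect the main difficulty to be the boundary bookkeeping — exactly where the $p=2$ case is historically error-prone: handling $\ell+1\ge k$, handling $\ordt(t)<\ell+1$, and, most delicately, handling $t\equiv 0\pmod{2^k}$ (so $t'\equiv 0$), where the recursion runs all the way down and the anisotropic/isotropic dichotomy of $\MQ^*\bmod 2$ toggles $N_0$ between $0$ and $2$; and keeping the primitive/non-primitive labels consistent both through the reduction $\MQ\to\MQ^*$ and through every level of the $\Vy\mapsto 2\Vy$ recursion. Before trusting the general argument I would check the formulas against brute force on small moduli (e.g.\ $\MQ^*=(0,1,0)$ and $(1,1,1)$ over $\bbZ/2^k\bbZ$ for $k\le 4$, several targets).
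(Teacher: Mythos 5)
Your proposal is correct and follows essentially the same route as the paper: factor out $2^{\ell+1}$ with the same three boundary cases, count primitive representations by checking the three odd parity classes modulo $2$ and lifting each valid one to $2^{k'-1}$ solutions via Lemma~\ref{lem:RepresentTTypeII}, and count non-primitive ones by the substitution $\Vx=2\Vy$, which forces $4\mid t'$ and yields the same depth-$O(k)$ recursion with a factor of $4$ per level. The only cosmetic difference is that you phrase the recursion as the identity $\fC_{2^m}(\MQ^*,s)=4\,\fA_{2^{m-2}}(\MQ^*,s/4)$ rather than in terms of freely choosable top bits, but the content is identical.
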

\begin{proof}
We want to count the number of primitive and non-primitive 
$\Vx=(x_1,x_2)\in (\ztkz)^2$ such that $\Vx'\MQ\Vx\equiv t\modtk$. Let 
$\MQ=\begin{pmatrix}2^{\ell+1}a & 2^\ell b \\ 2^\ell b & 2^{\ell+1}c\end{pmatrix}$,
$b$ odd. Then,
\begin{align}\label{eq1:CountTypeII}
\Vx'\MQ\Vx \equiv t \modtk \iff 
2^{\ell+1}(ax_1^2+bx_1x_2+cx_2^2) \equiv t\modtk
\end{align}
Recall, $(x_1,x_2) \in (\ztkz)^2$ is non-primitive (definition \ref{def:Prim})
iff both $x_1$ and $x_2$ are even. We distinguish the following cases.

\paragraph{$\mathbf{\ell+1 \geq k}$.} In this case, $\MQ$ is 
identically $0$ over $\ztkz$ and hence only represents $0$. If $t$
is also $0$ over $\ztkz$ i.e.,
$\ordt(t)\geq k$ then the number of primitive and non-primitive
representations of $t$ over $\ztkz$ is $3\cdot 4^{k-1}$ and $4^{k-1}$
respectively.

\paragraph{$\mathbf{\ell + 1 > \ordt(t)}$.} Everything $\MQ$ represents
is a multiple of $2^{\ell+1}$ and so $\MQ$ cannot represent $t$ in this
case.

\paragraph{$\mathbf{\ordt(t) \geq \ell+1}$.} In this case, we divide 
Equation \ref{eq1:CountTypeII} by $2^{\ell+1}$.
\begin{align}\label{eq2:CountTypeII}
&2^{\ell+1}(ax_1^2+bx_1x_2+cx_2^2) \equiv t\modtk \nonumber\\
\iff & 
ax_1^2+bx_1x_2+cx_2^2 \equiv 2^{\ordt(t)-\ell-1}\copt(t) \pmod{2^{k-\ell-1}}
\end{align}
Both $x_1$ and $x_2$ are elements of the ring $\ztkz$. But the Equation 
\ref{eq2:CountTypeII} is defined modulo $2^{k-\ell-1}$. 
Recall, definition \ref{def:pexpansion} of $2$-expansion.
From the equivalence relation $(x \bmod q)\cdot(y \bmod q)\equiv xy \pmod{q}$, 
it follows that the last $2^{\ell+1}$ digits of both $x_1$ and $x_2$ can be
chosen freely. The number of primitive (or non-primitive) representations
of $t$ by $\MQ$ over $\ztkz$ is equal to $4^{\ell+1}$ the number of 
primitive (resp., non-primitive) solution of the following equation
over $\bbZ/2^{k-\ell-1}\bbZ$.
\begin{align}\label{eq3:CountTypeII}
ay_1^2+by_1y_2+cy_2^2 \equiv 2^{\ordt(t)-\ell-1}\copt(t) \pmod{2^{k-\ell-1}}\;.
\end{align}

\begin{algorithm}
\SetKw{Return}{return}
\SetKwInput{Input}{input}
\SetKwData{nprim}{non-prim}
\SetKwData{prim}{prim}
\SetKw{Ensure}{ensure}
\caption{\textsc{CountTypeII}($\MQ^{n=2},\symtk(t),k$)}\label{alg:CountTypeII}
$\ell:=\ordt(\MQ_{12})$\;
\uIf{$\ell+1 \geq k$}{
	\lIf{$\ordt(t)\geq k$}{$\nprim:=4^{k-1}; \prim:=4^k-\nprim$}
	\lElse{$\prim:=0; \nprim:=0$}
}
\lElseIf{$\ordp(t)<\ell+1$}{$\prim:= 0; \nprim:= 0$}
\Else{
	$\prim:= 0; \nprim:= 0; k:=k-\ell-1; t:=t/2^{\ell+1}, \MQ^*:=\MQ/2^{\ell+1}$\;
	\For{$[a_1,a_2] \in \{[0,1], [1,0], [1,1]\}$}{
		\lIf{$(a_1,a_2)\MQ^{*}(a_1,a_2)'\equiv t \pmod{2}$}{
			$\prim+=2^{k-1}$
		}
	}
	\If{$t \equiv 0 \pmod 4$}{
		\lIf{$k=1$}{$\nprim+=1$}
		\Else{
			$[n_1,n_2]:=$CountTypeII$(\MQ^*,t/4,k-2)$\;
			$\nprim+=\nprim+4(n_1+n_2)$\;
		}
	}
	$[\prim,\nprim]:=[4^{\ell+1}\cdot\prim,4^{\ell+1}\cdot\nprim]$
}
\Return{$[\prim,\nprim]$}
\end{algorithm}

Every solution $(y_1,y_2)$ of Equation \ref{eq3:CountTypeII} falls 
in two categories, i) at least one of $y_1,y_2$ is odd, or ii) both are even.
\begin{description}
\item [$\mathbf{y_1\text{ or } y_2 \text{ odd}}$] In this case, $(a_1,a_2)
= (y_1,y_2) \pmod{2}$ represents $t$ over $\ztz$ and we can apply Lemma
\ref{lem:RepresentTTypeII}. By construction, these are primitive solutions.
\item [$\mathbf{y_1\text{ and } y_2 \text{ even}}$] In this case, $4$ must
divide $t$ and by definition every representation is non-primitive. We can
divide the equation by $4$ to get the following equation (assume
$t^*=t/2^{\ell+1}, k^*=k-\ell-1$).
\begin{equation}\label{eq3:CountTypeII}
a(y_1/2)^2+b(y_1/2)(y_2/2)+c(y_2/2)^2 \equiv t^*/4 \pmod {2^{k^*-2}}
\end{equation}
\end{description}
Again, $y_1/2, y_2/2$ are elements of $\bbZ/2^{k^*-1}\bbZ$ but the Equation 
\ref{eq3:CountTypeII} is defined modulo $2^{k^*-2}$. Thus, the
last bit of $y_1$ and $y_2$ can be chosen freely. So, we can solve 
Equation \ref{eq4:CountTypeII} over $\bbZ/2^{k^*-2}\bbZ$ and then multiply
by $4$ to get the number of solutions of Equation \ref{eq3:CountTypeII}
over $\bbZ/2^{k^*}\bbZ$.
This completes the proof of correctness of 
Algorithm \ref{alg:CountTypeII}.
\begin{equation}\label{eq4:CountTypeII}
az_1^2+bz_1z_2+cz_2^2 \equiv t^*/4 \pmod {2^{k^*-2}}
\end{equation}

The algorithm works with $k$ bit numbers and each recursive call reduces 
$k$ by $2$. One can compute $\ordt(t)$ as well as $\ordt(\MQ)$ once, costing
$O(\log k)$ ring operations. Thus, the algorithm takes $O(k\log k)$ ring
operations in total.
\end{proof}

\subsection{Calculating Split Size}\label{sec:SplitSize}

Let $p$ be a prime, $k$ be a positive integer and 
$\sym$ be a $p^k$-symbol. For notational convenience, we define
$\ypk(\sym)=\{x \in \zpkz\mid \sympk(x)=\sym\}$ and $\fypk(\sym)$
as the cardinality of $\ypk(\sym)$. 
Let $\sym,\sym_1,\sym_2$ be $p^k$-symbols. In this section, we 
compute $\fspksym(\sym_1,\sym_2)$ i.e., for a fixed $t \in \ypk(\sym)$
the cardinality of the following set.
\[
\left\{(a,b) \in (\zpkz)^2 \mid \sympk(a)=\sym_1, \sympk(b)=\sym_2, 
a+b\equiv t \bmod{p^k}\right\}
\]


But first, we mention the following useful result
from \cite{Per52}. For completeness, a proof is provided in the Appendix.
\begin{lemma}\label{lem:SizeModP}
For an odd prime $p$, and non-zero $a\in \zpz$ the number 
of tuples $(x,x+a) \in (\zpz)^2$ such that 
$\legendre{x}{p}=s_1$, $\legendre{x+a}{p}=s_2$ and 
$s_1,s_2 \in \{-1,1\}$
is given by the following formula.
\begin{equation}\label{eq:ordaeqordt}
\frac{1}{4} \cdot \left\{ p- (p \bmod 4) - 
\left(\legendre{a}{p}+s_1\right) \cdot \left(\legendre{-a}{p}+s_2\right)
\right\}
\end{equation}
\end{lemma}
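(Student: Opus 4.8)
The plan is to count the pairs $(x,x+a)$ directly using multiplicative characters, exploiting the fact that for odd $p$ the indicator of $\legendre{y}{p}=s$ (for $y\neq 0$) is $\tfrac12\bigl(1+s\legendre{y}{p}\bigr)$, while the indicator of $y\equiv 0$ is handled separately. First I would write the desired count as
\[
N(s_1,s_2)=\sum_{\substack{x\in\zpz\\ x\neq 0,\ x\neq -a}}
\frac{1+s_1\legendre{x}{p}}{2}\cdot\frac{1+s_2\legendre{x+a}{p}}{2},
\]
where the restriction $x\neq 0,-a$ ensures both Legendre symbols are genuine $\pm1$ values. Expanding the product gives four sums: the constant term contributes $\tfrac14(p-2)$ (there are $p-2$ admissible $x$ since $a\neq 0$), the two linear terms contribute $\tfrac{s_1}{4}\sum_{x\neq 0,-a}\legendre{x}{p}$ and $\tfrac{s_2}{4}\sum_{x\neq 0,-a}\legendre{x+a}{p}$, and the cross term contributes $\tfrac{s_1s_2}{4}\sum_{x\neq 0,-a}\legendre{x(x+a)}{p}$.

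The two linear sums are easy: $\sum_{x\in\zpz}\legendre{x}{p}=0$, so removing the terms $x=0$ (which contributes $0$) and $x=-a$ (which contributes $\legendre{-a}{p}$) leaves $-\legendre{-a}{p}$; similarly the second linear sum equals $-\legendre{a}{p}$. The cross term is the standard Jacobsthal-type character sum: $\sum_{x\in\zpz}\legendre{x(x+a)}{p}=\sum_x\legendre{x^2+ax}{p}$. Completing the square (legitimate since $p$ is odd, so $2$ and $4$ are invertible) and using $\legendre{x}{p}^2=1$ for $x\neq0$ gives the classical evaluation $\sum_{x\in\zpz}\legendre{x^2+ax}{p}=-1$ for $a\not\equiv0$; this is the step I expect to cite or reprove in a line or two. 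The terms excluded from this sum are $x=0$ (contributing $0$) and $x=-a$ (contributing $0$), so the restricted cross sum also equals $-1$.

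Assembling,
\[
N(s_1,s_2)=\frac{p-2}{4}-\frac{s_1}{4}\legendre{-a}{p}-\frac{s_2}{4}\legendre{a}{p}-\frac{s_1s_2}{4},
\]
and it remains to check this matches formula~\eqref{eq:ordaeqordt}. Expanding the claimed expression, $\tfrac14\bigl(\legendre{a}{p}+s_1\bigr)\bigl(\legendre{-a}{p}+s_2\bigr)=\tfrac14\bigl(\legendre{a}{p}\legendre{-a}{p}+s_2\legendre{a}{p}+s_1\legendre{-a}{p}+s_1s_2\bigr)$, and since $\legendre{a}{p}\legendre{-a}{p}=\legendre{-a^2}{p}=\legendre{-1}{p}$, which equals $1$ if $p\equiv1\pmod4$ and $-1$ if $p\equiv3\pmod4$; in either case $\legendre{-1}{p}=2-(p\bmod4)$, one verifies $\tfrac14\bigl(p-(p\bmod4)\bigr)-\tfrac14\legendre{-1}{p}=\tfrac14(p-2)$, so the two expressions agree term by term. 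The only mild subtlety — and the one place to be careful — is the bookkeeping of which boundary values $x=0,-a$ get excluded from each of the four sums and checking they contribute what is claimed; everything else is routine character-sum manipulation valid precisely because $p$ is odd.
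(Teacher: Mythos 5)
Your proposal is correct, and every step checks out: the boundary bookkeeping is right (the restricted linear sums are $-\legendre{-a}{p}$ and $-\legendre{a}{p}$ respectively, the excluded terms of the cross sum vanish), the evaluation $\sum_{x}\legendre{x(x+a)}{p}=-1$ is the standard one, and the final identification uses $\legendre{a}{p}\legendre{-a}{p}=\legendre{-1}{p}=2-(p\bmod 4)$ so that $p-(p\bmod 4)-\legendre{-1}{p}=p-2$ in both residue classes of $p$ modulo $4$.

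However, your route is genuinely different from the paper's. You expand the product of the two indicator functions $\tfrac12(1+s_i\legendre{\cdot}{p})$ and evaluate four character sums, obtaining the closed form in one pass. The paper instead treats the four counts $[+,+],[+,-],[-,+],[-,-]$ as unknowns and assembles four linear relations: the same cross character sum $\sum_x\legendre{x(x+a)}{p}=-1$ gives $[+,+]+[-,-]-[+,-]-[-,+]=-1$; counting all $p$ tuples minus the two containing a zero gives the total $p-2$; counting elements of each quadratic character class gives row sums like $[+,+]+[+,-]+[+,0]=\tfrac{p-1}{2}$; and the bijection $(x,x+a)\mapsto(-(x+a),-x)$ gives either $[+,+]=[-,-]$ or $[+,-]=[-,+]$ depending on $\legendre{-1}{p}$. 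Solving that system case by case produces an explicit table from which the formula is read off. Your approach is shorter and avoids both the case split on $\legendre{-1}{p}$ and the linear algebra, at the cost of the (entirely routine) indicator-expansion machinery; the paper's approach yields the explicit table of all four counts as a byproduct, which is what it actually cites later (in Lemma \ref{lem:SampleModP}). Both hinge on the same key input, the evaluation of $\sum_x\legendre{x(x+a)}{p}$.
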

\bigskip

\begin{lemma}\label{lem:ORD=}
Let $p$ be a prime, $k$ be a positive integer, and $\sym_1,\sym_2, \sym$
be $p^k$-symbols with $\ordp(\sym)=\ordp(\sym_1)=\ordp(\sym_2)<k$. Then,
the size of the $\fspksym(\gamma_1,\gamma_2)$ is
$0$ for $p=2$ and otherwise it is 
calculated by substituting $\legendre{a}{p}=\sgnp(\sym_1)$,
$s_1=\sgnp(\sym_2)$ and $s_2=\sgnp(\sym)$ in 
Equation \ref{eq:ordaeqordt} and multiplying the result by $p^{k-\ordp(\sym)-1}$. 
\end{lemma}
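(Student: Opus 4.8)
The plan is to dispose of $p=2$ by a parity obstruction and, for odd $p$, to peel off the common power $p^{\alpha}$ (with $\alpha:=\ordp(\sym)=\ordp(\sym_1)=\ordp(\sym_2)<k$) and reduce the count to a residue count of exactly the type handled by Lemma~\ref{lem:SizeModP}. For $p=2$, fix any $t$ with $\sympk(t)=\sym$, so $\ordt(t)=\alpha$ and $\copt(t)$ is odd; if $(a,b)$ were counted by $\fspksym(\gamma_1,\gamma_2)$ then $\ordt(a)=\ordt(b)=\alpha$ and $a+b\equiv t\pmod{2^{k}}$, so writing $a=2^{\alpha}a_{0}$, $b=2^{\alpha}b_{0}$, $t=2^{\alpha}t_{0}$ with $a_{0},b_{0},t_{0}$ odd gives $a_{0}+b_{0}\equiv t_{0}\pmod{2^{k-\alpha}}$; since $\alpha<k$, reducing modulo $2$ yields $0\equiv a_{0}+b_{0}\equiv t_{0}\equiv 1\pmod 2$, a contradiction. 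Hence the set is empty and $\fspksym(\gamma_1,\gamma_2)=0$.

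For odd $p$, I would fix $t$ with $\sympk(t)=\sym$ and write $t=p^{\alpha}t'$ with $t'=\copp(t)$ a unit, so $\legendre{t'\bmod p}{p}=\sgnp(\sym)$. A counted pair $(a,b)$ is determined by $a$ via $b\equiv t-a\pmod{p^{k}}$, so it suffices to count admissible $a$. Since $\ordp(a)=\alpha$ we have $a=p^{\alpha}a'$ with $a'\in\{0,\dots,p^{k-\alpha}-1\}$, and then $b\equiv p^{\alpha}(t'-a')\pmod{p^{k}}$, $\sgnp(a)=\legendre{a'\bmod p}{p}$, $\sgnp(b)=\legendre{(t'-a')\bmod p}{p}$. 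Thus $a=p^{\alpha}a'$ is admissible exactly when $\legendre{a'\bmod p}{p}=\sgnp(\sym_1)$ and $\legendre{(t'-a')\bmod p}{p}=\sgnp(\sym_2)$ — conditions depending only on $r:=a'\bmod p$ whose right-hand sides, lying in $\{\pm1\}$, already force $p\nmid a'$ and $p\nmid t'-a'$ (hence $\ordp(a)=\ordp(b)=\alpha$). For each $r$ satisfying both there are exactly $p^{k-\alpha-1}$ lifts $a'$, and distinct $a'$ give distinct $a$, so
\[
\fspksym(\gamma_1,\gamma_2)=p^{k-\alpha-1}\cdot N,\qquad N:=\#\bigl\{r\in\zpz : \legendre{r}{p}=\sgnp(\sym_1),\ \legendre{t'-r}{p}=\sgnp(\sym_2)\bigr\}.
\]

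To evaluate $N$, I would apply Lemma~\ref{lem:SizeModP} with $x=r$ and $a=-t'$ (nonzero, since $t'$ is a unit): then $x+a=r-t'$ and, using $\legendre{t'-r}{p}=\legendre{-1}{p}\legendre{r-t'}{p}$, the constraints become $\legendre{x}{p}=\sgnp(\sym_1)$ and $\legendre{x+a}{p}=\legendre{-1}{p}\sgnp(\sym_2)$, while $\legendre{a}{p}=\legendre{-1}{p}\sgnp(\sym)$. Equation~\ref{eq:ordaeqordt} then gives a closed form for $N$; expanding the product, using that each Legendre symbol squares to $1$, and that $p\bmod4=2-\legendre{-1}{p}$, one obtains
\[
N=\tfrac14\Bigl(p-2-\sgnp(\sym)\sgnp(\sym_1)-\sgnp(\sym)\sgnp(\sym_2)-\legendre{-1}{p}\,\sgnp(\sym_1)\sgnp(\sym_2)\Bigr),
\]
and the same short computation shows this is precisely the value of Equation~\ref{eq:ordaeqordt} under the substitution $\legendre{a}{p}=\sgnp(\sym_1)$, $s_{1}=\sgnp(\sym_2)$, $s_{2}=\sgnp(\sym)$ from the statement. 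Multiplying by $p^{k-\alpha-1}$ finishes the argument.

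No step is conceptually hard; the two spots needing care are (i) confirming that the pair of Legendre conditions already pins $\ordp(a)=\ordp(b)=\alpha$ down exactly, so that each admissible $a$ is counted once and each good residue modulo $p$ lifts in exactly $p^{k-\alpha-1}$ ways, and (ii) reconciling the substitution into Equation~\ref{eq:ordaeqordt} forced by the natural application of Lemma~\ref{lem:SizeModP} with the equivalent but differently written one in the statement — which reduces to the elementary identity $(q\sigma+\sigma_{1})(\sigma+q\sigma_{2})=(\sigma_{1}+\sigma_{2})(q\sigma_{1}+\sigma)$ for $q,\sigma,\sigma_{1},\sigma_{2}\in\{\pm1\}$, both sides equalling $q+\sigma\sigma_{1}+\sigma\sigma_{2}+q\sigma_{1}\sigma_{2}$.
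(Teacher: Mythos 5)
Your proof is correct and follows essentially the same route as the paper: the $p=2$ case is killed by the $2$-order parity obstruction, and for odd $p$ the count reduces to the residue count of Lemma~\ref{lem:SizeModP} modulo $p$, multiplied by $p^{k-\ordp(\sym)-1}$ for the freely chosen higher digits. You are in fact more careful than the paper on one point it leaves implicit, namely verifying that the substitution forced by the natural application of Lemma~\ref{lem:SizeModP} (with $a=-\copp(t)$ and the sign of $\sym_2$ twisted by $\legendre{-1}{p}$) agrees with the differently written substitution in the statement of the lemma, via the identity $(q\sigma+\sigma_1)(\sigma+q\sigma_2)=(\sigma_1+\sigma_2)(q\sigma_1+\sigma)$ for signs in $\{\pm1\}$.
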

\begin{proof}
The equality $\ordp(\sym)=\ordp(\gamma_1)=\ordp(\gamma_2)$
is not possible in case $p=2$ because the sum of two numbers of
the same $2$-order is always a number of higher $2$-order. For odd  
prime $p$, if $t \in \zpkz$ be such that $\sympk(t)=\sym$, then 
we are looking for number of solutions in $\zpkz$ of the
following equation.
\begin{align}
p^{\ordp(\sym)}\copp(a) + p^{\ordp(\sym)}\copp(b) \equiv 
p^{\ordp(\sym)}\copp(t) 
\bmod{p^k}\nonumber\\
\iff \copp(a) + \copp(b) \equiv \copp(\sym) \bmod{p^{k-\ordp(\sym)}} 
\label{EQ:ORD=}
\end{align}
The number of solutions of Equation \ref{EQ:ORD=} modulo $p$ is given 
by Lemma \ref{lem:SizeModP}. The other $(k-\ordp(\sym)-1)$ digits in the
$p$-expansion of $\copp(a)$ can be chosen freely. Thus, the number of
possibilities multiply by $p^{k-\ordp(\sym)-1}$.
\end{proof}

It turns out that if $\ordp(\sym) \neq \ordp(\sym_1)$ then 
for every $a \in \ypk(\sym_1), t \in \ypk(\sym)$ 
the value of $\sympk(t-a)$ does not depend on the specific choice
of $a$, or $t$. The following lemma proves this assertion.


\begin{lemma}\label{lem:ORDn=}
Let $p$ be a prime, $k$ be a positive integer, $\sym_1,\sym_2, \sym$
be $p^k$-symbols with $\ordp(\sym)\neq\ordp(\sym_1)$. Then, for
\begin{align*}
\sym_3 &= \left\{
	\begin{array}{ll}
	\sym & \text{if $p\neq 2$, } \ordp(\sym)<\ordp(\sym_1)\\
	\left(\ordp(\sym_1), \legendre{-1}{p}\sgnp(\sym_1)\right) 
		& \text{if $p\neq 2$, } \ordp(\sym)>\ordp(\sym_1)\\
	\left(\ordt(\sym), s_1\right) & \text{if p=2, } 
	\ordt(\sym)<\ordt(\sym_1)\text{ and, }\\
	\left(\ordt(\sym_1), s_2\right) & \text{if p=2, } 
	\ordt(\sym)>\ordt(\sym_1)\text{, where, }\\
	\end{array}\right. \\
	s_1 &:= \sgnt(\sym)-2^{\ordt(\sym_1)-\ordt(\sym)}\sgnt(\sym_1) \bmod 8\\
	s_2 &:= 2^{\ordt(\sym)-\ordt(\sym_1)}\sgnt(\sym)-\sgnt(\sym_1) \bmod 8,
\end{align*}
we have,
\[
\fspksym(\sym_1,\sym_2) = \left\{\begin{array}{ll}
\fypk(\sym_1) & \text{if }\sym_2=\sym_3, \text{ and}\\
0 & \text{otherwise.}
\end{array}\right.
\]
\end{lemma}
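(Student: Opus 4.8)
The plan is to fix a representative $t \in \ypk(\sym)$ and an arbitrary $a \in \ypk(\sym_1)$, and to show that $b := t - a \bmod p^k$ always has $p^k$-symbol exactly $\sym_3$, regardless of which $a$ and $t$ we picked. Once that is established, the map $a \mapsto (a, t-a)$ is a bijection from $\ypk(\sym_1)$ onto $\spksym(\sym_1, \sym_2)$ when $\sym_2 = \sym_3$, and $\spksym(\sym_1,\sym_2)$ is empty otherwise; this immediately gives the stated formula $\fspksym(\sym_1,\sym_2) = \fypk(\sym_1)$ in the first case and $0$ in the second. By Lemma~\ref{lem:SplitClassInvariance} it does not matter which representative $t$ we chose, so it suffices to argue for one convenient $t$; I would in fact argue directly for all $t \in \ypk(\sym)$ and all $a \in \ypk(\sym_1)$ at once, since the computation is uniform.

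The core is the computation of $\sympk(t-a)$ by cases on $\ordp(\sym)$ versus $\ordp(\sym_1)$, using only the ultrametric behaviour of $\ordp$ and the definition of $\sgnp$ as a $\copp$-twisted Legendre/Kronecker symbol. In the case $\ordp(\sym) < \ordp(\sym_1)$ (odd $p$): write $t = p^{\ordp(\sym)}\copp(t)$ and $a = p^{\ordp(\sym_1)}\copp(a)$, so $t - a = p^{\ordp(\sym)}(\copp(t) - p^{\ordp(\sym_1)-\ordp(\sym)}\copp(a))$; the parenthesized factor is a unit congruent to $\copp(t)$ modulo $p$, hence $\ordp(t-a) = \ordp(\sym)$ and $\sgnp(t-a) = \legendre{\copp(t)}{p} = \sgnp(\sym)$, i.e. $\sym_3 = \sym$. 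Symmetrically, for $\ordp(\sym) > \ordp(\sym_1)$ we pull out $p^{\ordp(\sym_1)}$ and the unit factor is $-\copp(a) + p^{\ordp(\sym)-\ordp(\sym_1)}\copp(t) \equiv -\copp(a) \pmod p$, giving $\ordp(t-a)=\ordp(\sym_1)$ and $\sgnp(t-a) = \legendre{-\copp(a)}{p} = \legendre{-1}{p}\sgnp(\sym_1)$, matching the stated $\sym_3$. For $p=2$ the $2$-order computation is the same (the smaller $2$-order dominates), but $\sgnt$ is the value modulo $8$ of the $2$-coprime part, so instead of a Legendre symbol I compute the coprime part of $t-a$ modulo $8$ directly: this is exactly $\sgnt(\sym) - 2^{\ordt(\sym_1)-\ordt(\sym)}\sgnt(\sym_1) \bmod 8$ when $\ordt(\sym)<\ordt(\sym_1)$ and $2^{\ordt(\sym)-\ordt(\sym_1)}\sgnt(\sym)-\sgnt(\sym_1)\bmod 8$ when $\ordt(\sym)>\ordt(\sym_1)$, which are precisely the definitions of $s_1$ and $s_2$. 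One has to check these quantities are odd (so that they are legitimate $2$-signs): in the first subcase the leading term $\sgnt(\sym)$ is odd and the subtracted term is even since $\ordt(\sym_1)>\ordt(\sym)$; in the second, symmetrically $\sgnt(\sym_1)$ is odd and the other term even.

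The one genuinely delicate point — the main obstacle — is making sure the "unit factor reduces to a fixed value mod $p$" argument actually controls $\sgnp$ and not just $\ordp$, given that $\sgnp$ for odd $p$ depends on the coprime part only modulo $p$ (fine, handled above) but that the Kronecker symbol at $p=2$ depends on the coprime part modulo $8$. So for $p=2$ I must verify that the value of $\copt(t-a) \bmod 8$ genuinely depends only on $\sgnt(\sym), \sgnt(\sym_1)$ and the difference of $2$-orders, not on higher digits of $t$ or $a$; this works because after factoring out the common power $2^{\min}$, the remaining expression modulo $8$ involves $\copt(t)$ and $\copt(a)$ each multiplied by a power of $2$ that is either $1$ or $\geq 2$, and in either case only the residue of $\copt(\cdot)$ modulo $8$ (i.e. $\sgnt(\cdot)$) survives modulo $8$ — here it is important that $\ordt(\sym)\neq\ordt(\sym_1)$ so the two terms have different $2$-adic valuations and no cancellation of the leading odd part can occur. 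I would also dispatch the boundary behaviour (when one of the resulting orders is $\geq k$, the symbol collapses to $(\infty,0)$) by noting the convention in Definition~\ref{def:SymbolOneDim} that the symbol is taken of $t-a \bmod p^k$, so these cases are automatically consistent. With these checks in place the bijection argument closes the proof.
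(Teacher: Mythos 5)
Your proposal is correct and follows essentially the same route as the paper's proof: fix arbitrary $a \in \ypk(\sym_1)$ and $t \in \ypk(\sym)$, show via the ultrametric property of $\ordp$ and the definition of $\sgnp$ that $\sympk(t-a)=\sym_3$ independently of the choices, and conclude via the bijection $a \mapsto (a,t-a)$. Your extra checks for $p=2$ (that the mod-$8$ value depends only on the two signs and the order difference, and that $s_1,s_2$ are odd) are details the paper asserts without elaboration, so you are if anything slightly more careful.
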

\begin{proof}
By the statement of the lemma,
$\ordp(\sym) \neq \ordp(\sym_1)$. 
Let $a \in \ypk(\sym_1)$ and $t \in \ypk(\sym)$ be arbitrary
elements. Then, it
suffices to show that $\sympk(t-a)=\sym_3$.

By definition of $p$-order,
$\ordp(t-a)=\min\{\ordp(a), \ordp(t)\}$. This shows that
$\ordp(t-a)=\ordp(\sym_3)$. Next, we show that 
$\sgnp(t-a)=\sgnp(\sym_3)$. We divide the proof of this fact in
two parts, depending on the prime $p$.
\begin{description}
\item[$p$ odd.] By definition of $p$-sign,
it follows that $\sgnp(t-a)=\sgnp(\sym_3)$ from the
equation below.
\[
\sgnp(t-a)=\legendre{\copp(t-a)}{p}=\left\{\begin{array}{ll}
\legendre{-\copp(a)}{p} & \text{if }\ordp(t)>\ordp(a), \text{ and,} \\
\legendre{\copp(t)}{p} & \text{otherwise.}
\end{array}\right.
\] 
\item[$p$=2.] By definition of $2$-sign, it follows that
$\sgnt(t-a)=\sgnt(\sym_3)$ from the equality below.
\[
\sgnt(t-a) = \copt(t-a) \bmod 8 = \left\{\begin{array}{ll}
s_1 & \text{if }\ordt(\sym) < \ordt(\sym_1) \text{ and,}\\
s_2 & \text{otherwise.}
\end{array}\right.
\]
\end{description}
\end{proof}

Next, we compute the cardinality of $\ypk(\sym)$ for an 
arbitrary $p^k$-symbol $\sym$.

\begin{lemma}\label{lem:SymbolSize}
Let $p$ be a prime, $k$ be a positive integer and $a \in \zpkz$ 
be a non-zero integer. Then, 
\begin{displaymath}
\fypk(\sympk(a))=\left\{ \begin{array}{ll}
\max\{2^{k-\ordt(a)-3},1\} & \text{if $p=2$}\\
\frac{p-1}{2}p^{k-\ordp(a)-1} & \text{otherwise.}
\end{array}\right.
\end{displaymath}
\end{lemma}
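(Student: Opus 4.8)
The plan is to reduce the count to a purely local computation: fix $\alpha := \ordp(a)$, which is finite (indeed $<k$) since $a$ is a nonzero element of $\zpkz$, and set $j := k-\alpha \ge 1$. First I would observe that for $x\in\zpkz$ the condition $\sympk(x)=\sympk(a)$ is equivalent to $\ordp(x)=\alpha$ together with $\sgnp(x)=\sgnp(a)$. Writing $x=p^{\alpha}u$, the elements of $\zpkz$ of $p$-order exactly $\alpha$ are parametrized bijectively by integers $u\in\{1,\dots,p^{j}-1\}$ coprime to $p$, and $\sgnp(x)$ depends only on $u$: it equals $\legendre{u\bmod p}{p}$ when $p$ is odd, and $u\bmod 8$ when $p=2$. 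So everything comes down to counting, among the $p^{j}$ residues $u\in\{0,\dots,p^{j}-1\}$, those that are coprime to $p$ and carry the prescribed sign, and then noting that each residue class modulo $p$ (resp. modulo $8$) contains exactly $p^{j-1}$ (resp. $2^{j}/8$, when $j\ge 3$) such $u$.

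For odd $p$ this finishes the job quickly. By Lemma~\ref{lem:QR} exactly $\tfrac{p-1}{2}$ of the $p-1$ nonzero residues modulo $p$ are quadratic residues and the other $\tfrac{p-1}{2}$ are non-residues; the residue $0$ contributes no valid $u$ (it is not coprime to $p$, and also $\legendre{0}{p}\notin\{+1,-1\}$), so it is discarded automatically. Each of the $\tfrac{p-1}{2}$ residue classes with the right Legendre symbol contains exactly $p^{j-1}$ integers $u\in\{0,\dots,p^{j}-1\}$, giving $\fypk(\sympk(a))=\tfrac{p-1}{2}\,p^{k-\ordp(a)-1}$, as claimed.

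For $p=2$ I would split on the size of $j$. When $j\ge 3$, the odd residues modulo $8$ are $\{1,3,5,7\}$, and among the $2^{j-1}$ odd $u\in\{0,\dots,2^{j}-1\}$ each such class is hit exactly $2^{j-1}/4=2^{j-3}$ times, so $\fypk(\sympk(a))=2^{j-3}=\max\{2^{j-3},1\}$. When $j\in\{1,2\}$ there are only $2^{j-1}\in\{1,2\}$ odd $u$ in range, occupying distinct residues modulo $8$ (the set $\{1\}$ or $\{1,3\}$), so exactly one of them carries the sign $\sgnt(a)$, whence the count is $1=\max\{2^{j-3},1\}$ since $j-3<0$.

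The one place where the bookkeeping genuinely needs care—and the reason the answer is $\max\{2^{k-\ordt(a)-3},1\}$ rather than a clean power of two—is precisely this small-$j$ behaviour for $p=2$: $\sgnt$ is defined as $\copt(\cdot)\bmod 8$, but for an element of $\ztkz$ of $2$-order $\alpha$ the coprime part lies in $\{1,\dots,2^{j}-1\}$, so not all of $\{1,3,5,7\}$ arise as signs when $j<3$. It is important here that the lemma fixes a genuine $a$, so the target sign $\sgnt(a)$ is always one that actually occurs, and the corresponding fibre has size exactly $1$; conflating the symbol with an abstract element of $\ORD\times\SGN$ would be the pitfall to avoid.
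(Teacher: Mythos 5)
Your proof is correct and follows essentially the same route as the paper's: strip off the factor $p^{\ordp(a)}$ and count the units with the prescribed sign, which amounts to fixing one digit (odd $p$) or three digits ($p=2$) of the $p$-expansion and letting the rest vary freely. Your explicit treatment of the case $k-\ordt(a)<3$ for $p=2$ is a welcome bit of extra care that the paper leaves implicit in the $\max\{\cdot,1\}$.
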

\begin{proof}
Let $x\in\zpkz$ be an element with the same $p$-symbol as $a$.
Then, $\ordp(x)=\ordp(a)$ and $\sgnp(x)=\sgnp(t)$. Recall
the $p$-expansion of $x$ i.e., definition \ref{def:pexpansion}.
There are $k$ digits in the $p$-expansion of $x$ for $x\in\zpkz$;
first $\ordp(a)$ of which must be identically~0.

For odd prime $p$, $\sgnp(x)=\sgnp(a)$ iff 
$\legendre{\copp(x)\copp(t)}{p}=1$. Thus, the $(\ordp(a)+1)$'th
digit of $x$ must be a non-zero element of $\zpz$ with the same
sign as $\legendre{\copp(a)}{p}$. By Lemma \ref{lem:QR}, there
are $\frac{p-1}{2}$ possibilities for the $(\ordp(a)+1)$'th
digit of $x$. The rest can be chosen freely from $\zpz$.

For the prime~2, $\sgnt(x)=\sgnt(a)$ iff 
$\copt(x)\equiv\copt(a) \bmod 8$. Thus, the digits 
$(\ordp(a)+1),\cdots,(\ordp(a)+2)$ of $x$ must match those
of $a$. The rest can be chosen freely from $\ztz$.
\end{proof}

We are now ready to compute the size of the class 
$\fspksym(\sym_1,\sym_2)$.

\begin{lemma}\label{lem:SplitClassSize}
Let $p$ be a prime, $k$ be an integer and $\sym, \sym_1, \sym_2$
be $p^k$-symbols. Then, Algorithm~\label{alg:Split} computes 
$\fspksym(\sym_1,\sym_2)$ and performs only $O(1)$ operations
over integers of size $O(p^{2k})$.
\end{lemma}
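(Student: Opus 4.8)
The plan is to build the claimed algorithm as a short case analysis on the relative $p$-orders of the three input symbols, invoking Lemma~\ref{lem:SymbolSize}, Lemma~\ref{lem:ORDn=} and Lemma~\ref{lem:ORD=} for the individual cases. First I would observe that $\fspksym(\sym_1,\sym_2)$ is well defined: by Lemma~\ref{lem:SplitClassInvariance} it equals $\fspkt(\sym_1,\sym_2)$ for any fixed $t\in\ypk(\sym)$, so it suffices to count pairs $(a,b)\in(\zpkz)^2$ with $\sympk(a)=\sym_1$, $\sympk(b)=\sym_2$, $a+b\equiv t\modpk$. The bijection $(a,b)\mapsto(b,a)$ also gives $\fspksym(\sym_1,\sym_2)=\fspksym(\sym_2,\sym_1)$, a symmetry I will use to halve the number of cases. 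Recall finally that a $p^k$-symbol has $p$-order $\infty$ exactly when it is the zero symbol $0=(\infty,0)$, and that $\sgnp(0)=0$.

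The case analysis runs as follows. If $\sym_1=0$ then $a=0$ and $b=t$, so the answer is $1$ if $\sym_2=\sym$ and $0$ otherwise; the case $\sym_2=0$ is symmetric. Now assume $\sym_1,\sym_2\neq 0$. If $\ordp(\sym)\neq\ordp(\sym_1)$ --- which includes $\sym=0$, with the reading $2^{\infty}\cdot 0=0$ in the $p=2$ formula for $\sym_3$, legitimate because $\sgnp(0)=0$ --- then Lemma~\ref{lem:ORDn=} applies and the answer is $\fypk(\sym_1)$, evaluated via Lemma~\ref{lem:SymbolSize}, when $\sym_2$ equals the explicitly computed symbol $\sym_3$, and $0$ otherwise. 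If instead $\ordp(\sym)=\ordp(\sym_1)$ but $\ordp(\sym)\neq\ordp(\sym_2)$, I apply the same reasoning to $\fspksym(\sym_2,\sym_1)$ using the $a\leftrightarrow b$ symmetry. The only remaining case is $\ordp(\sym)=\ordp(\sym_1)=\ordp(\sym_2)$; since $\sym_1\neq 0$ this common order is finite, hence $<k$, and this is exactly the hypothesis of Lemma~\ref{lem:ORD=}: the answer is $0$ for $p=2$, and for odd $p$ it is obtained by substituting $\sgnp(\sym_1)$, $\sgnp(\sym_2)$, $\sgnp(\sym)$ into Equation~\ref{eq:ordaeqordt} and multiplying by $p^{k-\ordp(\sym)-1}$. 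This is exhaustive, which proves correctness.

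For the running time, each branch is a closed-form expression in $p$, $k$ and the symbol data: a constant number of additions, multiplications, comparisons, reductions modulo $8$ (when $p=2$) and powers $p^e$ with $e\le k$. The largest integers that occur are the values of $\fypk(\cdot)$ and the factor $p^{k-\ordp(\sym)-1}$ together with their products with $O(p)$-sized quantities coming from Equation~\ref{eq:ordaeqordt}, all bounded by $O(p^{2k})$. Hence the algorithm performs $O(1)$ operations on integers of size $O(p^{2k})$. The only point that needs care is checking that the case distinction is exhaustive --- in particular that the zero symbol / $p$-order $\infty$ is covered --- and that the $a\leftrightarrow b$ symmetry lets us reuse Lemma~\ref{lem:ORDn=} without re-deriving it with $\sym_1$ and $\sym_2$ interchanged; the genuinely substantive work has already been done in the preceding lemmas.
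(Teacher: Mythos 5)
Your proof is correct and follows essentially the same route as the paper: a case analysis on the relative $p$-orders of $\sym,\sym_1,\sym_2$, delegating the non-degenerate cases to Lemma~\ref{lem:ORDn=} and Lemma~\ref{lem:ORD=} and the counting to Lemma~\ref{lem:SymbolSize}. The only real difference is in how the degenerate ($p$-order $\infty$) cases are organized, and there your version is actually tighter than the paper's. The paper handles $\ordp(\sym)=\infty$ with $\ordp(\sym_1)=\ordp(\sym_2)=\alpha<k$ by counting \emph{all} elements of $p$-order $\alpha$, returning $(p-1)p^{k-\alpha-1}$ without checking that the signs are compatible (one needs $\sgnp(\sym_2)=\legendre{-1}{p}\sgnp(\sym_1)$ for odd $p$, since $b\equiv -a$); likewise it returns $1$ when exactly one of $\sym_1,\sym_2$ is the zero symbol without verifying that the other equals $\sym$. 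By routing the first situation through Lemma~\ref{lem:ORDn=} (with the $2^{\infty}\cdot 0=0$ convention you spell out) and by arguing the second directly from $a=0$, $b=t$, you get both conditions right and obtain $\fypk(\sym_1)$ rather than the full order-$\alpha$ count. Your use of the $(a,b)\mapsto(b,a)$ symmetry in place of the paper's explicit swap step is an equivalent reformulation.
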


\begin{proof}
Recall the definition of $\fspksym(\sym_1,\sym_2)$. For any choice
$t \in \ypk(\sym)$, it is the size of the following set.
\begin{equation}\label{def:splitset}
\{(a,b) \mid a \in \ypk(\sym_1), b \in \ypk(\sym_2), a+b \equiv t \bmod{p^k}\}
\end{equation}

Consider the situation when $\ordp(\sym)=\infty$. In this case,
$t\equiv 0\bmod{p^k}$. By Equation \ref{def:splitset}, it follows
that $\ordp(\sym_1)=\ordp(\sym_2)$. There are two possible sub-cases.
\begin{description}
\item[$\ordp(\sym_1)=\infty$] In this case, $\ordp(\sym_2)=\infty$
and the only possible element in the set $\fspksym(\sym_1,\sym_2)$ is
$(0,0)$.
\item[$\ordp(\sym_1)<k$] Note that $\ordp(\sym_1)=\ordp(\sym_2)(=\alpha)$, say. 
Then, we are looking for the number of solutions of the following equation.
\[
p^\alpha \nu_1 + p^\alpha \nu_2 \equiv 0 \bmod{p^k}\qquad
\nu_1, \nu_2 \in (\zpkz)^\times 
\]
The number of solutions is equal to the number of possible elements in
$\zpkz$ with $p$-order $\alpha$. This equals $p^{k-\alpha-1}$ times $(p-1)$.
\end{description}

Otherwise, $\ordp(\sym)<k$. 
If exactly one of $\ordp(\sym_1)$ and $\ordp(\sym_2)$
equals $\infty$ then $\fspksym(\sym_1,\sym_2)=1$. If both equal $\infty$,
then the set $\fspksym(\sym_1,\sym_2)$ is empty.

It remains to consider the case when 
$\ordp(\sym),\ordp(\sym_1),\ordp(\sym_2)<k$. There are three sub-cases.
\begin{description}
\item[$\ordp(\sym) \neq \ordp(\sym_1)$] The correctness follows from
Lemma \ref{lem:ORDn=}.
\item[$\ordp(\sym) \neq \ordp(\sym_2)$] The correctness follows from
Lemma \ref{lem:ORDn=}.
\item[$\ordp(\sym) = \ordp(\sym_1)=\ordp(\sym_2)$] The correctness follows from
Lemma \ref{lem:ORD=}.
\end{description}

The size of the set $\fspksym(\sym_1,\sym_2)$ is bounded by the number
of elements in $(\zpkz)^2$, which is $p^{2k}$.

\begin{algorithm}
\SetKw{Return}{return}
\SetKw{And}{and}
\SetKw{Or}{or}
\caption{\textsc{SplitClassSize}($p,k,\sym, \sym_1,\sym_2$)}\label{alg:Split}
\uIf{$\ordp(\sym)=\infty$}{
	\lIf{$\ordp(\sym_1) \neq \ordp(\sym_2)$}{\Return $0$}
	\lElseIf{$\ordp(\sym_1)=\ordp(\sym_2)=\infty$}{\Return $1$}
	\lElse{\Return $(p-1)p^{k-\ordp(\sym_1)-1}$}
}
\lIf{$\ordp(\sym_1)=\infty$ \And $\ordp(\sym_2)=\infty$}{\Return $0$}
\lIf{$\ordp(\sym_1)=\infty$ \Or $\ordp(\sym_2)=\infty$}{\Return $1$}
\lIf{$\ordp(\sym)=\ordp(\sym_1)$}{\textsc{swap}$(\sym_1,\sym_2)$}
\uIf{$\ordp(\sym)=\ordp(\sym_1)$}{
	\lIf{$p=2$}{\Return $0$}
	\Else{
	$x :=(\sgnp(\sym_2)+\sgnp(\sym_1))
	\left(\legendre{-1}{p}\sgnp(\sym_1)+\sgnp(\sym)\right)$\;
		\Return $\frac{p^{k-\ordp(\sym)-1}}{4}(p-(p\bmod 4)-x)$
	}
}
\Else{
	\uIf{$p=2$}{
		\lIf{$\ordp(\sym_1)<\ordp(\sym)$}{
			$\sym_3:=(\ordp(\sym_1), 
			  2^{\ordp(\sym)-\ordp(\sym_1)}\sgnt(\sym)-\sgnt(\sym_1) \bmod 8)$
		}
		\lElse{
			$\sym_3:=(\ordp(\sym), 
			  \sgnt(\sym)-2^{\ordp(\sym_1)-\ordp(\sym)}\sgnt(\sym_1) \bmod 8)$
		}
	}
	\Else{
		\lIf{$\ordp(\sym_1)<\ordp(\sym)$}{
			$\sym_3:=(\ordp(\sym_1), \legendre{-1}{p}\sgnp(\sym_1))$
		}
		\lElse{
			$\sym_3:=\sym$
		}
	}
	\lIf{$\sym_3 = \sym_2$}{\Return $\fypk(\sym_1)$}
	\lElse{\Return $0$}
}
\end{algorithm}

\end{proof}

\subsection{Dimension $>1$, Any Prime}\label{sec:Count}

\begin{theorem}\label{thm:RepresentTbyQModPk}
Let $p$ be a prime, $k$ be a positive integer,
$t \in \zpkz$ and $\MQ^n$ 
be an integral quadratic form. Then, there is an
algorithm that counts the total, primitive and non-primitive
number of $p^k$-representations of $t$ by $\MQ$ in
$O(n^{1+\omega}\log k+nk^3+n\log p)$ ring operations over $\zpkz$.
\end{theorem}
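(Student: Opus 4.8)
The plan is to assemble the pieces developed earlier in the paper into a single counting pipeline, mirroring the overview in Section~3.3. First I would invoke Theorem~\ref{thm:BlockDiagonal} to compute, in $O(n^{1+\omega}\log k)$ ring operations, a matrix $\MU \in \sln(\zpkz)$ such that $\MD := \MU'\MQ\MU \bmod{p^k}$ is diagonal (for odd $p$) or block diagonal in the sense of Definition~\ref{def:BlockDiagonal} (for $p=2$); write $\MD = \MD_1 \oplus \cdots \oplus \MD_m$ with each $\MD_i$ of Type~I (a single integer) or Type~II ($2\times 2$, only when $p=2$). By Lemma~\ref{lem:QFEquivalence} it suffices to count representations of $t$ by $\MD$, so from now on we work with $\MD$. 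Note $m \le n$.

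Next, for each block $\MD_i$ and each $p^k$-symbol $\gamma \in \ORD \times \SGN$ (there are only $O(k)$ such symbols), I would compute the triple of counts $\fapk(\MD_i,\gamma)$, $\fbpk(\MD_i,\gamma)$, $\fcpk(\MD_i,\gamma)$: for a Type~I block with odd $p$ this is Algorithm~\ref{alg:CountModPk1D} (Lemma~\ref{lem:CountModPk1D}, $O(\log k + \log p)$ ring operations per symbol); for a Type~I block with $p=2$ it is Algorithm~\ref{alg:CountMod2k1D} (Lemma~\ref{lem:CountMod2k1D}, $O(\log k)$ per symbol); and for a Type~II block it is Algorithm~\ref{alg:CountTypeII} (Lemma~\ref{lem:CountTypeII}, $O(k \log k)$ per symbol, since the recursion already charges this). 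Across all $m = O(n)$ blocks and all $O(k)$ symbols, this costs $O(nk^2\log k + nk\log p)$ ring operations, which is absorbed into the claimed bound. Simultaneously, using Algorithm~\ref{alg:Split} (Lemma~\ref{lem:SplitClassSize}), I would tabulate $\fspksym(\gamma_1,\gamma_2)$ for all $O(k^3)$ triples $(\sym,\gamma_1,\gamma_2)$ of $p^k$-symbols, each in $O(1)$ arithmetic operations on integers of size $O(p^{2k})$, for a total of $O(k^3)$ such operations.

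The final phase is the dynamic program over blocks. For $i = 1,\dots,m$ let $\MP_i := \MD_1 \oplus \cdots \oplus \MD_i$; I maintain, for each symbol $\sym$, the triple $(\fapk(\MP_i,\sym),\fbpk(\MP_i,\sym),\fcpk(\MP_i,\sym))$. The base case $\MP_1 = \MD_1$ is given by the per-block counts above. For the inductive step, $\MP_{i+1} = \MP_i \oplus \MD_{i+1}$, so Lemma~\ref{lem:Split} gives
\[
\fapk(\MP_{i+1},\sym) = \sum_{\gamma_1,\gamma_2 \in \ORD\times\SGN} \fspksym(\gamma_1,\gamma_2)\cdot \fapk(\MP_i,\gamma_1)\cdot \fapk(\MD_{i+1},\gamma_2),
\]
and identically for $\fcpk$; the primitive count is recovered as $\fbpk = \fapk - \fcpk$. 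Each such update is a sum over $O(k^2)$ pairs for each of $O(k)$ target symbols, i.e.\ $O(k^3)$ ring operations per block, for $O(nk^3)$ total. At the end, output $\fapk(\MD,\sympk(t))$ (and similarly the primitive/non-primitive counts), which by Lemma~\ref{lem:SymbolInvariance} equals $\fapk(\MQ,t)$ as desired. Correctness is just the composition of the cited lemmas; the main subtlety to get right is the $p=2$ Type~II accounting, where Lemma~\ref{lem:Split} still applies because it only requires $\MD$ to be a direct sum (a representation is non-primitive iff it is non-primitive on every block), and the $O(k\log k)$-per-symbol cost of Algorithm~\ref{alg:CountTypeII} must be checked not to blow the budget — it contributes $O(nk^2\log k)$, comfortably inside $O(nk^3)$. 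Summing the three phases gives $O(n^{1+\omega}\log k + nk^3 + n\log p)$ ring operations.
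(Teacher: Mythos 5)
Your proposal is correct and follows the paper's proof essentially verbatim: block-diagonalize via Theorem~\ref{thm:BlockDiagonal}, compute the per-block counts for each of the $O(k)$ symbols using the one-dimensional and Type~II counting lemmas, tabulate the $O(k^3)$ split sizes, run the dynamic program of Lemma~\ref{lem:Split} for $\fapk$ and $\fcpk$, and recover $\fbpk$ as the difference. The one small quibble is your intermediate bound $O(nk\log p)$ for the per-block phase, which is not literally $O(nk^3+n\log p)$ when $\log p \gg k^2$; this is repaired by noting that the only $O(\log p)$-cost step in Algorithm~\ref{alg:CountModPk1D}, the evaluation of $\legendre{\copp(\MD_i)}{p}$, depends on the block alone (the sign of the target is already part of the input symbol) and so can be cached and paid once per block rather than once per (block, symbol) pair.
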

\begin{proof}
Let $\MD^n=\diag(\MD_1,\cdots,\MD_a)$ be the block diagonal quadratic form
(see Definition \ref{def:BlockDiagonal}) 
equivalent to $\MQ$ over the ring $\zpkz$, where each $\MD_i$ is
either a Type I or a Type II block (takes $O(n^{1+\omega}\log k)$ ring
operations over $\zpkz$ to block diagonalize).
From Theorem \ref{thm:BlockDiagonal}, Lemma 
\ref{lem:QFEquivalence} and Lemma \ref{lem:SymbolInvariance}, 
it follows that 
\begin{gather*}
\fapk(\MQ,t)=\fapk(\MD,\sympk(t)) \\
\fbpk(\MQ,t)=\fbpk(\MD,\sympk(t)) \\
\fcpk(\MQ,t)=\fapk(\MD,\sympk(t))
\end{gather*}

We show how to calculate $\fapk(\MD,\sympk(t))$, using dynamic 
programming. Let us define $\MQ_i$, $\ORD$ and $\SGN$ as
follows.
\begin{align*}
\MQ_i&=\diag(\MD_1,\dots,\MD_i) \\
\ORD&=\{\infty, 0, \cdots, k-1\} \\
\SGN &= \left\{\begin{array}{ll}
\{0,1,3,5,7\} & \text{if $p=2$} \\
\{0,+1,-1\} & \text{Otherwise} 
\end{array}\right.
\end{align*}

One now proceeds as follows.
\begin{enumerate}[(i.)]
\item For each possible $p^k$-symbol 
$\sym \in \ORD\times\SGN$ and each $i \in [a]$;
compute $\fapk(\MD_i,\sym)$. 
The calculation can be done 
using Lemma 
\ref{lem:CountModPk1D}, Lemma \ref{lem:CountMod2k1D} or 
Lemma \ref{lem:CountTypeII} depending on
$p$ and the type of $\MD_i$. There are $O(k)$ possible
values for $\sym$, $a$ is bounded by $n$ and the 
calculation of $\fapk(\MD_i,\sym)$ takes $O(k\log k+\log p)$
ring operations. Thus, the number of ring operations needed 
for this step is $O(nk\log k + n \log p)$.
\item For each possible $p^k$-symbol triples 
$\sym, \sym_1, \sym_2 \in \ORD\times\SGN$ compute the
split set size $\fspksym(\sym_1,\sym_2)$ by Lemma 
\ref{lem:ORD=} and Lemma \ref{lem:ORDn=} taking $O(1)$ each. In total, this
step requires $O(k^3)$ operations over integers.
\item Starting from $i=1$, for each possible value of $p^k$-symbol
$\sym$, use the following formula (i.e.,
dynamic programming) to build the final result i.e., 
$\fapk(\MD, \SYM_{p^k}(t))$.
\begin{align}\label{EQ:dynamic}
\fapk(\MQ_i\oplus\MD_{i+1},\gamma) = 
 \sum_{\gamma_1, \gamma_2 \in \ORD\times\SGN}
\fS^{\gamma}_{p^k}(\gamma_1,\gamma_2)\cdot\fapk(\MQ_i,\gamma_1)
	\cdot\fapk(\MD_{i+1},\gamma_2) 
\end{align}
Note that $\fspksym(\sym_1,\sym_2)$ and $\fapk(\MD_{i+1}, \sym_2)$
have been pre-computed. For each $i$, the summation in Equation
\ref{EQ:dynamic} takes $O(k^2)$ operations over integers. As the
computation needs to be done for each possible $p^k$-symbol $\sym$,
in total, this step takes $O(k^3)$ operations over $\bbZ$. The final
solution can then be built by dynamic programming in $O(nk^3)$ operations
over the integers.
\end{enumerate}

The overall complexity of this algorithm is 
$O(n^{1+\omega}\log k+nk^3+n\log p)$.

The calculation for $\fcpk(\MQ,t)$ is the same i.e., replace $\fapk$ by
$\fcpk$. The number 
$\fbpk(\MQ,t)$ can be calculated by taking the difference
of $\fapk(\MQ,t)$ and $\fcpk(\MQ,t)$.
\end{proof}


\subsection{Computing Local Density}\label{Sec:Density}

Let $\MQ^n$ be a quadratic form, $k, t$ be positive integers and $p$ be a prime.
As defined earlier, $\fapk(\MQ,t)$ is the number of solutions of 
$\Vx'\MQ\Vx\equiv t \bmod{p^k}$ over the ring $\zpkz$. For sufficiently large
$k$, the quantity $\fapk(\MQ,t)=\alpha_p(\MQ,t)p^{k(n-1)}$, where $\alpha_p(\MQ,t)$
is a function $\MQ$ and $t$ and is called the local density. Note that
there are $p^{kn}$ possible choices for $\Vx \in (\zpkz)^n$ and so one can
interpret $\alpha_p(\MQ,t)/p^k$ as the probability that a random choice from
$(\zpkz)^n$ will satisfy the equation $\Vx'\MQ\Vx\equiv t\bmod{p^k}$.

There are several papers on computing the local density \cite{Yang98}. Our 
methods give an alternative way to compute the local density in polynomial
time. It can be shown that $k=1+\ordp(8t\det(\MQ))$ suffices for computing the
local density (pp~378-381, \cite{CS99}). Thus, 
\[
\alpha_p(\MQ,t) = \frac{\fA_{p^s}(\MQ,t)}{p^{s(n-1)}} \qquad s = 1+\ordp(8t\det(\MQ))
\]
This implies that the computation of $\fapk(\MQ,t)$ for 
$k\geq 1+\ordp(8t\det(\MQ))$ can be done in number of
ring operations over $\zpkz$ that does not depend on $k$. Note that
the bit complexity of the algorithm will still depend on $k$ 
because the ring operations are performed in the ring $\zpkz$.

We are not aware if a similar idea works for $\fcpk(\MQ,t)$ or
$\fbpk(\MQ,t)$.

\section{Sampling a Uniform Representation}\label{Sec:Sample}

Let $\MQ^n$ be an integral quadratic form, $p$ be a prime, 
$k$ be a positive integer and $t$ be an element of the ring $\zpkz$. 
In this section, we generate a uniformly random primitive representation 
of $t$ by $\MQ$ over $\zpkz$. The
algorithm runs in time $\poly(n,k,\log p)$ and 
fails with constant probability.
Otherwise, the algorithm outputs a 
uniformly random primitive representation.
A uniform representation and a uniform non-primitive representation can 
also be generated similarly. Note that we are assuming that a 
representation of the correct kind (primitive, non-primitive) 
exists. 

\subsection{Sampling Uniformly from a Split}
\label{sec:SampleFromSplit}

Let $p$ be a prime, $k$ be a positive integer, $t$ be 
an element of $\zpkz$ and $\gamma_1, \gamma_2 \in \ORD\times\SGN$ 
be a pair of $p^k$-symbols.
We show how to generate a uniform pair $(a,b) \in (\zpkz)^2$ from 
the following set;
\begin{align}\label{def:Set}
\Big\{(x,y) \in (\zpkz)^2\mid \sympk(x)=\gamma_1, \sympk(y)=\gamma_2, 
x+y\equiv t\modpk\Big\}\;
\end{align}

Recall Lemma \ref{lem:ORD=} and Lemma \ref{lem:ORDn=}. There are 
three possible cases and in
each case a uniform pair can be generated as follows.
\begin{description}
\item[$\ordp(\gamma_1) \neq \ordp(t)$] In this case, by Lemma 
\ref{lem:ORDn=}, a
uniform pair can be generated by picking a uniform $a$ from 
$\spk(\gamma_1)$ (use Lemma \ref{lem:SampleFromSymbol}), 
and outputting $(a, t-a \bmod{p^k})$.
\item[$\ordp(\gamma_1) = \ordp(t) \neq \ordp(\gamma_2)$] Pick a 
uniform $a$ from $\spk(\gamma_2)$ (use Lemma \ref{lem:SampleFromSymbol}),
 and output $(t-a\bmod{p^k},a)$. Correctness follows from Lemma 
\ref{lem:ORDn=}.
\item[$\ordp(\gamma_1)=\ordp(\gamma_2)=\ordp(t)$] 
Recall Lemma \ref{lem:ORD=}. This can never 
happen when $p=2$. Otherwise, generate using 
Lemma \ref{lem:SampleModP}. 
\end{description}

\begin{lemma}\label{lem:SampleFromSymbol}
Let $p$ be a prime, $k$ be a positive integer, 
and $\gamma$ be a $p^k$-symbol.
Then, there is an algorithm that performs $O(\log p)$ ring 
operations and (i) for $p=2$,
outputs a uniform element from the set 
$\ytk(\gamma)$, and (ii) for $p$ odd, 
with probability $1/6$ outputs a uniform element from
the set $\ypk(\gamma)$.
\end{lemma}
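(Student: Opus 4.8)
The plan is to read off the structure of $\ypk(\gamma)$ (and $\ytk(\gamma)$) from the base-$p$ digit count already carried out in the proof of Lemma~\ref{lem:SymbolSize}, and then sample that structure directly. Write $\gamma=(\alpha,s)$ with $\alpha=\ordp(\gamma)$ and $s=\sgnp(\gamma)$. If $\alpha=\infty$ then $s=0$ and $\ypk(\gamma)=\{0\}$, so the algorithm just returns $0$; from now on assume $\alpha<k$. The two facts I would isolate are: for odd $p$, the map $x\mapsto x/p^{\alpha}$ is a bijection from $\ypk(\gamma)$ onto $U_1:=\{v\in(\bbZ/p^{k-\alpha}\bbZ)^{\times}:\legendre{v}{p}=s\}$; and for $p=2$, the map $x\mapsto x/2^{\alpha}$ is a bijection from $\ytk(\gamma)$ onto $U_2:=\{v\in(\bbZ/2^{k-\alpha}\bbZ)^{\times}:v\equiv s\pmod{\min\{8,2^{k-\alpha}\}}\}$. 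Both are a direct transcription of Definition~\ref{def:pexpansion} together with Lemma~\ref{lem:Square}, and are exactly what underlies Lemma~\ref{lem:SymbolSize}. So it suffices to sample uniformly from $U_1$ (resp.\ $U_2$) and then multiply by $p^{\alpha}$ and reduce modulo $p^{k}$.

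For odd $p$, I would: sample $v$ uniformly from $\zpkz$; if $p\mid v$ (tested in $O(1)$ by reducing mod $p$) declare failure; otherwise compute $\legendre{v}{p}$ by Euler's criterion in $O(\log p)$ ring operations over $\zpz$; if it differs from $s$ declare failure; else output $p^{\alpha}v\bmod p^{k}$. Conditioned on not failing, $v$ is uniform over the units of $\zpkz$ with Legendre symbol $s$, hence $v\bmod p^{k-\alpha}$ is uniform over $U_1$ (each element of $U_1$ has exactly $p^{\alpha}$ unit lifts in $\zpkz$), and the bijection above turns this into a uniform element of $\ypk(\gamma)$. The success probability is $\Pr[p\nmid v]\cdot\Pr[\legendre{v}{p}=s\mid p\nmid v]=\frac{p-1}{p}\cdot\frac12=\frac{p-1}{2p}\ge\frac13\ge\frac16$, where $\Pr[\legendre{v}{p}=s\mid p\nmid v]=\frac12$ for either value $s\in\{+1,-1\}$ because exactly half of the units of $\zpkz$ are squares (Lemma~\ref{lem:Square}). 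This is precisely what part~(ii) asks for.

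For $p=2$, I would sample $w$ uniformly from $\ztkz$ and output $x:=2^{\alpha}(s+8w)\bmod 2^{k}$. Since $s$ is odd, $\ordt(x)=\alpha$ and $x/2^{\alpha}\equiv s+8w\equiv s\pmod{\min\{8,2^{k-\alpha}\}}$, so $\sympk(x)=\gamma$; as $w$ ranges over $\ztkz$ the value $x/2^{\alpha}$ ranges over all of $U_2$, each element attained the same number of times ($2^{\min\{\alpha+3,\,k\}}$ of the $2^{k}$ choices of $w$), so $x$ is uniform over $\ytk(\gamma)$ by the bijection --- with no randomness beyond the choice of $w$, hence no failure, matching part~(i). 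When $k-\alpha\le 2$ the set $\ytk(\gamma)$ has at most one element and the same formula simply returns it. All steps use $O(1)$ ring operations apart from forming $p^{\alpha}\bmod p^{k}$, which stays within the stated $O(\log p)$ budget (and is trivial for $p=2$).

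The potentially fiddly points are entirely bookkeeping: (a) spelling out the two bijections, which is a line-by-line restatement of the digit argument in Lemma~\ref{lem:SymbolSize} (first $\alpha$ digits zero, the next one or two digits forced by $s$, the rest free); and (b) the small-$k-\alpha$ corner cases for $p=2$, where one should note that a genuinely attained $2^{k}$-symbol with $k-\ordt(\gamma)<3$ forces $s$ into the appropriate residue class, so the singleton is returned correctly. I do not expect a real obstacle here; the one place that needs care is the reduction ``uniform over units with a prescribed Legendre symbol'' $\Rightarrow$ ``uniform over $\ypk(\gamma)$'', which is why I would argue it through the explicit $p^{\alpha}$-to-one fibering rather than a bare cardinality comparison.
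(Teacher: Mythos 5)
Your proposal is correct and follows essentially the same route as the paper: strip off the factor $p^{\alpha}$, sample the coprime part by rejection on the Legendre symbol for odd $p$ (success probability $\frac{p-1}{2p}\ge\frac13$), and construct the residue class $s\bmod 8$ directly for $p=2$. The only cosmetic differences are that you reject a uniform element of $\zpkz$ wholesale rather than sampling the low digit and the free digits separately as the paper does, and that the ``half the units are residues'' fact is really Lemma~\ref{lem:QR} rather than Lemma~\ref{lem:Square}; neither affects correctness.
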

\begin{proof}
If $\sym$ is $\sympk(0)$ then output $0$. Otherwise, proceed as follows.
Let $\ordp(\gamma)=i$ and $\sgnp(\gamma)=s$.
By Definition \ref{def:SymbolOneDim}, $s$ is in the set
$\{1,-1\}$ for odd prime $p$ and $\{1,3,5,7\}$ otherwise. If 
$r$ is an element of the set $\{x\in\zpkz\mid \sympk(x)=\gamma\}$
then $r = p^{i}b$, where $1\leq b < p^{k-i}$ is a number coprime 
to $p$ and $\sgnp(b)=s$. Thus, it suffices to generate a uniform 
number in the set
\[
S=\{y\mid 1\leq y < p^{k-i}, \gcd(y,p)=1, \sgnp(y)=s\}\;.
\]

Consider the case of $p$ odd. Recall Definition \ref{def:pexpansion}. 
The numbers in the set $S$ 
are of the form $dp+\tau$, where $\tau \in \{1\leq x<p\mid \sgnp(x)=s\}$
and $d$ is an integer satisfying $0 \leq d \leq \frac{p^{k-i}-1-\tau}{p}$.
A uniform element from $S$ can be chosen by picking a uniform integer 
$d$ in the set $\{0,\cdots, \lfloor\frac{p^{k-i}-1-\tau}{p}\rfloor\}$
and picking a uniform non-zero integer from $\zpz$ with sign $s$. 
Exactly half of the non-zero elements of $\zpz$ have sign $s$ and so 
picking one at random has a success probability of exactly 
$\frac{p-1}{2p}$ (we reject otherwise and output fail).

Otherwise, $p=2$. Additionally, assume that $k-i>2$. Then, the numbers
in the set $S$ are of the form $8d+s$, where $d$ is in the set 
$\{0,\cdots, \lfloor \frac{2^{k-i}-s}{8} \rfloor\}$. If $k-i\leq 2$
then there is only one possible element in $\ztkz$ with symbol $\gamma$, 
which is $2^{k-i}s$. 

For $p$ odd, the algorithm needs to generate a uniform non-zero 
element in $\zpz$ with sign $s$ and an element from a set of size at most
$p^{k-i}$. This needs $O(\log p)$ ring operations (for 
computing the Legendre symbol) and the probability of success 
is at least $\frac{1}{6}$ because $p\geq 3$.
In case of $p=2$, the algorithm
only needs to generate an element in a set of size at most $2^{k-i}$, 
completing the proof.
\end{proof}

\begin{lemma}\label{lem:SampleModP}
Let $p$ be an odd prime, $k, i$ be positive integers, 
$t$ be an element of $\zpkz$ and $\gamma_1, \gamma_2$ 
be $p^k$-symbols
with $\ordp(t)=\ordp(\gamma_1)=\ordp(\gamma_2)=i$. Then, there is an
algorithm that performs $O(\log p)$ ring operations, 
and with probability at least $1/12$ outputs a uniform pair
from the set $S$ defined in Equation \ref{def:Set}, if it is
non-empty.
\end{lemma}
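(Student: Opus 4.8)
The plan is to reduce this two-dimensional sampling problem over $\zpkz$ to the one-dimensional task of picking a single residue modulo $p$. Since $\ordp(t)=\ordp(\gamma_1)=\ordp(\gamma_2)=i<k$, every pair $(x,y)$ in the set $S$ of Equation~\ref{def:Set} has the shape $x=p^ia$, $y=p^ib$ with $a,b$ coprime to $p$; writing $\bar t:=(t/p^i)\bmod p^{k-i}$, the congruence $x+y\equiv t\modpk$ becomes $a+b\equiv\bar t\pmod{p^{k-i}}$, and $\sgnp(x)=\sgnp(\gamma_1)$, $\sgnp(y)=\sgnp(\gamma_2)$ become $\legendre{a}{p}=s_1$, $\legendre{b}{p}=s_2$, where $s_1:=\sgnp(\gamma_1)$ and $s_2:=\sgnp(\gamma_2)$ are both nonzero (the $p$-orders being finite). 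The map $(a,b)\mapsto(p^ia\bmod p^k,\ p^ib\bmod p^k)$ is a bijection between $S$ and the set $S'$ of such pairs $(a,b)\in(\zRz{p^{k-i}})^2$, so it suffices to sample uniformly from $S'$ and apply this bijection.

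Inside $S'$ the value $b\equiv\bar t-a$ is determined by $a$, and the two constraints on $b$ (coprimality with $p$ and $\legendre{b}{p}=s_2$) depend only on $a\bmod p$. Hence, in the $p$-expansion of $a$, the digit $a_{(0)}$ must be a residue $x_0\in\{1,\dots,p-1\}$ with $\legendre{x_0}{p}=s_1$ and $\legendre{\bar t-x_0}{p}=s_2$, while the digits $a_{(1)},\dots,a_{(k-i-1)}$ are unconstrained. The algorithm is therefore: (1) sample $x_0$ uniformly from $\{0,1,\dots,p-1\}$ and, using Euler's criterion, test whether $\legendre{x_0}{p}=s_1$ and $\legendre{\bar t-x_0}{p}=s_2$, failing if either test fails; (2) sample the remaining digits by drawing a uniform element of $\zRz{p^{k-i-1}}$, assemble $a$, and set $b:=(\bar t-a)\bmod p^{k-i}$; (3) output $(p^ia\bmod p^k,\ p^ib\bmod p^k)$. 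Conditioned on not failing, $x_0$ is uniform over the $N_0$ admissible residues and is independent of the freely chosen higher digits, so $a$ is uniform over the valid first coordinates and $(a,b)$ is uniform over $S'$; by the bijection the output is uniform over $S$. If $S=\emptyset$ then $N_0=0$ and the algorithm always fails, so the statement is vacuously consistent in that case.

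It remains to bound the success probability, which is exactly $N_0/p$, where $N_0$ is the number of residues $x_0\bmod p$ with $\legendre{x_0}{p}=s_1$ and $\legendre{\bar t-x_0}{p}=s_2$. Because $\bar t$ is a unit mod $p$, this $N_0$ is precisely the count supplied by Lemma~\ref{lem:SizeModP} (after the substitution relating $(\legendre{x_0}{p},\legendre{\bar t-x_0}{p})$ to $(\legendre{x}{p},\legendre{x+a}{p})$; it is the $\modp$ factor that already appears in Lemma~\ref{lem:ORD=}), so $N_0=\tfrac14\big(p-(p\bmod 4)-E\big)$ with $|E|\le 4$, since $E$ is a product of two quantities each of absolute value at most $2$. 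As $p\bmod 4\le 3$, for $p\ge 11$ this gives $N_0/p\ge\frac{p-7}{4p}\ge\frac1{11}>\frac1{12}$, and for $p\le 7$ nonemptiness of $S$ forces $N_0\ge 1$, hence $N_0/p\ge\frac17>\frac1{12}$. The running time is dominated by the two Legendre-symbol evaluations by fast exponentiation, i.e.\ $O(\log p)$ ring operations. I expect the only genuinely fiddly points to be getting the digit-freedom bijection exactly right — in particular the edge case $k-i=1$, where there are no free higher digits and $|S'|=N_0$ — and matching the sign substitution when invoking Lemma~\ref{lem:SizeModP}; once the count is in hand, the probability estimate is a one-line inequality.
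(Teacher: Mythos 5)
Your proposal is correct and follows essentially the same route as the paper: rejection-sample the first coordinate, determine the second from the sum constraint $\tau_2=\copp(t)-\tau_1$, accept iff both Legendre conditions hold, and bound the acceptance probability via Lemma \ref{lem:SizeModP}. The only noteworthy difference is that for $p\le 7$ the paper falls back on a brute-force search for the admissible residues, whereas you observe that non-emptiness of $S$ already forces at least one admissible residue mod $p$, giving acceptance probability at least $1/p\ge 1/7>1/12$ with the same sampler — a slightly cleaner, uniform treatment of the small primes.
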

\begin{proof}
In case the set is empty, it can be detected by 
Lemma \ref{lem:ORD=}. We now assume that the set is non-empty.

Any pair $(a,b) \in (\zpkz)^2$ from the set $S$ is of the following
form
\begin{align}
a = p^i\tau_1 \qquad b &= p^i\tau_2 \qquad \tau_1, \tau_2 < p^{k-i} \\
\tau_1 + \tau_2 &\equiv \copp(t) \pmod{p^{k-i}} \\
\legendre{\tau_1}{p} = \sgnp(\gamma_1) &~~~~ \legendre{\tau_2}{p} 
= \sgnp(\gamma_2) \label{SampleModP:sign}\;.
\end{align}
The candidate algorithm is as follows.
By construction, the algorithm returns a valid and uniform pair
from the set $S$, if it succeeds. 

\begin{figure}[h]
\setlength{\unitlength}{0.14in}
\centering
\begin{picture}(30,7)
\put(0,6){$\tau_1 \leftarrow 
\big\{x\mid 1\leq x \leq p^{k-i}, \gcd(x,p)=1\big\}$}
\put(0,4){$\tau_2 = \copp(t) - \tau_1 \modpk$}
\put(0,2){\bf{if} $\legendre{\tau_1}{p}\neq
\sgnp(\gamma_1)$ \bf{or} $\legendre{\tau_2}{p}\neq\sgnp(\gamma_2)$ \bf{then: }
return $\bot$}
\put(0,0){\bf{else: }return $(p^i\tau_1, p^i\tau_2)$}
\end{picture}
\end{figure}

The probability of failure of
the algorithm can be calculated using Lemma \ref{lem:SizeModP},
as follows.
The numbers $\tau_1$ and $\tau_2$ are both elements from the set 
$\big\{x\mid 1\leq x \leq p^{k-i}, \gcd(x,p)=1\big\}$. If $\tau_1$
is uniform, then so is $\tau_2$. Recall the definition of $p$-expansion
i.e., Definition \ref{def:pexpansion}.
Both $\tau_1$ and $\tau_2$ are of the form 
$dp+a$, where $a$ is in the set $\{1,\cdots,p-1\}$ and $d$ is in the set
$\{0,\cdots, \lfloor \frac{p^{k-i}-a}{p} \rfloor\}$. For the pair
$\tau_1=d_1p+a_1$ and $\tau_2=d_2p+a_2$ to satisfy the Equation 
\ref{SampleModP:sign}, it is necessary and sufficient that 
$\legendre{a_1}{p}=\sgnp(\gamma_1)$ and $\legendre{a_2}{p}=\sgnp(\gamma_2)$.
For a randomly picked $\tau_1$, by Lemma \ref{lem:SampleModP} and Table
\ref{table}, this happens with probability at least $\frac{p-5}{4(p-1)}$.
The probability of failure, for one iteration, is at most 
$\frac{3p+1}{4(p-1)}$, which is $<\frac{11}{12}$ for $p>7$. For $p\leq 7$,
we find $a_1$ and $a_2$ by brute force.

One iteration of the algorithm performs $O(\log p)$ ring 
operations and fails with probability at most $\frac{11}{12}$. 
\end{proof}

\subsection{Sampling a Representation}\label{Sec:SamplingTheorem}

This section deals with sampling a uniform (primitive, non-primitive) 
representation of $t$ by $\MQ$ over $\zpkz$.

The base case is when $\MQ$ is a single 
block (Definition \ref{def:BlockDiagonal}).
There are three distinct possibilities. The block $\MQ$ can be a 
Type II block (with $p=2$), 
a one dimensional block (with $p=2$) or
a one dimensional block (with $p$ odd). In each case,
a uniform (primitive, non-primitive) representation can be sampled. The
complexity of sampling is the same irrespective of primitiveness of the
representation.

The following lemma samples a uniform representation when $\MQ$ is of
type II (with $p=2$).

\begin{lemma}\label{lem:SampleTypeII}
Let $\MQ$ be a type II quadratic form, $k$ be a positive 
integer and $t$ be an element of $\ztkz$. Then, there 
exists an algorithm performing $O(k^2\log k)$ ring operations which outputs
a uniform representation
of $t$ by $\MQ$ over $\ztkz$.
\end{lemma}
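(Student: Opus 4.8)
The plan is to convert the counting recursion of Algorithm~\ref{alg:CountTypeII} into a sampling recursion: use the counts it produces to choose, at each stage, which branch of the recursion to descend into with the correct probability, and use the \emph{constructive} lift of Lemma~\ref{lem:RepresentTTypeII} to make every individual step uniform. Throughout I describe how to sample a representation of any prescribed kind --- primitive, non-primitive, or unrestricted --- since the non-primitive recursion needs the unrestricted routine as a sub-step; the unrestricted case is what the statement asks for.

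Set $\ell=\ordt(\MQ_{12})$ and split into the cases of Lemma~\ref{lem:CountTypeII}. If $\ell+1\ge k$ then $\MQ\equiv 0\modtk$, so (as a representation of the requested kind is assumed to exist) $t\equiv 0\modtk$: a uniform representation is a uniform element of $(\ztkz)^2$; a uniform non-primitive one is $(2y_1,2y_2)$ with $y_1,y_2$ uniform in $\bbZ/2^{k-1}\bbZ$; and a uniform primitive one is obtained by first picking the digit pair $(x_1\bmod 2,\;x_2\bmod 2)$ uniformly among the three nonzero possibilities and then the remaining $2(k-1)$ binary digits uniformly. If $\ordt(t)<\ell+1$ there is nothing to output. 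In the remaining case $\ordt(t)\ge\ell+1$, recall from the proof of Lemma~\ref{lem:CountTypeII} that, with $k^\ast=k-\ell-1$ and $t^\ast=2^{\ordt(t)-\ell-1}\copt(t)\bmod 2^{k^\ast}$, every representation $(x_1,x_2)$ of $t$ by $\MQ$ over $\ztkz$ arises from a representation $(y_1,y_2)$ of $t^\ast$ by the \typet\ form $F(x_1,x_2)=ax_1^2+bx_1x_2+cx_2^2$ ($b$ odd) over $\bbZ/2^{k^\ast}\bbZ$ by choosing the top $\ell+1$ binary digits of $x_1$ and of $x_2$ freely, and $(x_1,x_2)$ is primitive iff $(y_1,y_2)$ is. So I would sample those $2(\ell+1)$ digits uniformly and then sample $(y_1,y_2)$ with the same recursion applied to $(F,t^\ast,k^\ast)$, in which now the analogue of $\ell$ is $0$.

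For the inner problem ``sample $(y_1,y_2)$ representing $t^\ast$ by $F$ over $\bbZ/2^{k^\ast}\bbZ$'': for a primitive target, enumerate the set $P\subseteq\{[0,1],[1,0],[1,1]\}$ of parity vectors with $F(a_1,a_2)\equiv t^\ast \pmod 2$; by Lemma~\ref{lem:RepresentTTypeII} each element of $P$ sits under exactly $2^{k^\ast-1}$ representations, so pick $[a_1,a_2]\in P$ uniformly and then run the incremental lift of Lemma~\ref{lem:RepresentTTypeII}, at each step from modulus $2^i$ to $2^{i+1}$ choosing the single free bit ($b_1$ or $b_2$, whichever is unconstrained) uniformly at random; since that lift is a bijection from bit-strings onto the $2^{k^\ast-1}$ representations above $[a_1,a_2]$, the output is uniform (cost $O(k)$ ring operations). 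For a non-primitive target, a non-primitive $(y_1,y_2)$ forces $4\mid t^\ast$, and $(y_1,y_2)\mapsto(y_1/2,y_2/2)$ together with the free last bit of $y_1$ and of $y_2$ is a bijection onto all representations $(z_1,z_2)$ of $t^\ast/4$ by $F$ over $\bbZ/2^{k^\ast-2}\bbZ$ (proof of Lemma~\ref{lem:CountTypeII}); so sample those two bits uniformly and recursively sample an \emph{unrestricted} representation of $t^\ast/4$ by $F$ over $\bbZ/2^{k^\ast-2}\bbZ$. Finally, to sample an unrestricted representation at any level, call Algorithm~\ref{alg:CountTypeII} to get the pair $(N_{\mathrm{prim}},N_{\mathrm{nprim}})$ and, with probability $N_{\mathrm{prim}}/(N_{\mathrm{prim}}+N_{\mathrm{nprim}})$, invoke the primitive routine, otherwise the non-primitive routine; the tiny values $k^\ast\in\{1,2\}$ (where $\bbZ/2^{k^\ast-2}\bbZ$ degenerates) are resolved by brute force in $O(1)$ operations exactly as in Algorithm~\ref{alg:CountTypeII}. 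Uniformity of the whole construction follows by induction on $k$ from these bijections, using that each count invoked is exactly the size of the corresponding branch by Lemma~\ref{lem:CountTypeII}.

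For the running time, the non-primitive recursion decreases the modulus exponent by $\ell+3\ge 3$ at each step, so it has depth $O(k)$; each level makes one call to Algorithm~\ref{alg:CountTypeII}, costing $O(k\log k)$ ring operations, while the primitive leaf and all digit sampling cost $O(k)$, for a total of $O(k^2\log k)$. The main obstacle I anticipate is bookkeeping the three notions of primitiveness consistently through the two reductions (dividing out $2^{\ell+1}$, then halving the $y_i$): one must verify that ``primitive for $\MQ$ at level $k$'' corresponds to ``some $y_i$ odd'' and that the non-primitive representations correspond to \emph{all} representations of the halved subproblem, so that the recursive call is genuinely a call for an \emph{unrestricted} --- not a primitive --- sample; together with getting the boundary cases $\ell+1\ge k$ and $k^\ast\in\{1,2\}$ right.
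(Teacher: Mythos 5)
Your proposal is correct and follows essentially the same route as the paper: the same case split on $\ell$ versus $k$ and $\ordt(t)$, the same reduction dividing out $2^{\ell+1}$ with the top digits chosen freely, the same use of Lemma~\ref{lem:RepresentTTypeII} as a bijective bit-by-bit lift above each admissible parity vector, the same recursion on $t^\ast/4$ for the even--even branch (correctly identified as an \emph{unrestricted} subproblem), and the same $O(k^2\log k)$ accounting via Algorithm~\ref{alg:CountTypeII}. Your reorganization into mutually recursive primitive/non-primitive/unrestricted routines is equivalent to the paper's choice of one of four parity categories with probability proportional to its count, and your explicit handling of the degenerate $k^\ast\in\{1,2\}$ cases is a small refinement the paper leaves implicit.
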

\begin{proof}
Let 
$\MQ=\begin{pmatrix}2^{\ell+1}a & 2^\ell b \\ 2^\ell b & 2^{\ell+1}c\end{pmatrix}$,
$b$ odd. Then,
\begin{align}\label{SampleTypeII:EQ0}
\Vx'\MQ\Vx \equiv t \modtk \iff 
2^{\ell+1}(ax_1^2+bx_1x_2+cx_2^2) \equiv t\modtk
\end{align}

We distinguish the following cases.

\paragraph{$\mathbf{\ell+1 \geq k}$.} In this case, $\MQ$ is 
identically $0$ over $\ztkz$ and hence only represents $0$. If $t$
is also $0$ then a uniform representation can be found by sampling
$x_1$ and $x_2$ independently at random from $\ztkz$.

\paragraph{$\mathbf{\ell + 1 > \ordt(t)}$.} No representations exist,
see Lemma \ref{lem:CountTypeII}.

\paragraph{$\mathbf{\ordt(t) \geq \ell+1}$.} In this case, we divide 
Equation \ref{SampleTypeII:EQ0} by $2^{\ell+1}$.
\begin{align}\label{SampleTypeII:EQ1}
&2^{\ell+1}(ax_1^2+bx_1x_2+cx_2^2) \equiv t\modtk \nonumber\\
\iff & 
ax_1^2+bx_1x_2+cx_2^2 \equiv 2^{\ordt(t)-\ell-1}\copt(t) \pmod{2^{k-\ell-1}}
\end{align}
Both $x_1$ and $x_2$ are elements of the ring $\ztkz$. But the Equation 
\ref{SampleTypeII:EQ1} is defined modulo $2^{k-\ell-1}$. 
Recall, Definition \ref{def:pexpansion} of $2$-expansion.
From the equivalence relation $(x \bmod q)\cdot(y \bmod q)\equiv xy \pmod{q}$, 
it follows that the last $2^{\ell+1}$ digits of both $x_1$ and $x_2$ can be
chosen freely. Hence, we pick them uniformly at random. The problem then reduces 
to finding a uniform solution to the following equation.
\begin{align}\label{SampleTypeII:EQ2}
ay_1^2+by_1y_2+cy_2^2 \equiv 2^{\ordt(t)-\ell-1}\copt(t) \pmod{2^{k-\ell-1}}\;.
\end{align}

Every solution $(y_1,y_2)$ of Equation \ref{SampleTypeII:EQ2} falls 
in four categories, i) $y_1$ odd, $y_2$ is odd, ii) $y_1$ even, $y_2$ odd,
(iii) $y_1$ odd, $y_2$ even, and (iv) $y_1$ even, $y_2$ even.

We calculate the number of representations of each kind. In the first three 
cases the number can be calculated by using Lemma 
\ref{lem:RepresentTTypeII}. A solution with say $y_1$ odd and $y_2$ even 
exists iff $y_1=1, y_2=0$ satisfies the Equation \ref{SampleTypeII:EQ2} 
modulo $2$. The number of solutions will be $2^{k-\ell-2}$ and $0$ otherwise.
The number of solutions in case (iv) i.e., both $y_1$ and $y_2$ are even,
is $4$ times the number of solutions to the following equation.
\begin{align}\label{SampleTypeII:EQ3}
az_1^2+bz_1z_2+cz_2^2 \equiv 2^{\ordt(t)-\ell-3}\copt(t) \pmod{2^{k-\ell-3}}\;.
\end{align}
The number of solutions of Equation \ref{SampleTypeII:EQ3} can be computed
by Lemma \ref{lem:CountTypeII}.

Once we have the number of solutions in each case, we pick a case with the
corresponding probability i.e., case (i) is picked with probability 
the number of solutions in case (i) divided by the total number of solutions
of Equation \ref{SampleTypeII:EQ2}.

In case (i), (ii) and (iii), a uniform solution can be constructed using
Lemma \ref{lem:RepresentTTypeII}, bit by bit. 
In case (iv), we fix the first bit of $y_1$ and $y_2$ to be $0$, divide
the equation by $4$ and find a uniform solution of Equation 
\ref{SampleTypeII:EQ3} over $\bbZ/2^{k-\ell-3}$. This can be done recursively.

The algorithm performs $O(k\log k)$ ring operations in counting
the number of solutions corresponding the four cases and may recursively call
itself. In case a recursive call is made, $k$ reduces by at least $3$.
Thus, the number of ring operations is $O(k^2\log k)$.
\end{proof}

From the proof of Lemma \ref{lem:SampleTypeII}, it follows that a uniform
primitive and a uniform non-primitive representation can also be sampled
in $O(k^2\log k)$ ring operations.

The following lemmas show that a uniform representation, a uniform
primitive representation and a uniform non-primitive representation
of $t$ by $\MQ$ over $\zpkz$ can also be sampled when $\MQ$ is one
dimensional.

\begin{lemma}\label{lem:SampleTypeIP}
Let $\MQ, t, k$ be integers, and $p$ be an odd prime.
Then, there exists
an algorithm that performs $O(\log k+\log p)$ ring operations,
failing with constant probability.
Otherwise, it outputs a uniform (primitive, 
non-primitive) representation of $t$ by $\MQ$ over $\zpkz$.
\end{lemma}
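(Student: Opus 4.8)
The plan is to mirror the case analysis of Lemma~\ref{lem:CountModPk1D} (equivalently Algorithm~\ref{alg:CountModPk1D}): in each case the set of representations $x\in\zpkz$ with $x^2\MQ\equiv t\modpk$ admits an explicit bijective parametrization by a handful of independent uniform choices, so that drawing those choices uniformly at random yields a uniformly random representation. The only randomized and only failure-prone ingredient will be the square-root extraction of Lemma~\ref{lem:HenselP}; everything else is deterministic, using $O(\log k)$ ring operations to compute the relevant $p$-orders and $O(1)$ uniform samples. Throughout, I will exploit the fact (from Lemma~\ref{lem:CountModPk1D}) that in every branch either all or none of the representations have a given primitiveness, so that restricting to primitive or to non-primitive representations is essentially free.

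First I would dispose of the degenerate case $\ordp(t)\ge k$, where $t\equiv 0\modpk$ and the condition is equivalent to $p^{\lceil(k-\ordp(\MQ))/2\rceil}\mid x$. If $\ordp(\MQ)\ge k$, every $x\in\zpkz$ is a representation: for a uniform representation I output a uniform element of $\zpkz$; for a uniform non-primitive one I output $p\cdot y$ with $y$ uniform in $\zRz{p^{k-1}}$; for a uniform primitive one I sample a uniform element of $\zpkz$ and declare failure if it turns out to be non-primitive, which succeeds with probability $1-1/p\ge 2/3$. If $\ordp(\MQ)<k$ there are no primitive representations, and a uniform (non-primitive) representation is $p^{\lceil(k-\ordp(\MQ))/2\rceil}\cdot y$ with $y$ uniform in $\zRz{p^{\,k-\lceil(k-\ordp(\MQ))/2\rceil}}$.

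Now the main case $\ordp(t)<k$. By Lemma~\ref{lem:RepresentTModPk1D} a representation exists iff $d:=\ordp(t)-\ordp(\MQ)$ is even, non-negative, and $\legendre{\copp(t)\copp(\MQ)}{p}=1$; under our standing assumption that a representation of the requested kind exists, these hold (and primitive representations exist precisely when $\ordp(t)=\ordp(\MQ)$, in which case all representations are primitive). Exactly as in the proof of Lemma~\ref{lem:CountModPk1D}, $x$ is a representation iff $x=p^{d/2}y$ with $y\in\zRz{p^{\,k-d/2}}$ and $y^2\equiv\copp(t)\copp(\MQ)^{-1}\pmod{p^{\,k-\ordp(t)}}$. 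I would invoke Lemma~\ref{lem:HenselP} on the quadratic residue $\copp(t)\copp(\MQ)^{-1}$ to obtain its two square roots $\pm y_0$ modulo $p^{\,k-\ordp(t)}$ (these are genuinely distinct since $p$ is odd and $p\nmid y_0$); this is the step that fails with constant probability. The representations are then in bijection with pairs $(\sigma,z)$ where $\sigma\in\{+1,-1\}$ and $z\in\zRz{p^{\,(\ordp(t)+\ordp(\MQ))/2}}$, via $y:=\sigma y_0+p^{\,k-\ordp(t)}z$ and $x:=p^{d/2}y\bmod p^k$: the exponents $d/2$ and $k-\ordp(t)+(\ordp(t)+\ordp(\MQ))/2$ sum to $k$, so $y\mapsto p^{d/2}y$ is a bijection onto its image modulo $p^k$, recovering the count $2p^{(\ordp(t)+\ordp(\MQ))/2}$ of Lemma~\ref{lem:CountModPk1D}. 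So I sample $\sigma$ and $z$ uniformly and independently and output the corresponding $x$.

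For correctness, uniformity of the output follows in each branch from the displayed bijection together with independence and uniformity of the sampled data (and from the digit-wise description of $\zpkz$ in the degenerate branches); the primitive/non-primitive variants are handled by the observation above, the sole exception being the one rejection sub-case, whose failure is folded into the overall constant. For complexity, the dominant cost is the single call to Lemma~\ref{lem:HenselP}, namely $O(\log k+\log p)$ ring operations with constant success probability, plus $O(\log k)$ ring operations for the $p$-orders and $O(1)$ uniform samples. I expect no real obstacle beyond careful bookkeeping: getting the exponents in the bijection $x\leftrightarrow(\sigma,z)$ right, and matching the requested primitiveness against the structure of the solution set in every case. All the substantive number theory---the existence criterion and the square roots---is already packaged in Lemmas~\ref{lem:RepresentTModPk1D} and~\ref{lem:HenselP}.
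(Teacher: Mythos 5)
Your proposal is correct and follows the same route as the paper: the paper's own proof simply observes that Lemma~\ref{lem:CountModPk1D} is constructive and that the square root of $\copp(t)\copp(\MQ)^{-1}$ is obtained via Lemma~\ref{lem:HenselP}, which is exactly the skeleton you flesh out. Your explicit bijection $x\leftrightarrow(\sigma,z)$ and the exponent bookkeeping match the counts in Lemma~\ref{lem:CountModPk1D}, and the failure probability is correctly confined to the Hensel/Cipolla step (plus the one rejection sub-case).
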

\begin{proof}
The proof of Lemma \ref{lem:CountModPk1D} is constructive
and with minor modifications, it can be used to generate
uniform representations.

When $\ordp(t) < k$ and the conditions in Lemma 
\ref{lem:RepresentTModPk1D} are satisfied then we find a 
square root of $\copp(t)\copp(\MQ)^{-1}$ modulo 
$p^{k-\ordp(\MQ)}$, see proof of Lemma \ref{lem:CountModPk1D}.
This can be done in 
$O(\log k+\log p)$ ring operations, 
using Lemma \ref{lem:HenselP}.
\end{proof}

\begin{lemma}\label{lem:SampleTypeI2}
Let $\MQ, t$ and $k$ be positive integers. Then, there exists
an algorithm that performs $O(k)$ ring operations and outputs 
a uniform primitive (or
non-primitive) representation of $t$ by $\MQ$ over $\ztkz$.
\end{lemma}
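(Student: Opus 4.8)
The plan is to make the counting argument in the proof of Lemma~\ref{lem:CountMod2k1D} constructive, in exactly the way that Lemma~\ref{lem:SampleTypeIP} turns the proof of Lemma~\ref{lem:CountModPk1D} into a sampler for the odd-prime one-dimensional case. Recall that in the proof of Lemma~\ref{lem:CountMod2k1D} every $2^k$-representation $x$ of $t$ by $\MQ$ was described by two independent pieces: a choice of square root of a fixed unit modulo a smaller power of $2$, together with a string of freely-choosable digits in the $2$-expansion (Definition~\ref{def:pexpansion}) of $x$. Reversing this description, a uniform representation is produced by choosing each of these two pieces uniformly and independently.

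Concretely, first I would compute $\ordt(\MQ)$ and $\ordt(t)$ (in $O(\log k)$ ring operations) and branch as in Algorithm~\ref{alg:CountMod2k1D}. If $\ordt(t)\geq k$, then $t\equiv 0\modtk$ and the representations are exactly the $x$ with $2^{\lceil(k-\ordt(\MQ))/2\rceil}\mid x$ (all of $\ztkz$ if $\ordt(\MQ)\geq k$); a uniform one is obtained by writing $x=2^{\lceil(k-\ordt(\MQ))/2\rceil}y$ and sampling $y$ uniformly from the appropriate residue ring, and, when $\ordt(\MQ)\geq k$, additionally fixing the lowest digit of $x$ to $1$ for a primitive representation, or to $0$ for a non-primitive one, and sampling the remaining $k-1$ digits uniformly from $\ztz$. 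If $\ordt(t)<k$, I would first test the conditions of Lemma~\ref{lem:RepresentTMod2k1D}; if they fail, no representation exists. Otherwise every representation has the form $x=2^{(\ordt(t)-\ordt(\MQ))/2}y$, where the low $k-\ordt(t)$ digits of $y$ form a solution $y_0$ of $y^2\equiv\copt(t)\copt(\MQ)^{-1}\pmod{2^{k-\ordt(t)}}$ and the remaining $\tfrac{\ordt(t)+\ordt(\MQ)}{2}$ digits are arbitrary. I would enumerate all such $y_0$ via Lemma~\ref{lem:Hensel2} (there are $4$ of them when $k-\ordt(t)>2$, and $k-\ordt(t)$ otherwise), pick one uniformly at random, pick the free high digits uniformly from $\ztz$, and assemble $x$. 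Primitiveness is handled exactly as in Lemma~\ref{lem:CountMod2k1D}: a primitive representation exists iff $\ordt(t)=\ordt(\MQ)$, in which case there are no free digits and every representation is primitive, so nothing extra is needed; the non-primitive case is symmetric.

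For correctness it suffices to note that the map $(y_0,\text{free digit string})\mapsto x$ used above is a bijection onto the set of representations in each branch — this is precisely what the proof of Lemma~\ref{lem:CountMod2k1D} establishes, since there the representation count was computed as (number of $y_0$) times (number of free digit strings). As all representations within a branch have the same number of free digits, sampling $y_0$ and the free digits uniformly and independently yields a uniform representation, and restricting to the primitive or non-primitive sub-branch yields a uniform representation of that kind. Note that, unlike Lemma~\ref{lem:SampleTypeIP}, this algorithm never fails (apart from the uniform sampling of ring elements), because square roots modulo $2^k$ are found without randomization. The only nontrivial ingredient is Lemma~\ref{lem:Hensel2}, which both pins down the exact number of square roots and produces them all in $O(k)$ ring operations; the rest of the algorithm needs $O(\log k)$ operations for the $2$-orders and $O(1)$ to assemble $x$, for a total of $O(k)$. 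The main obstacle is bookkeeping the boundary cases $k-\ordt(t)\leq 2$ and $\ordt(\MQ)\geq k$, where the generic ``four square roots, $\tfrac{\ordt(t)+\ordt(\MQ)}{2}$ free digits'' picture degenerates; but these are exactly the branches already itemized in Algorithm~\ref{alg:CountMod2k1D}, so no analysis beyond matching each branch there is required.
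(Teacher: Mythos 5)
Your proposal is correct and follows essentially the same route as the paper, which simply observes that the proof of Lemma~\ref{lem:CountMod2k1D} is constructive and that the required square roots can be produced by Lemma~\ref{lem:Hensel2}; you have merely spelled out the branch-by-branch details of that construction. (One harmless slip: when $\ordt(t)=\ordt(\MQ)=\alpha>0$ there are $\alpha$ free high digits, not none, but every such representation is still primitive because the low part $y_0$ is odd, so your conclusion stands.)
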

\begin{proof}
The proof of Lemma \ref{lem:CountMod2k1D} is constructive
and with minor modifications, it can be used to generate
uniform representations.

The square root of $\copt(t)\copt(\MQ)^{-1}$ modulo $2^k$
in Lemma \ref{lem:CountMod2k1D} can be found by Lemma
\ref{lem:Hensel2}.
\end{proof}

We now prove the main result of this paper. 

\begin{proof}(Theorem \ref{thm:Sample2}, Theorem
\ref{thm:SampleP})
The steps in the algorithm for generating a uniform primitive 
representation of $t$ by $\MQ^n$ over $\zpkz$ are as follows.

\begin{enumerate}[(i.)]
\item Find $\MU \in \gln(\zpkz)$ that block diagonalizes $\MQ$, using
Theorem \ref{thm:BlockDiagonal}. Let 
$\MD:=\MU'\MQ\MU \pmod{p^k}$ and $\MD=\MD_1 \oplus \cdots \oplus \MD_m$,
where $\MD_i, i \in [m]$ are single blocks (see Definition
\ref{def:BlockDiagonal}).

\item 
Fix the following notation, where $\MD^n=\MD_1^{n_1}\oplus\MD_{2+}^{n-n_1}$.
\begin{align*}
\MD_{2+}&=\MD_2\oplus\cdots\oplus\MD_m \\
\prim(\Vx) &= \left\{\begin{array}{ll}
1 & \text{if $\Vx$ is primitive and $\Vx'\MD\Vx\equiv t\modpk$} \\
0 & \text{Otherwise} 
\end{array}\right.
\end{align*}
Compute the total number of primitive representations of $t$ by 
$\MD$ over $\zpkz$ i.e., $\fbpk(\MD,t)$ 
(see Theorem \ref{thm:RepresentTbyQModPk}).
For every pair of $p^k$-symbols 
$\gamma_1, \gamma_2 \in \ORD \times \SGN$ calculate the following
numbers (also using Theorem \ref{thm:RepresentTbyQModPk}).
\[
\fcpk(\MD_1, \gamma_1) \qquad \fbpk(\MD_1, \gamma_1) \qquad
\fcpk(\MD_{2+}, \gamma_2) \qquad \fbpk(\MD_{2+}, \gamma_2)\;
\]
By definition of primitiveness (Definition \ref{def:Prim}), it follows that
a vector $\Vx=(\Vx_1^{n_1},\Vxtilde^{n-n_1}) \in (\zpkz)^n$ is primitive
iff at least one of $\Vx_1, \Vxtilde$ is primitive. Thus,
the probability that a random primitive representation 
$\Vx=(\Vx_1^{n_1},\Vxtilde^{n-n_1})$ of $t$ by $\MD$ satisfies
the conditions $\sympk(\Vx_1'\MD_1\Vx_1\bmod {p^k})=\gamma_1$ and
$\sympk(\Vxtilde'\MD_{2+}\Vxtilde\bmod {p^k})=\gamma_2$ can be calculated
as follows.
\begin{align}\label{eq:Probability}
&\Pr_{\prim(\Vx=(\Vx_1,\Vxtilde))}\left[\sympk(\Vx_1'\MD_1\Vx_1) 
=\gamma_1 \wedge (\sympk(\Vxtilde'\MD_{2+}\Vxtilde)=\gamma_2)
\right]=\nonumber\\
&\frac{\Big(\fcpk(\MD_1,\gamma_1)\fbpk(\MD_{2+},\gamma_2)+
\fbpk(\MD_1,\gamma_1)\fcpk(\MD_{2+},\gamma_2)+
\fbpk(\MD_1,\gamma_1)\fbpk(\MD_{2+},\gamma_2) \Big)}{\fbpk(\MD,t)}
\end{align}
\item
There are three distinct cases here: (I) $\Vx_1$ is non-primitive, 
$\Vxtilde$ is primitive,
(II) $\Vx_1$ is primitive, $\Vxtilde$ is non-primitive, and
(III) $\Vx_1, \Vxtilde$ are both primitive. The probability
of the individual cases is the corresponding summand in Equation
\ref{eq:Probability}. 
Sample one of the three cases of the summand 
in Equation \ref{eq:Probability}, with the corresponding probability.

\item 
The next step is to sample a uniform 
pair $(a,b)$ from the set $\fspkt(\sym_1,\sym_2)$ given 
in Equation \ref{def:Set}.
Recall Lemma \ref{lem:ORD=} and Lemma \ref{lem:ORDn=}. There are 
three possible cases and in
each case a uniform pair can be generated as follows.
\begin{description}
\item[$\ordp(\gamma_1) \neq \ordp(t)$] In this case, by Lemma 
\ref{lem:ORDn=}, a
uniform pair can be generated by picking a uniform $a$ from 
$\spk(\gamma_1)$ (use Lemma \ref{lem:SampleFromSymbol}), 
and outputting $(a, t-a \bmod{p^k})$.
\item[$\ordp(\gamma_1) = \ordp(t) \neq \ordp(\gamma_2)$] Pick a 
uniform $a$ from $\spk(\gamma_2)$ (use Lemma \ref{lem:SampleFromSymbol}),
 and output $(t-a\bmod{p^k},a)$. Correctness follows from Lemma 
\ref{lem:ORDn=}.
\item[$\ordp(\gamma_1)=\ordp(\gamma_2)=\ordp(t)$] 
Recall Lemma \ref{lem:ORD=}. This can never 
happen when $p=2$. Otherwise, generate using 
Lemma \ref{lem:SampleModP}. 
\end{description}

\item
Depending on the cases, do one of the following: (I) 
generate a uniform non-primitive representation of $a$ by 
$\MD_1$  and a uniform primitive representation of $b$
by $\MD_{2+}$ recursively, (II) generate a uniform primitive 
representation of $a$ by $\MD_1$ and a uniform non-primitive 
representation of $b$ by $\MD_{2+}$ recursively, and (III) generate a
uniform primitive representation of $a$ by $\MD_1$ and $b$
by $\MD_{2+}$ recursively. Let the representations be 
$\Vx_1 \in (\zpkz)^{n_1}$
and $\Vxtilde \in (\zpkz)^{n-n_1}$. Then, output 
$\MU^{-1}(\Vx_1,\Vxtilde)$. 
\end{enumerate}
By construction, this is a uniform
primitive representation of $t$ by $\MQ$ over $\zpkz$.

The block diagonalization takes 
$O(n^{\omega+1}\log k)$ ring operations (Theorem \ref{thm:BlockDiagonal}).
By Theorem \ref{thm:RepresentTbyQModPk}, the dynamic
programming approach computes the number of primitive and non-primitive
representation of every symbol modulo $p^k$ for all intermediate
diagonal forms $\MD_1\oplus \cdots \oplus \MD_i$, $i\leq m$. 
The diagonalization and solution counting need only be done
once. In total, this takes $O(n^{1+\omega}\log k + nk^3+n\log p)$ 
ring operations over $\zpkz$.

In each recursive step, we need to compute 
the probabilities of the $O(k^2)$ symbol pairs, taking $O(k^2)$ 
operations over integers. Once we have picked a $p^k$-symbol
pair $(\sym_1,\sym_2)$ for which we are going to generate a solution,
we sample $(a,b)$ from $\fspkt(\sym_1,\sym_2)$. This costs another
$O(\log p)$ ring operations (Lemma \ref{lem:SampleFromSymbol},
Lemma \ref{lem:SampleModP}). Then, we need to sample a uniform
primitive/non-primitive $p^k$-representation using a single block
(Lemma \ref{lem:SampleTypeII}, Lemma \ref{lem:SampleTypeI2}, 
Lemma \ref{lem:SampleTypeIP}). This costs $O(k^2\log k + \log p)$
at most. Finally, we need to recursively
sample a uniform primitive/non-primitive representation for
block diagonal form with strictly smaller number of blocks. This
takes a total of $O(k^2\log k + \log p)$ ring operations for one 
step and $O(nk^2\log k + n \log p)$ in total for the entire recursion.
Thus, the algorithm performs $O(n^{1+\omega}\log k+nk^3+n\log p)$
ring operations over $\zpkz$.

The sampling of a uniform non-primitive representation is relatively
easier because for $(\Vx_1, \Vxtilde)$ to be non-primitive, both
$\Vx_1$ and $\Vxtilde$ must be non-primitive. Thus, we go over all
possible $p^k$-symbol pairs $(\sym_1,\sym_2)$ and compute the 
corresponding probability as in Equation \ref{eq:Probability} as
follows.
\begin{align*}
\Pr_{\Vx \text{ non-primitive}}&\left[\sympk(\Vx_1'\MD_1\Vx_1) 
=\gamma_1 \wedge (\sympk(\Vxtilde'\MD_{2+}\Vxtilde)=\gamma_2)\right]=\\
&\frac{\fcpk(\MD_1,\gamma_1)\cdot\fcpk(\MD_{2+},\gamma_2)
}{\fcpk(\MD,t)}
\end{align*}
Then, we need to sample a pair $(a,b) \in \fspkt(\sym_1,\sym_2)$;
sample a uniform non-primitive $p^k$-representation of $a$ by $\MD_1$,
recursively sample a uniform non-primitive $p^k$-representation of $b$
by $\MD_{2+}$ and continue as in the non-primitive sampling case.

For sampling a uniform representation, after block diagonalization
we compute $\fbpk(\MD,t)$ and $\fcpk(\MD,t)$. We then sample a uniform
primitive representation with probability $\fbpk(\MD,t)/\fapk(\MD,t)$
and a uniform non-primitive representation with probability 
$\fcpk(\MD,t)/\fapk(\MD,t)$.

The computation of the number of ring operations (i.e., the ring
operation complexity) remains the same.
\end{proof}

\subsection{Sampling modulo a Composite Integer}

It is not very difficult to extend Theorem \ref{thm:SampleP} to
a composite integer modulus $q$ using the Chinese Remainder Theorem.
This argument is standard and was already used
in Siegel (Lemma~15, pp~544, \cite{Siegel35}). 

\begin{theorem}
Let $\MQ^n$ be an integral quadratic form, $t$ be an integer and
$q$ be a positive integer whose factorization is known. Then, there
is a $\poly(n, \log q, \log t)$ algorithm that counts (also, samples)
the solutions of $\Vx'\MQ\Vx \equiv t \bmod{q}$.
\end{theorem}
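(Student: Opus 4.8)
The plan is a routine application of the Chinese Remainder Theorem, reducing everything to the prime-power case already settled by Theorem~\ref{thm:RepresentTbyQModPk} (counting) and Theorems~\ref{thm:Sample2} and~\ref{thm:SampleP} (sampling); indeed, this reduction is the standard one already used in \cite{Siegel35}. Write the given factorization as $q=\prod_{i=1}^{r}p_i^{k_i}$ with the $p_i$ distinct, and note $r\le\log q$, each $k_i\le\log q$, and $\sum_i k_i\log p_i=\log q$. The ring isomorphism $\zqz\cong\prod_{i=1}^{r}\bbZ/p_i^{k_i}\bbZ$ given by simultaneous reduction is computable in both directions in $\poly(\log q)$ time, and it induces a bijection $(\zqz)^n\cong\prod_{i=1}^{r}(\bbZ/p_i^{k_i}\bbZ)^n$ sending $\Vx$ to the tuple of its coordinate-wise reductions. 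Since $\Vx'\MQ\Vx\equiv t\pmod q$ holds iff $\Vx'\MQ\Vx\equiv t\pmod{p_i^{k_i}}$ for every $i$, a vector solves the equation modulo $q$ iff each of its reductions solves the corresponding equation modulo $p_i^{k_i}$; and the bijection carries the product of the uniform measures on the per-prime solution sets to the uniform measure on the solution set modulo $q$.

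\textbf{Counting.} First reduce $t$ modulo $q$ and then modulo each $p_i^{k_i}$ (cost $\poly(\log t,\log q)$). By the displayed bijection, the number of solutions modulo $q$ is $\prod_{i=1}^{r}\fA_{p_i^{k_i}}(\MQ,t\bmod p_i^{k_i})$. Each factor is produced by Theorem~\ref{thm:RepresentTbyQModPk} in $O(n^{1+\omega}\log k_i+nk_i^3+n\log p_i)$ ring operations over $\bbZ/p_i^{k_i}\bbZ$, and one such operation costs $\poly(k_i\log p_i)\le\poly(\log q)$ bit operations; multiplying the $r\le\log q$ partial counts costs another $\poly(n,\log q)$. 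Using $k_i,\log p_i,r\le\log q$, the total is $\poly(n,\log q,\log t)$. The same works verbatim for the count of \emph{primitive} solutions (replace $\fA$ by $\fB$), since $\Vx$ is primitive over $\zqz$ exactly when $\Vx\bmod p_i^{k_i}$ is primitive for every $i$; the count of non-primitive solutions is then the difference of the two.

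\textbf{Sampling.} To draw a uniform solution modulo $q$, draw for each $i$ independently a uniform solution $\Vx^{(i)}\in(\bbZ/p_i^{k_i}\bbZ)^n$ of $\Vx'\MQ\Vx\equiv t\pmod{p_i^{k_i}}$ using Theorem~\ref{thm:Sample2} when $p_i=2$ and Theorem~\ref{thm:SampleP} when $p_i$ is odd, then CRT-reconstruct the unique $\Vx\in(\zqz)^n$ congruent to $\Vx^{(i)}$ modulo each $p_i^{k_i}$; by the measure statement above this $\Vx$ is uniform among the solutions modulo $q$ (which are nonempty iff each local system is, as is detected along the way). A uniform primitive (resp.\ non-primitive) solution is obtained the same way, CRT-combining uniform primitive (resp.\ non-primitive) local solutions, again furnished by Theorems~\ref{thm:Sample2} and~\ref{thm:SampleP}. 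The only mildly delicate point — and the closest thing to an obstacle — is that the odd-$p$ subroutine of Theorem~\ref{thm:SampleP} is Las Vegas with constant failure probability, so the probability that all $r$ calls succeed could degrade with $r$; this is fixed either by repeating the $i$-th call until success (an $O(1)$-expected number of repetitions each, $O(r)$ in expectation overall) or by amplifying each call to failure probability $1/(3r)$ with $O(\log r)$ repetitions and taking a union bound. Either way the sampler runs in $\poly(n,\log q,\log t)$ (expected, resp.\ worst-case Las Vegas) time, and no step requires anything beyond the prime-power results already proved.
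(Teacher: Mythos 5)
Your proposal is correct and follows essentially the same route as the paper: factor $q$, apply the prime-power counting/sampling theorems to each $p_i^{k_i}$, and recombine via the Chinese Remainder Theorem, with the count being the product of the local counts. Your treatment is in fact more careful than the paper's (which omits the complexity accounting, the primitive/non-primitive cases, and the amplification needed to control the compounded Las Vegas failure probability over the $r$ odd-prime calls), but the underlying argument is identical.
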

\begin{proof}
Given a quadratic form $\MQ^n$, and positive integer $t,q$ the task is to 
construct a uniform random solution of $\Vx'\MQ\Vx\equiv t \bmod q$. Suppose
that the factorization of $q$ is provided and $q=p_1^{k_1}\cdots p_r^{k_r}$.
Then, we proceed as follows. For each $i \in [r]$, sample a uniform
random solution $\Vx_i$ such that ${\Vx_i}'\MQ\Vx_i \equiv t \pmod{p_i^{k_i}}$.
Note that each $\Vx_i$ is an $n$-dimensional vector i.e., 
$\Vx_i \in (\bbZ/p_i^{k_i}\bbZ)^n$. Now, we solve the set of congruences
given below using the Chinese Remainder Theorem.
\begin{align*}
\Vx &\equiv \Vx_1 \bmod{p_1^{k_1}} \\
&\vdots \\
\Vx &\equiv \Vx_r \bmod{p_r^{k_r}} 
\end{align*}
By construction, $\Vx'\MQ\Vx \equiv t \bmod{q}$ and $\Vx$ is a uniform random
solution. Also,
\[
\fA_q(\MD,t) = \prod_{i \in [r]} \fA_{p_i^{k_i}}(\MD,t)\;.
\]

Given the factorization of $q$ this algorithm runs in polynomial
time i.e., $\poly(n,\log q, \log t)$. 
\end{proof}

\bibliographystyle{alpha}
\bibliography{quadraticforms}

\begin{thebibliography}{AEM87}

\bibitem[AEM87]{AEM87}
Leonard~M Adleman, Dennis~R Estes, and Kevin~S McCurley.
\newblock Solving bivariate quadratic congruences in random polynomial time.
\newblock {\em Mathematics of Computation}, 48(177):17--28, 1987.

\bibitem[Ajt96]{Ajtai96}
Mikl{\'o}s Ajtai.
\newblock Generating hard instances of lattice problems.
\newblock In {\em Proceedings of the twenty-eighth annual ACM symposium on
  Theory of computing}, pages 99--108. ACM, 1996.

\bibitem[Ank52]{Ankeny52}
NC~Ankeny.
\newblock The least quadratic non residue.
\newblock {\em Annals of mathematics}, pages 65--72, 1952.

\bibitem[Bac96]{BS96}
Eric Bach.
\newblock {\em Algorithmic Number Theory: Efficient Algorithms}, volume~1.
\newblock MIT press, 1996.

\bibitem[BS86]{BS86}
Zenon~Ivanovich Borevich and Igor~Rostislavovich Shafarevich.
\newblock {\em Number theory}, volume~20.
\newblock Academic Press, 1986.

\bibitem[CS99]{CS99}
John Conway and Neil~JA Sloane.
\newblock {\em Sphere packings, lattices and groups}, volume 290.
\newblock Springer, 1999.

\bibitem[Die03]{Dietmann03}
Rainer Dietmann.
\newblock Small solutions of quadratic diophantine equations.
\newblock {\em Proceedings of the London Mathematical Society},
  86(03):545--582, 2003.

\bibitem[Dix81]{Dixon81}
John~D Dixon.
\newblock Asymptotically fast factorization of integers.
\newblock {\em Mathematics of computation}, 36(153):255--260, 1981.

\bibitem[GY00]{GY00}
Wee~Teck Gan and Jiu-Kang Yu.
\newblock Group schemes and local densities.
\newblock {\em Duke Mathematical Journal}, 105(3):497--524, 2000.

\bibitem[Han04]{Hanke04}
Jonathan Hanke.
\newblock Local densities and explicit bounds for representability by a
  quadratic form.
\newblock {\em Duke Mathematical Journal}, 124(2):351--388, 2004.

\bibitem[Har08]{Hartung08}
Rupert Hartung.
\newblock {\em Computational problems of quadratic forms: complexity and
  cryptographic perspectives}.
\newblock PhD thesis, Ph. D. thesis, Goethe-Universit{\"a}t Frankfurt a. M.,
  2008, http://publikationen. ub. uni-frankfurt.
  de/volltexte/2008/5444/pdf/HartungRupert. pdf, 2008.

\bibitem[HR14]{HR13}
Ishay Haviv and Oded Regev.
\newblock On the lattice isomorphism problem.
\newblock {\em SODA}, pages 391--404, 2014.

\bibitem[IK04]{Kowalski04}
Henryk Iwaniec and Emmanuel Kowalski.
\newblock {\em Analytic number theory}, volume~53.
\newblock American Mathematical Society Providence, 2004.

\bibitem[Kit99]{Kitaoka99}
Yoshiyuki Kitaoka.
\newblock {\em Arithmetic of quadratic forms}, volume 106.
\newblock Cambridge University Press, 1999.

\bibitem[LLL82]{LLL82}
Arjen~Klaas Lenstra, Hendrik~Willem Lenstra, and L{\'a}szl{\'o} Lov{\'a}sz.
\newblock Factoring polynomials with rational coefficients.
\newblock {\em Mathematische Annalen}, 261(4):515--534, 1982.

\bibitem[Min10]{Minkowski10}
Hermann Minkowski.
\newblock {\em Geometrie der zahlen}.
\newblock Berlin, 1910.

\bibitem[O'M73]{OMeara73}
Onorato~Timothy O'Meara.
\newblock {\em Introduction to quadratic forms}, volume 117.
\newblock Springer, 1973.

\bibitem[Pal65]{Pall65}
Gordon Pall.
\newblock The weight of a genus of positive n-ary quadratic forms.
\newblock In {\em Proc. Sympos. Pure Math}, volume~8, pages 95--105, 1965.

\bibitem[Per52]{Per52}
Oskar Perron.
\newblock Bemerkungen {\"u}ber die verteilung der quadratischen reste.
\newblock {\em Mathematische Zeitschrift}, 56(2):122--130, 1952.

\bibitem[PS87]{PS87}
J~Pollard and C~Schnorr.
\newblock An efficient solution of the congruence.
\newblock {\em Information Theory, IEEE Transactions on}, 33(5):702--709, 1987.

\bibitem[Sho09]{Shoup09}
Victor Shoup.
\newblock {\em A computational introduction to number theory and algebra}.
\newblock Cambridge University Press, 2009.

\bibitem[Sie35]{Siegel35}
Carl~Ludwig Siegel.
\newblock {\"U}ber die analytische theorie der quadratischen formen.
\newblock {\em The Annals of Mathematics}, 36(3):527--606, 1935.

\bibitem[Sie72]{Siegel72}
Carl~Ludwig Siegel.
\newblock {\em Zur theorie der quadratischen formen}.
\newblock Vandenhoeck und Ruprecht, 1972.

\bibitem[Wat76]{Watson76}
GL~Watson.
\newblock The 2-adic density of a quadratic form.
\newblock {\em Mathematika}, 23(01):94--106, 1976.

\bibitem[Wed01]{Wedeniwski01}
Sebastian Wedeniwski.
\newblock {\em Primality Tests on Commutator Curves}.
\newblock PhD thesis, Eberhard-Karls-Universit\"at T\"ubingen, 2001.

\bibitem[Yan98]{Yang98}
Tonghai Yang.
\newblock An explicit formula for local densities of quadratic forms.
\newblock {\em Journal of Number Theory}, 72(2):309--356, 1998.

\end{thebibliography}

\appendix

\section{Diagonalizing a Matrix}\label{sec:BlockDiagonal}

In this section, we provide a proof of Theorem \ref{thm:BlockDiagonal}.

\paragraph{Module.} There are quadratic forms which have no associated 
lattice e.g., negative
definite quadratic forms. To work with these, we define the concept of
free modules (henceforth, called module) which behave as vector 
space but have no associated realization
over the Euclidean space $\bbR^n$.

If $M$ is finitely generated $\Ring$-module with generating set
$\Vx_1,\cdots,\Vx_n$ then the elements $\Vx \in M$ can
be represented as $\sum_{i=1}^n r_i \Vx_i$, such that
$r_i \in \Ring$ for every $i \in [n]$. By construction,
for all
$a,b \in R$, and $\Vx,\Vy \in M$;
\[
a(\Vx+\Vy)=a\Vx+a\Vy \qquad (a+b)\Vx=a\Vx+b\Vx \qquad a(b\Vx)=(ab)\Vx
\qquad 1\Vx=\Vx
\]
Note that, if we replace $\Ring$ by a field in the definition 
then we get a vector space (instead of a module). 
Any inner product
$\beta:M\times M \to \Ring$ gives rise to a quadratic form 
$\MQ\in\Ring^{n\times n}$ as follows;
\[
\MQ_{ij} = \beta(\Vx_i,\Vx_j) \;.
\]
Conversely, if $R=\bbZ$ then by definition, every symmetric matrix 
$\MQ \in \bbZ^{n\times n}$ gives rise to an inner product $\beta$ 
over every $\bbZ$-module $M$; as follows.
Given $n$-ary integral quadratic form $\MQ$ and a $\bbZ$-module
$M$ generated by the basis $\{\Vx_1,\cdots,\Vx_n\}$ we define the 
corresponding inner product $\beta:M\times M \to \bbZ$ as;
\[
\beta(\Vx,\Vy)=\sum_{i,j}c_id_j\MQ_{ij}
\text{ where, }\Vx=\sum_{i}c_i\Vx_i ~~ \Vy=\sum_{j}d_j\Vx_j\;.
\]
In particular, any integral quadratic form $\MQ^n$ can be interpreted 
as describing an inner product over a free module of dimension $n$.

For studying quadratic forms over $\zpkz$, where $p$ is a prime and $k$
is a positive integer; the first step is to find equivalent quadratic 
forms which have as few mixed terms as possible (mixed terms are terms
like $x_1x_2$).

\begin{proof}(Theorem \ref{thm:BlockDiagonal})
The transformation of the matrix $\MQ$ to a block diagonal form involves
three different kinds of transformation. We first describe these 
transformations on $\MQ$ with small dimensions (2 and~3).

\begin{enumerate}[(1)]
\item Let $\MQ$ be a $2\times 2$ integral quadratic form. Let us also 
assume that
the entry with smallest $p$-order in $\MQ$ is a diagonal entry, say
$\MQ_{11}$. Then, $\MQ$ is of the following form; where $\alpha_1,\alpha_2$
and $\alpha_3$ are units of $\zpz$.
\[
\MQ = \begin{pmatrix}p^i \alpha_1 & p^j \alpha_2 \\ p^j \alpha_2 & p^s\alpha_3 
\end{pmatrix} \qquad i \leq j,s
\]
The corresponding $\MU \in \text{SL}_2(\zpkz)$, that diagonalizes $\MQ$ 
is given below. The number
$\alpha_1$ is a unit of $\zpz$ and so $\alpha_1$ has an inverse in $\zpkz$.  
\[
\MU = \begin{pmatrix} 1 & -\frac{p^{j-i}\alpha_2}{\alpha_1} \bmod{p^k} \\ 
0 & 1\end{pmatrix} 
\qquad
\MU'\MQ\MU \equiv \begin{pmatrix} p^i\alpha_1 & 0 \\ 0 & p^s\alpha_3 - 
p^{2j-i}\frac{\alpha_2^2}{\alpha_1}\end{pmatrix}\pmod{p^k}
\]

\item If $\MQ^2$ does not satisfy the condition of item (1) i.e., 
the off diagonal entry is the one with smallest $p$-order, then we start
by the following transformation $\MV \in \SL_2(\zpkz)$.
\begin{gather*}
\MV = \begin{pmatrix}1 & 0 \\ 1 & 1\end{pmatrix} \qquad 
\MV'\MQ\MV = \begin{pmatrix} \MQ_{11}+2\MQ_{12}+\MQ_{22} & \MQ_{12}+\MQ_{22} \\
\MQ_{12}+\MQ_{22} & \MQ_{22}\end{pmatrix} 
\end{gather*}
If $p$ is an odd prime then $\ordp(\MQ_{11}+2\MQ_{12}+\MQ_{22})=\ordp(\MQ_{12})$, 
because $\ordp(\MQ_{11}),$ $\ordp(\MQ_{22}) >\ordp(\MQ_{12})$. By definition,
$\MS=\MV'\MQ\MV$ is equivalent to $\MQ$ over the ring $\zpkz$. But now, $\MS$
has the property that $\ordp(\MS_{11}) = \ordp(\MS_{12})$, and it can be 
diagonalized using the transformation in (1). The final transformation
in this case is the product of $\MV$ and the subsequent transformation
from item (1). The product of two matrices from $\SL_2(\zpkz)$ is also
in $\SL_2(\zpkz)$, completing the diagonalization in this case.

\item If $p=2$, then the transformation in item (2) fails. In this case,
it is possible to subtract a linear combination of these two rows/columns
to make everything else on the same row/column equal to zero over $\ztkz$.
The simplest such transformation is in dimension~3. The situation is as 
follows. Let $\MQ^3$ be a quadratic form whose off diagonal entry has the 
lowest possible power of
$2$, say $2^{\ell}$ and all diagonal entries are divisible by at least
$2^{\ell+1}$. In this case, the matrix $\MQ$ is of the
following form.
\[
\MQ = \begin{pmatrix}2^{\ell + 1} a & 2^{\ell}b & 2^id \\
2^{\ell}b & 2^{\ell+1}c & 2^je \\
2^id & 2^je & 2^{\ell+1}f \end{pmatrix} \qquad b \text{ odd}, \ell\leq i,j
\]
In such a situation, we consider the matrix $\MU \in \SL_3(\ztkz)$
of the form below such that if $\MS=\MU'\MQ\MU \pmod{2^k}$ then 
$\MS_{13}=\MS_{23}=0$.
\begin{gather*}
\MU = \begin{pmatrix}1 & 0 & -r \\ 0 & 1 & -s \\ 0 & 0 & 1\end{pmatrix}\\
(\MU'\MQ\MU)_{13} \equiv 0 \pmod{2^k} \implies 
r 2a + s b \equiv  2^{i-\ell}d \pmod{2^{k-\ell}}\\
(\MU'\MQ\MU)_{23} \equiv 0 \modtk \implies 
r b + s 2c \equiv 2^{j-\ell}e  \pmod{2^{k-\ell}}
\end{gather*}
For $i,j \geq \ell$ and $b$ odd, the solution $r$ and $s$ can be found by 
the Cramer's rule, as below. The solutions exist because the matrix 
$\begin{pmatrix}2a & b \\ b & 2c\end{pmatrix}$ has determinant $4ac-b^2$, which
is odd and hence invertible over the ring $\bbZ/2^{k-\ell}\bbZ$.
\[
r = \frac{\det\begin{pmatrix}2^{i-\ell}d & s \\ 2^{j-\ell}e & 2c\end{pmatrix}}
{\det\begin{pmatrix}2a & b \\ b & 2c\end{pmatrix}} \pmod{2^{k-\ell}} ~~
s = \frac{\det\begin{pmatrix}2a & 2^{i-\ell}d \\ b & 2^{j-\ell}e\end{pmatrix}}
{\det\begin{pmatrix}2a & b \\ b & 2c\end{pmatrix}}  \pmod{2^{k-\ell}}
\]
\end{enumerate}

This completes the description of all the transformations we are going
to use, albeit for $n$-dimensional $\MQ$ they will be a bit technical.
The full proof for the case of odd prime follows.

Our proof will be a reduction of the problem of diagonalization from
$n$ dimensions to $(n-1)$-dimensions, for the odd primes $p$. We now
describe the reduction.

Given the matrix $\MQ^n$, let $M$ be the corresponding $(\zpkz)$-module
with basis $\MB=[\Vb_1,\cdots,\Vb_n]$ i.e., $\MQ=\MB'\MB$. We first find 
a matrix entry with the smallest $p$-order, say $\MQ_{i^*j^*}$. The 
reduction has two cases: (i) there is a diagonal entry in $\MQ$ with
the smallest $p$-order, and (ii) the smallest $p$-order occurs on an
off-diagonal entry.

We handle case (i) first. Suppose it is possible to pick $\MQ_{ii}$
as the entry with the smallest $p$-order. Our first transformation
$\MU_1 \in \sln(\zpkz)$ is the one which makes the following 
transformation i.e., swaps $\Vb_1$ and $\Vb_i$.
\begin{align}\label{BlockDiagonal:U1}
[\Vb_1,\cdots,\Vb_n] \underset{\MU_1, p^k}{\to} [\Vb_i,\Vb_2,\cdots,
\Vb_{i-1},\Vb_1,\Vb_{i+1},\cdots, \Vb_n]
\end{align}

Let us call the new set of elements $\MB_1=[\Vv_1,\cdots,\Vv_n]$ and
the new quadratic form $\MQ_1=\MB_1'\MB_1 \bmod{p^k}$. Then,
$\Vv_1'\Vv_1$ has the smallest $p$-order in $\MQ_1$ and
$\MU_1'\MQ\MU_1\equiv \MQ_1 \bmod{p^k}$. The next transformation
$\MU_2 \in \sln(\zpkz)$ is as follows. 
\begin{equation}\label{BlockDiagonal:U2}
\Vw_i = \left\{ \begin{array}{ll}
\Vv_1 & \text{if $i=1$}\\
\Vv_i - \frac{\Vv_1'\Vv_i}{p^{\ordp((\MQ_1)_{11})}} \cdot 
\left(\frac{1}{\copp((\MQ_1)_{11})} \bmod{
p^k}\right) \cdot \Vv_1 & \text{otherwise\,.}
\end{array}\right.
\end{equation}
By assumption, $(\MQ_1)_{11}$ is the matrix entry with 
the smallest $p$-order and so $p^{\ordp((\MQ_1)_{11})}$ divides 
$\Vv_{1}'\Vv_i$. Furthermore, $\copp((\MQ_1)_{11})$ is invertible 
modulo $p^k$. Thus, the transformation in Equation 
\ref{BlockDiagonal:U2} is well defined. Also note that it is
a basis transformation, which maps one basis of $\MB_1=[\Vv_1,\cdots,\Vv_n]$
to another basis $\MB_2=[\Vw_1,\cdots,\Vw_n]$. Thus, the 
corresponding basis transformation $\MU_2$ is a 
unimodular matrix over integers, and so $\MU_2\in\sln(\zpkz)$.
Let $\MQ_2=\MU_2'\MQ_1\MU_2 \bmod{p^k}$. Then, we show that the
non-diagonal entries in the entire first row and first column of
$\MQ_2$ are~0. 
\begin{align*}
(\MQ_2&)_{1i(\neq 1)}=(\MQ_2)_{{i1}}=\Vw_1'\Vw_i \bmod{p^k}\\
	&\overset{(\ref{BlockDiagonal:U2})}{\equiv} 
	\Vv_1'\Vv_i - \frac{\Vv_1'\Vv_i}{p^{\ordp((\MQ_1)_{11})}} \cdot 
	\left(\frac{1}{\copp((\MQ_1)_{11})} \bmod{p^k}\right) 
	\cdot \Vv_1'\Vv_1 \\
	&\equiv\Vv_1'\Vv_i - \frac{\Vv_1'\Vv_i}{p^{\ordp((\MQ_1)_{11})}} \cdot 
	\left(\frac{1}{\copp((\MQ_1)_{11})} \bmod{p^k}\right) 
	\cdot p^{\ordp((\MQ_1)_{11})}\copp((\MQ_1)_{11}) \\
	&\equiv 0 \bmod{p^k} 
\end{align*}

Thus, we have reduced the problem to $(n-1)$-dimensions.
We now recursively call this algorithm with the quadratic form
$\MS=[\Vw_2,\cdots,\Vw_{n}]'[\Vw_2,\cdots,\Vw_{n}] \bmod{p^k}$
and let $\MV \in \SL_{n-1}(\zpkz)$ be the output of
the recursion. Then, $\MV'\MS\MV \bmod{p^k}$ is a diagonal
matrix. Also, by consruction $\MQ_2=\diag((\MQ_2)_{11},\MS)$.
Let $\MU_3=1\oplus\MV$, and $\MU=\MU_1\MU_2\MU_3$, then,
by construction, $\MU'\MQ\MU \bmod{p^k}$ is a diagonal
matrix; as follows.
\begin{gather*}
\MU'\MQ\MU \equiv \MU_3'\MU_2'\MU_1'\MQ\MU_1\MU_2\MU_3\equiv
\MU_3'\MQ_2\MU_3\equiv(1\oplus\MV)'\diag((\MQ_2)_{11})(1\oplus\MV)\\
\equiv \diag((\MQ_2)_{11},\MV'\MS\MV) \bmod{p^k}
\end{gather*}

Otherwise, we are in case (ii) i.e., the entry with smallest 
$p$-order in $\MQ$ is an off diagonal entry, say $\MQ_{i^*j^*},
i^*\neq j^*$. Then, we make the following basis transformation
from $[\Vb_1,\cdots,\Vb_n]$ to $[\Vv_1,\cdots,\Vv_n]$ as follows.
\begin{equation}\label{BlockDiagonal:U0}
\Vv_i = \left\{ \begin{array}{ll}
\Vb_{i^*}+\Vb_{j^*} & \text{if $i=i^*$}\\
\Vb_i & \text{otherwise\,.}
\end{array}\right.
\end{equation}
The transformation matrix $\MU_0$ is from $\sln(\zpkz)$.
Recall, 
$\ordp(\MQ_{i^*j^*}) < \ordp(\MQ_{i^*i^*}), \ordp(\MQ_{j^*j^*})$, and so
$\ordp(\Vv_{i^*}'\Vv_{i^*})=\ordp(\Vb_{i^*}'\Vb_{j^*})$. Furthermore, 
$\ordp(\Vv_i'\Vv_j)\geq \ordp(\Vb_{i^*}'\Vb_{j^*})$, and so the minimum 
$p$-order does not change after the transformation in Equation 
(\ref{BlockDiagonal:U0}). This transformation reduces the problem to the 
case when the matrix entry with minimum $p$-order appears on the 
diagonal. This completes the proof of the theorem for odd primes
$p$.

For $p=2$, exactly the same set of transformations works, unless the 
situation in item (3) arises. In such a case, we use the type II block
to eliminate all other entries on the same rows/columns as the type II
block. Thus, in this case, the problem reduces to one in dimension $(n-2)$.

The algorithm uses $n$ iterations, reducing the dimension by~1 in each 
iteration. In each iteration, we have to find the minimum $p$-order, costing
$O(n^2\log k)$ ring operations and then~3 matrix multiplications costing $O(n^3)$ operations
over $\zpkz$. Thus, the overall complexity is $O(n^4+n^3\log k)$ or 
$O(n^4\log k)$ ring operations.
\end{proof}

\section{Missing Proofs}\label{sec:Proofs}

\begin{proof}(proof of Lemma \ref{lem:Square})
We split the proof in two parts: for odd primes $p$ and for the prime~2.
\begin{description}
\item [Odd Prime.]
If $0 \neq t \in \zpkz$ then $\ordp(t)<k$. If
$t$ is a square modulo $p^k$ then there exists a $x$
such that $x^2\equiv t \pmod{p^k}$. Thus, there exists $a \in \bbZ$
such that $x^2=t+ap^k$. But then, $2\ordp(x)=\ordp(t+ap^k)=\ordp(t)$.
This implies that $\ordp(t)$ is even and $\ordp(x)=\ordp(t)/2$.
Substituting this into $x^2=t+ap^k$ and dividing the entire equation
by $p^{\ordp(t)}$ yields that $\copp(t)$ is a
quadratic residue modulo $p$; as follows.
\[
\copp(x)^2 = \copp(t)+ap^{k-\ordp(t)}\equiv \copp(t) \pmod{p}
\]

Conversely, if $\copp(t)$ is a quadratic residue modulo $p$ 
then there exists a $u \in \zpkz$ such that $u^2\equiv \copp(t) \pmod{p^k}$,
by Lemma \ref{lem:HenselP}. If $\ordp(t)$ is even then 
$x=p^{\ordp(t)/2}u$ is a solution to the equation 
$x^2\equiv t \pmod{p^k}$.

\item [Prime $2$.] 
If $0 \neq t \in \ztkz$ then $\ordt(t)<k$. If $t$ is a square modulo 
$2^k$ then there exists an integer $x$ such that $x^2\equiv t \modtk$.
Thus, there exists an integer $a$ such that $x^2=t+a2^k$. But then,
$2\ordt(x)=\ordt(t+a2^k)=\ordt(t)$. This implies that $\ordt(t)$ is 
even and $\ordt(x)=\ordt(t)/2$. Substituting this into the equation
$x^2=t+a2^k$ and dividing the entire equation by $2^{\ordt(t)}$ 
yields,
\[
\copt(x)^2 = \copt(t) + a2^{k-\ordt(t)} \qquad \copt(t) < 2^{k-\ordt(t)}\;.
\]
But $\copt(x)$ is odd and hence
$\copt(x)^2 \equiv 1 \pmod{8}$. If $k-\ordt(t)>2$, then 
$\copt(t)\equiv 1 \pmod{8}$. Otherwise, if $k-\ordt(t)\leq 2$ 
then $\copt(t) < 2^{k-\ordp(t)}$ implies that $\copt(t)=1$.

Conversely, if $\copt(t)\equiv 1 \pmod{8}$ then there exists a 
$u \in \ztkz$ such that $u^2\equiv \copt(t) \modtk$, by Lemma
\ref{lem:Hensel2}. If $\ordt(t)$ is even then $x=2^{\ordt(t)/2}u$
is a solution to the equation $x^2\equiv t\modtk$.
\end{description}
\end{proof}

\begin{proof}(proof of Lemma \ref{lem:SizeModP})
If $[+,-]$ denotes the size of the set of tuples $(x,x+a)$ such that 
$\legendre{x}{p}=1$ and $\legendre{x+a}{p}=-1$. Then, the following 
equality yields one relation on these sets.
\begin{align}
\sum_{x=1}^{p-1} \legendre{x(x+a)}{p} = \sum_{x=1}^{p-1} \legendre{x^2(1+ax^{-1})}{p}
&=\sum_{y=1}^{p-1}\legendre{1+y}{p}=\sum_{y=1}^{p-1} \legendre{y}{p}-\legendre{1}{p}
=-1\nonumber\\
[+,+]+[-,-]-[+,-]-[-,+] &= -1 \label{eq1:ordaeqordt}
\end{align}
If we consider all tuples $(x,x+a)$ for $x \in \{0,\cdots,p-1\}$ i.e., 
$(0,a), (1,a+1), \cdots, (p-1,a-1)$ then we observe that there are two 
tuples $(0,a)$ and $(p-a,0)$ with one Legendre symbol~0. We denote these
sets by $[0,+], [0,-], [+,0], [-,0]$. The size of these sets 
only depend on $\legendre{a}{p}$ and $\legendre{p-a}{p}=\legendre{-a}{p}$.
\begin{align}\label{eq2:ordaeqordt}
[0,+] &= \frac{1+\legendre{a}{p}}{2} \\
[0,-] &= \frac{1-\legendre{a}{p}}{2} \\
[+,0] &= \frac{1+\legendre{-a}{p}}{2} \\
[-,0] &=\frac{1-\legendre{-a}{p}}{2}
\end{align}
There are exactly $p$ tuples in total and exactly $2$ of them have 
one symbol $0$. This gives us the following relation.
\begin{align}\label{eq3:ordaeqordt}
[+,+]+[+,-]+[-,+]+[-,-] &= p-2 
\end{align}
Furthermore, if $p$ is an odd prime then there are exactly $\frac{p-1}{2}$
elements with Legendre symbol $+1$.
\begin{align}\label{eq4:ordaeqordt}
[+,+]+[+,-]+[+,0] &= \frac{p-1}{2} 
\end{align}
The bijection $(x,x+a)\to (-(x+a),-x)$ modulo $p$ maps the set $[+,+]$ to 
$[-,-]$ if $\legendre{-1}{p}=-1$ and the set $[+,-]$ to $[-,+]$ if 
$\legendre{-1}{p}=1$. This gives us the following relation.
\begin{align}\label{eq5:ordaeqordt}
[+,+] &= [-,-] &\textrm{if $\legendre{-1}{p}=-1$} \\
[+,-] &= [-,+] &\textrm{otherwise.} \label{eq6:ordaeqordt}
\end{align}
In each case i.e., $\legendre{-1}{p}=-1$ or $+1$, we have four Equations
i.e., (\ref{eq1:ordaeqordt}, \ref{eq3:ordaeqordt}, \ref{eq4:ordaeqordt}
,\ref{eq5:ordaeqordt}) or 
(\ref{eq1:ordaeqordt}, \ref{eq3:ordaeqordt}, \ref{eq4:ordaeqordt}
,\ref{eq6:ordaeqordt}) in four variables. These when solved, result in 
the following table of values, from which Equation (\ref{eq:ordaeqordt})
can be derived.

\begin{minipage}{\linewidth}
\centering
\captionof{table}{Size of split sets modulo $p$}\label{table}
\begin{tabular}{| l | l | c | r |}
\hline
$\legendre{a}{p}$ & $[\legendre{x}{p}, \legendre{x+a}{p}]$ & 
$\legendre{-1}{p}=1$ & $\legendre{-1}{p}=-1$\\
\hline
$+1$ & $[+1,+1]$ & $(p-1)/4-1$ & $(p-3)/4$\\
& $[-1,-1]$ & $(p-1)/4$ & $(p-3)/4$\\
& $[+1,-1]$ & $(p-1)/4$ & $(p+1)/4$\\
& $[-1,+1]$ & $(p-1)/4$ & $(p-3)/4$\\
\hline
$-1$ & $[+1,+1]$ & $(p-1)/4$ & $(p-3)/4$\\
&$[-1,-1]$ & $(p-1)/4-1$ &  $(p-3)/4$\\
&$[+1,-1]$ & $(p-1)/4$ & $(p-3)/4$\\
&$[-1,+1]$ & $(p-1)/4$ & $(p+1)/4$\\
\hline
\end{tabular}\par
\bigskip
\end{minipage}
\end{proof}

\end{document}